\renewcommand{\leq}{\leqslant}
\renewcommand{\ge}{\geqslant}
\renewcommand{\le}{\leqslant}
\newcommand{\nfrac}{\nicefrac}
\newcommand{\ignore}[1]{}
\newcommand{\YES}{\textsc{Yes}\xspace}
\newcommand{\NO}{\textsc{No}\xspace}
\renewcommand{\AA}{{\mathcal A}}
\newcommand{\CC}{{\mathcal C}}
\newcommand{\DD}{{\mathcal D}}
\newcommand{\LL}{{\mathcal L}}
\newcommand{\PP}{{\mathcal P}}
\newcommand{\QQ}{{\mathcal Q}}
\renewcommand{\SS}{{\mathcal S}}
\newcommand{\TT}{{\mathcal T}}
\newcommand{\UU}{{\mathcal U}}
\newcommand{\VV}{{\mathcal V}}
\newcommand{\WW}{{\mathcal W}}
\newcommand{\XX}{{\mathcal X}}
\newcommand{\ppp}{{\mathfrak p}}
\newcommand{\vvv}{{\mathfrak v}}
\newcommand{\el}{\ensuremath{\ell}\xspace}
\newcommand{\prm}{\prime}
\newcommand{\NP}{\ensuremath{\mathsf{NP}}\xspace}
\newcommand{\coNPH}{\ensuremath{\mathsf{co}}-\ensuremath{\mathsf{NP}}-hard\xspace}
\newcommand{\NPH}{\ensuremath{\mathsf{NP}}-hard\xspace}
\newcommand{\NPC}{\ensuremath{\mathsf{NP}}-complete\xspace}
\newcommand{\Pb}{\ensuremath{\mathsf{P}}\xspace}
\newcommand{\WM}{\textsc{Weak Manipulation}\xspace}
\newcommand{\SM}{\textsc{Strong Manipulation}\xspace}
\newcommand{\OM}{\textsc{Opportunistic Manipulation}\xspace}
\newcommand{\CM}{\textsc{Coalitional Manipulation}\xspace}
\newcommand{\PW}{\textsc{Possible Winner}\xspace}
\newcommand{\NW}{\textsc{Necessary Winner}\xspace}
\newtheorem{proposition}{\bf Proposition}
\newtheorem{observation}{\bf Observation}
\newtheorem{theorem}{\bf Theorem}
\newtheorem{lemma}{\bf Lemma}
\newtheorem{definition}{\bf Definition}
\crefname{observation}{Observation}{Observation}
\crefname{theorem}{Theorem}{Theorem}
\crefname{lemma}{Lemma}{Lemma}
\crefname{corollary}{Corollary}{Corollary}
\crefname{proposition}{Proposition}{Proposition}
\crefname{definition}{Definition}{Definition}
\crefname{claim}{Claim}{Claim}
\crefname{reductionrule}{Reduction rule}{Reduction rule}
\crefname{table}{Table}{Table}
\title{\bf Complexity of Manipulation with Partial Information in Voting}
\author{{Palash Dey}\\ 
Tata Institute of Fundamental Research, Mumbai\\
palash.dey@tifr.res.in
\and
{Neeldhara Misra}\\
Indian Institute of Technology, Gandhinagar\\
mail@neeldhara.com
\and
{Y. Narahari}\\
Indian Institute of Science, Bangalore\\
hari@csa.iisc.ernet.in
}
\begin{document}

\maketitle

\begin{abstract}
The Coalitional Manipulation problem has been
studied extensively in the literature for many voting rules. 
However, most studies have focused on the complete information setting, wherein
the manipulators know the votes of the non-manipulators. While this assumption
is reasonable for purposes of showing intractability, it is unrealistic for algorithmic
considerations. In most real-world scenarios, it is impractical to assume that the manipulators to have accurate knowledge of all the other votes. In this work, we investigate
manipulation with incomplete information. In our framework,
the manipulators know a partial order for each voter that is consistent with
the true preference of that voter. In this setting, we formulate three natural
computational notions of manipulation, namely weak, opportunistic, and strong manipulation. 
We say that an extension of a partial order is
\textit{viable} if there exists a manipulative vote for that extension. We propose the following notions of manipulation when manipulators have incomplete information about the votes of other voters.

\begin{enumerate}
\item \textsc{Weak Manipulation}: the manipulators seek to
vote in a way that makes their preferred candidate win in \textit{at least one
extension} of the partial votes of the non-manipulators.
\item \textsc{Opportunistic Manipulation}: the manipulators seek to
vote in a way that makes their preferred candidate win\textit{ in every
viable extension} of the partial votes of the non-manipulators.
\item \textsc{Strong Manipulation}: the manipulators seek to
vote in a way that makes their preferred candidate win \textit{in every
extension} of the partial votes of the non-manipulators.
\end{enumerate}

We consider several scenarios for which the traditional manipulation problems are easy (for instance, Borda with a single manipulator). For many of them, the corresponding manipulative questions that we propose turn out to be computationally intractable. Our hardness results often hold even when very little information is missing, or in other words, even when the instances are very close to the complete information setting. Our results show that the impact of paucity of information on the computational complexity of manipulation crucially depends on the notion of manipulation under consideration. Our overall conclusion is that computational hardness continues to be a valid obstruction to manipulation, in the context of a more realistic model.
\end{abstract}

{\it Keywords: voting, manipulation, incomplete information, algorithm, computational complexity}

\section{Introduction}

In many real life and AI related applications, agents often need to agree upon a common decision although they have different preferences over the available alternatives. A natural tool used in these situations is voting. Some classic examples of the use of voting rules in the context of multiagent systems include Clarke tax~\cite{ephrati1991clarke}, collaborative filtering~\cite{PennockHG00}, and similarity search~\cite{Fagin}, etc.
In a typical voting scenario, we have a set of candidates and a set of voters 
reporting their rankings of the candidates called their preferences or votes. A voting rule 
selects one candidate as the winner once all voters provide their votes. A set of votes over a set of candidates along with a voting rule is called an election.
A central issue in voting is the possibility of \emph{manipulation}. For many voting rules, it turns out that even a single vote, if cast differently, can alter the outcome. In particular, a voter manipulates an election if, by misrepresenting her preference, she obtains an outcome that she prefers over the ``honest'' outcome. In a cornerstone impossibility result, Gibbard and Satterthwaite~\cite{gibbard1973manipulation,satterthwaite1975strategy} show that every unanimous and non-dictatorial voting rule with three candidates or more is manipulable. We refer to~\cite{brandt2015handbook} for an excellent introduction to various strategic issues in computational social choice theory.

Considering that voting rules are indeed susceptible to manipulation, it is natural to seek ways by which elections can be protected from manipulations. The works of Bartholdi et al.~\cite{bartholdi1989computational,bartholdi1991single} approach the problem from the perspective of computational intractability. They exploit the possibility that voting rules, despite being vulnerable to manipulation in theory, may be hard to manipulate in practice. Indeed, a manipulator is faced with the following decision problem: given a collection of votes $\mathcal{P}$ and a distinguished candidate $c$, does there exist a vote $v$ that, when tallied with $\mathcal{P}$, makes $c$ win for a (fixed) voting rule $r$? The manipulation problem has subsequently been generalized to the problem of \textsc{Coalitional manipulation} by Conitzer et al.~\cite{conitzer2007elections}, where one or more manipulators collude together and try to make a distinguished candidate win the election. The manipulation problem, fortunately, turns out to be \NP{}-hard in several settings. This established the success of the approach of demonstrating a computational barrier to manipulation.

However, despite having set out to demonstrate the hardness of manipulation, the initial results in~\cite{bartholdi1989computational} were to the contrary, indicating that many voting rules are in fact easy to manipulate. Moreover, even with multiple manipulators involved, popular voting rules like plurality, veto, $k$-approval, Bucklin, and Fallback continue to be easy to manipulate~\cite{xia2009complexity}. While we know that the computational intractability may not provide a strong barrier \cite{procaccia2006junta,ProcacciaR07,friedgut2008elections,xia2008generalized,xia2008sufficient,faliszewski2010using,walsh2010empirical,walsh2011hard,isaksson2012geometry,dey2015computational,DeyMN15,journalsDeyMN16,DeyMN15a,dey2014asymptotic,dey2015asymptoticjournal} even for rules for which the coalitional manipulation problem turns out to be \NP{}-hard, in all other cases the possibility of manipulation is a much more serious concern. 

\subsection{Motivation and Problem Formulation}

In our work, we propose to extend the argument of computational intractability to address the cases where the approach appears to fail. We note that most incarnations of the manipulation problem studied so far are in the complete information setting, where the manipulators have complete knowledge of the preferences of the truthful voters. While these assumptions are indeed the best possible for the computationally negative results, we note that they are not reflective of typical real-world scenarios. Indeed, concerns regarding privacy of information, and in other cases, the sheer volume of information, would be significant hurdles for manipulators to obtain complete information. Motivated by this, we consider the manipulation problem in a natural \emph{partial information} setting. In particular, we model the partial information of the manipulators about the votes of the non-manipulators as partial orders over the set of candidates. A partial order over the set of candidates will be called a partial vote. Our results show that several of the voting rules that are easy to manipulate in the complete information setting become intractable when the manipulators know only partial votes. Indeed, for many voting rules, we show that even if the ordering of a small number of pairs of candidates is missing from the profile, manipulation becomes an~intractable problem. Our results therefore strengthen the view that manipulation may not be practical if we limit the information the manipulators have at their disposal about the votes of other voters \cite{conitzer2011dominating}.

We introduce three new computational problems that, in a natural way, extend the question of manipulation to the partial information setting. In these problems, the input is a set of partial votes $\mathcal{P}$ corresponding to the votes of the non-manipulators, a non-empty set of manipulators $M$, and a preferred candidate $c$. The task in the \textsc{Weak Manipulation (WM)} problem is to determine if there is a way to cast the manipulators' votes such that $c$ wins the election for at least one extension of the partial votes in $\mathcal{P}$. On the other hand, in the \textsc{Strong Manipulation (SM)} problem, we would like to know if there is a way of casting the manipulators' votes such that $c$ wins the election in \emph{every extension} of the partial votes in $\mathcal{P}$. 

We also introduce the problem of \textsc{Opportunistic Manipulation (OM)}, which is an ``intermediate'' notion of manipulation. Let us call an extension of a partial profile \textit{viable} if it is possible for the manipulators to vote in such a way that the manipulators' desired candidate wins in that extension. In other words, a viable extension is a \YES-instance of the standard \CM problem. We have an opportunistic manipulation when it is possible for the manipulators to cast a vote which makes $c$ win the election in \emph{ all} viable extensions. Note that any \YES-instance of \SM is also an \YES-instance of \OM, but this may not be true in the reverse direction. As a particularly extreme example, consider a partial profile where there are no viable extensions: this would be a \NO-instance for \SM, but a (vacuous) \YES-instance of \OM. The \OM problem allows us to explore a more relaxed notion of manipulation: one where the manipulators are obliged to be successful only in extensions where it is possible to be successful. Note that the goal with \SM is to be successful in all extensions, and therefore the only interesting instances are the ones where all extensions are viable. 

It is easy to see that \YES{} instance of \SM is also a \YES{} instance of \OM and \WM. Beyond this, we remark that all the three problems are questions with different goals, and neither of them render the other redundant. We refer the reader to~\Cref{Fig:EG} for a simple example distinguishing these scenarios.

All the problems above generalize \CM, and hence any computational intractability result for \CM immediately yields a corresponding intractability result for \WM, \SM, and \OM under the same setting. For example, it is known that the \CM problem is intractable for the maximin voting rule when we have at least two manipulators~\cite{xia2009complexity}. Hence, the \WM, \SM, and \OM problems are intractable for the maximin voting rule when we have at least two manipulators.

\begin{figure}[t]
\centering
\includegraphics[scale=0.3]{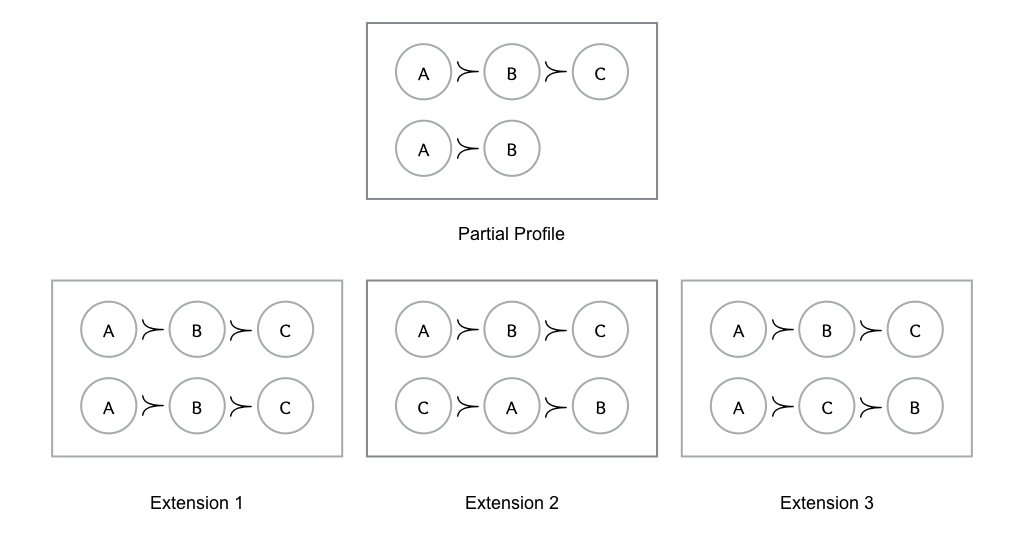}
\caption{An example of a partial profile. Consider the plurality voting rule with one manipulator. If the favorite candidate is A, then the manipulator simply has to place A on the top of his vote to make A win in \textit{any} extension. If the favorite candidate is B, there is \textit{no vote} that makes B win in any extension. Finally, if the favorite candidate is C, then with a vote that places C on top, the manipulator can make C win in the only viable extension (Extension 2).}
\label{Fig:EG}	
\end{figure}

\subsection{Related Work}

A notion of manipulation under partial information has been considered by Conitzer et al.~\cite{conitzer2011dominating}. They focus on whether or not there exists a dominating manipulation and show that this problem is \NP{}-hard for many common voting rules. Given some partial votes, a dominating manipulation is a non-truthful vote that the manipulator can cast which makes the winner at least as preferable (and sometimes more preferable) as the winner when the manipulator votes truthfully. The dominating manipulation problem and the \WM, \OM, and \SM problems do not seem to have any apparent complexity-theoretic connection. For example, the dominating manipulation problem is \NP{}-hard for all the common voting rules except plurality and veto, whereas, the \SM problem is easy for most of the cases (see \Cref{table:partial_summary}). However, the results in \cite{conitzer2011dominating} establish the fact that it is indeed possible to make manipulation intractable by restricting the amount of information the manipulators possess about the votes of the other voters. Elkind and Erd{\'e}lyi~\cite{elkind2012manipulation} study manipulation under voting rule uncertainty. However, in our work, the voting rule is fixed and known to the manipulators.

Two closely related problems that have been extensively studied in the context of incomplete votes are \textsc{Possible Winner} and \textsc{Necessary Winner}~\cite{konczak2005voting}. In the \PW problem, we are given a set of partial votes $\mathcal{P}$ and a candidate $c$, and the question is whether there exists an extension of $\mathcal{P}$ where $c$ wins, while in the \NW problem, the question is whether $c$ is a winner in every extension of $\mathcal{P}$. Following the work in~\cite{konczak2005voting}, a number of  special cases and variants of the \PW problem have been studied in the literature~\cite{chevaleyre2010possible,bachrach2010probabilistic,baumeister2011computational,baumeister2012possible,gaspers2014possible,xia2008determining,ding2013voting,NarodytskaW14,BaumeisterFLR12,MenonL15}. The flavor of the \WM problem is clearly similar to \PW. However, we emphasize that there are subtle distinctions between the two problems. A more elaborate comparison is made in the next section. 

\subsection{Our Contribution}

Our primary contribution in this work is to propose and study three natural and realistic generalizations of the computational problem of manipulation in the incomplete information setting. We summarize the complexity results in this work in \Cref{table:partial_summary}. Our results provide the following interesting insights on the impact of lack of information on the computational difficulty of manipulation. We note that the number of undetermined pairs of candidates per vote are small constants in all our hardness results.

\begin{itemize}
 \item We observe that the computational problem of manipulation for the plurality and veto voting rules remains polynomial time solvable even with lack of information, irrespective of the notion of manipulation under consideration [\Cref{pv_easy,thm:vetoOM,sm_k_easy,obs:plurality_fallback}]. We note that the plurality and veto voting rule also remain vulnerable under the notion of dominating manipulation~\cite{conitzer2011dominating}.
 
 \item The impact of absence of information on the computational complexity of manipulation is more dynamic for the $k$-approval, $k$-veto, Bucklin, Borda, and maximin voting rules. Only the \WM and \OM problems are computationally intractable for the $k$-approval [\Cref{wm_k_hard,thm:kappOM}], $k$-veto [\Cref{k_veto_wm,thm:kvetoOM}], Bucklin [\Cref{wm_bucklin_hard,thm:bucklinOM}], Borda [\Cref{cm_hard_one,thm:bordaOM}], and maximin [\Cref{cm_hard_one,thm:maximinOM}] voting rules, whereas the \SM problem remains computationally tractable [\Cref{sm_k_easy,sm_sr_easy,sm_bucklin_easy,sm_maximin_easy}].
 
 \item \Cref{table:partial_summary} shows an interesting behavior of the Fallback voting rule. The Fallback voting rule is the only voting rule among the voting rules we study here for which the \WM problem is NP-hard [\Cref{wm_bucklin_hard}] but both the \OM and \SM problems are polynomial time solvable [\Cref{sm_bucklin_easy,obs:plurality_fallback}]. This is because the \OM problem can be solved for the Fallback voting rule by simply making manipulators vote for their desired candidate.
 
 \item Our results show that absence of information makes all the three notions of manipulations intractable for the Copeland$^\alpha$ voting rule for every rational $\alpha\in[0,1]\setminus\{0.5\}$ for the \WM problem [\Cref{cm_hard_one}] and for every $\alpha\in[0,1]$ for the \OM and \SM problems [\Cref{sm_copeland_hard,thm:copelandOM}].
\end{itemize}

Our results (see \Cref{table:partial_summary}) show that whether lack of information makes the manipulation problems harder, crucially depends on the notion of manipulation applicable to the situation under consideration. All the three notions of manipulations are, in our view, natural extension of manipulation to the incomplete information setting and tries to capture different behaviors of manipulators. For example, the \WM problem maybe applicable to an optimistic manipulator whereas for an pessimistic manipulator, the \SM problem may make more sense.


\begin{table}
\centering
\resizebox{\linewidth}{!}{
\renewcommand*{\arraystretch}{1.2}
\begin{tabular}{|c|c|c|c|c|c|c|}
\hline
                  & WM, $\ell = 1$                                                          & WM                                                         & OM, $\ell = 1$                            & OM                           & SM, $\ell = 1$                              & SM                                                                                                          \\ \hline
Plurality         & \multicolumn{2}{c}{}                                                                                        & \multicolumn{2}{c}{\cellcolor[HTML]{67FD9A}}                            & \multicolumn{2}{c|}{\cellcolor[HTML]{67FD9A}}                                                                                                             \\ \cline{1-1}
Veto              & \multicolumn{2}{c}{\multirow{-2}{*}{P }}                                                     & \multicolumn{2}{c}{\multirow{-2}{*}{\cellcolor[HTML]{67FD9A}P}}         & \multicolumn{2}{c|}{\cellcolor[HTML]{67FD9A}}                                                                                                             \\ \cline{1-1}

$k$-Approval      & \multicolumn{2}{c}{\cellcolor[HTML]{FE996B}}                                                                                        & \multicolumn{2}{c}{\cellcolor[HTML]{FFCB2F}}                            & \multicolumn{2}{c|}{\cellcolor[HTML]{67FD9A}}                                                                                                             \\ \cline{1-1}
$k$-Veto          & \multicolumn{2}{c}{\cellcolor[HTML]{FE996B}}                                                                                        & \multicolumn{2}{c}{\cellcolor[HTML]{FFCB2F}}                            & \multicolumn{2}{c|}{\cellcolor[HTML]{67FD9A}}                                                                                                             \\ \cline{1-1}
Bucklin           & \multicolumn{2}{c}{\cellcolor[HTML]{FE996B}}                                                                                        & \multicolumn{2}{c}{\multirow{-3}{*}{\cellcolor[HTML]{FFCB2F}coNP-hard}} & \multicolumn{2}{c|}{\cellcolor[HTML]{67FD9A}}                                                                                                             \\ \cline{1-1}
Fallback          & \multicolumn{2}{c}{\multirow{-4}{*}{\cellcolor[HTML]{FE996B}NP-complete}}                                                           & \multicolumn{2}{c}{\cellcolor[HTML]{67FD9A}P}                           & \multicolumn{2}{c|}{\multirow{-6}{*}{\cellcolor[HTML]{67FD9A}P}}                                                                                          \\ \cline{1-1}
Borda             & \multicolumn{2}{c}{}                                                                                        & \multicolumn{2}{c}{\cellcolor[HTML]{FFCB2F}}                            & \multicolumn{1}{c}{\cellcolor[HTML]{67FD9A}}                    &                                                                                 \\ \cline{1-1}
maximin           & \multicolumn{2}{c}{}                                                                                        & \multicolumn{2}{c}{\cellcolor[HTML]{FFCB2F}}                            & \multicolumn{1}{c}{\multirow{-2}{*}{\cellcolor[HTML]{67FD9A}P}} &                                                                                     \\ \cline{1-1}
Copeland$^\alpha$ & \multicolumn{2}{c}{\multirow{-3}{*}{
\begin{tabular}[c]{@{}c@{}}NP-complete\\ \end{tabular}}} & \multicolumn{2}{c}{\multirow{-3}{*}{\cellcolor[HTML]{FFCB2F}coNP-hard}} & \multicolumn{1}{c}{\cellcolor[HTML]{FFCB2F}coNP-hard}         & \multirow{-3}{*}{\begin{tabular}[c]{@{}c@{}}NP-hard\\ \end{tabular}} \\ \hline
\end{tabular}}
\caption{Summary of Results ($\el$ denotes the number of manipulators). The results in white follow immediately from the literature (\Cref{pw_hard,cm_hard,cm_hard_one}). Our results for the Copeland$^\alpha$ voting rule hold for every rational $\alpha\in[0,1]\setminus\{0.5\}$ for the \WM problem and for every $\alpha\in[0,1]$ for the \OM and \SM problems.}
\label{table:partial_summary}
\end{table}

\paragraph{Organization of the paper:} We define the problems and introduce the basic terminology in the next section. We present our hardness results in Section~\ref{sec:hard}. In~\Cref{sec:poly}, we present our polynomially solvable algorithms. Finally we conclude with future directions of research in Section~\ref{sec:con}.

\section{Preliminaries}\label{sec:def}

In this section, we begin by providing the technical definitions and notations that we will need in the subsequent sections. We then formulate the problems that capture our notions of manipulation when the votes are given as partial orders, and finally draw comparisons with related problems that are already studied in the literature of computational social choice theory. 

\subsection{Notations and Definitions}

Let $\mathcal{V}=\{v_1, \dots, v_n\}$ be the set of all \emph{voters} and $\mathcal{C}=\{c_1, \dots, c_m\}$ the set of all \emph{candidates}. If not specified explicitly, $n$ and $m$ denote the total number of voters and the total number of candidates respectively. Each voter $v_i$'s \textit{vote} is a \emph{preference} $\succ_i$ over the candidates which is a linear order over $\mathcal{C}$. For example, for two candidates $a$ and $b$, $a \succ_i b$ means that the voter $v_i$ prefers $a$ to $b$. We denote the set of all linear orders over $\mathcal{C}$ by $\mathcal{L(C)}$. Hence, $\mathcal{L(C)}^n$ denotes the set of all $n$-voters' preference profile $(\succ_1, \dots, \succ_n)$. 
A map $r:\cup_{n,|\mathcal{C}|\in\mathbb{N}^+}\mathcal{L(C)}^n\longrightarrow 2^\mathcal{C}\setminus\{\emptyset\}$
is called a \emph{voting rule}. For some preference profile $\succ\,\in \mathcal{L(C)}^n$, if $r(\succ)=\{w\}$, then we say $w$ wins uniquely and we write $r(\succ)=w$. From here on, whenever we say some candidate $w$ wins, we mean that the candidate $w$ wins uniquely. For simplicity, we restrict ourselves to the unique winner case in this paper. All our proofs can be easily extended in the co-winner case. 

A more general setting is an {\em election\/} where the votes are only \emph{partial orders} over candidates. A \emph{partial order} is a relation that is \emph{reflexive, antisymmetric}, and \emph{transitive}. A partial vote can be extended to possibly more than one linear vote depending on how we fix the order of the unspecified pairs of candidates. For example, in an election with the set of candidates $\mathcal{C} = \{a, b, c\}$, a valid partial vote can be $a \succ b$. This partial vote can be extended to three linear votes namely, $a \succ b \succ c$, $a \succ c \succ b$, $c \succ a \succ b$. In this paper, we often define a partial vote like $\succ \setminus A$, where $\succ\,\in\mathcal{L(C)}$ and $A\subset \mathcal{C}\times\mathcal{C}$, by which we mean the partial vote obtained by removing the order among the pair of candidates in $A$ from $\succ$. Also, whenever we do not specify the order among a set of candidates while describing a complete vote, the statement/proof is correct in whichever way we fix the order among them. We now give examples of some common voting rules.

\begin{itemize}
 \item {\bf Positional scoring rules:} An $m$-dimensional vector $\vec{\alpha}=\left(\alpha_1,\alpha_2,\dots,\alpha_m\right)\in\mathbb{R}^m$ 
 with $\alpha_1\ge\alpha_2\ge\dots\ge\alpha_m$ and $\alpha_1>\alpha_m$ naturally defines a
 voting rule -- a candidate gets score $\alpha_i$ from a vote if it is placed at the $i^{th}$ position, and the 
 score of a candidate is the sum of the scores it receives from all the votes. 
 The winners are the candidates with maximum score. Scoring rules remain unchanged if we multiply every $\alpha_i$ by any constant $\lambda>0$ and/or add any constant $\mu$. Hence, we assume without loss of generality that for any score vector $\vec{\alpha}$, there exists a $j$ such that $\alpha_j - \alpha_{j+1}=1$ and $\alpha_k = 0$ for all $k>j$. We call such a $\vec{\alpha}$ a normalized score vector.
 For $\vec{\alpha}=\left(m-1,m-2,\dots,1,0\right)$, we get the \emph{Borda} voting rule. With $\alpha_i=1$ $\forall i\le k$ and $0$ else, 
 the voting rule we get is known as $k$-\emph{approval}. For the $k$-\emph{veto} voting rule, we have  $\alpha_i=0$ $\forall i\le m-k$ and $-1$ else. \emph{Plurality} is $1$-\emph{approval} and \emph{veto} is $1$-\emph{veto}.
 
 \item {\bf Bucklin and simplified Bucklin:} Let \el be the minimum integer such that at least one candidate gets majority within top \el positions of the votes. The winners under the simplified Bucklin voting rule are the candidates having more than $\nfrac{n}{2}$ votes within top \el positions. The winners under the Bucklin voting rule are the candidates appearing within top \el positions of the votes highest number of times.
 
 \item {\bf Fallback and simplified Fallback:} For these voting rules, each voter $v$ ranks a subset $\XX_v\subset\CC$ of candidates and disapproves the rest of the candidates~\cite{brams2009voting}. Now for the Fallback and simplified Fallback voting rules, we apply the Bucklin and simplified Bucklin voting rules respectively to define winners. If there is no integer \el for which at least one candidate gets more than $\nfrac{n}{2}$ votes, both the Fallback and simplified Fallback voting rules output the candidates with most approvals as winners. We assume, for simplicity, that the number of candidates each partial vote approves is known.
 
 \item {\bf Maximin:} For any two candidates $x$ and $y$, let $D(x,y)$ be $N(x,y) - N(y,x)$, where $N(x,y)$ $(\text{respectively }N(y,x))$ is the number of voters who prefer $x$ to $y$ (respectively $y$ to $x$). The election we get by restricting all the votes to $x$ and $y$ only is called the pairwise election between $x$ and $y$. The maximin score of a candidate $x$ is $\min_{y\ne x} D(x,y)$. The winners are the candidates with maximum maximin score.
 
 \item {\bf Copeland$^\alpha$.} The Copeland$^\alpha$ score of a candidate $x$ is $|\{y\ne x:D_{\mathcal{E}}(x,y)>0\}|+\alpha|\{y\ne x:D_{\mathcal{E}}(x,y)=0\}|$, where $\alpha\in [0,1]$. That is, the Copeland$^\alpha$ of a candidate $x$ is the number of other candidates it defeats in pairwise election plus $\alpha$ times the number of other candidates it ties with in pairwise elections. The winners are the candidates with the maximum Copeland$^\alpha$ score. 
\end{itemize}

\subsection{Problem Definitions}\label{sec:probdef}

We now formally define the three problems that we consider in this work, namely \textsc{Weak Manipulation},  \textsc{Opportunistic Manipulation}, and \textsc{Strong Manipulation}. Let $r$ be a fixed voting rule. We first introduce the \textsc{Weak Manipulation} problem.

\begin{definition}\textbf{\textsc{$r$-Weak Manipulation}}\\
 Given a set of partial votes $\mathcal{P}$ over a set of candidates $\mathcal{C}$, a positive integer $\el~(>0)$ denoting the number of manipulators, and a candidate $c$, do there exist votes $\succ_1, \ldots, \succ_\el\, \in \mathcal{L(\mathcal{C})}$ such that there exists an extension $\succ\, \in \mathcal{\mathcal{L(\mathcal{C})}^{|\mathcal{P}|}}$ of $\mathcal{P}$ with $r(\succ, \succ_1, \ldots, \succ_\el) = c$?
\end{definition}

To define the \textsc{Opportunistic Manipulation} problem, we first introduce the notion of an $(r,c)$-opportunistic voting profile, where $r$ is a voting rule and $c$ is any particular candidate.

\begin{definition}\textbf{$(r,c)$-Opportunistic Voting Profile}\\
 Let $\el$ be the number of manipulators and $\PP$ a set of partial votes. An $\el$-voter profile $(\succ_i)_{i\in[\el]}\in\LL(\CC)^\el$ is called an $(r,c)$-opportunistic voting profile if for each extension $\overline{\PP}$ of $\PP$ for which there exists an $\el$-vote profile $(\succ^\prime_i)_{i\in[\el]}\in\LL(\CC)^\el$ with $r\left(\overline{\PP}\cup\left(\succ^\prime_i\right)_{i\in[\el]}\right) = c$, we have $r\left(\overline{\PP}\cup\left(\succ_i\right)_{i\in[\el]}\right) = c$.
\end{definition}

In other words, an $\ell$-vote profile is $(r,c)$-opportunistic with respect to a partial profile if, when put together with the truthful votes of any extension,  $c$ wins if the extension is viable to begin with. We are now ready to define the \textsc{Opportunistic Manipulation} problem.

\begin{definition}\textbf{\textsc{$r$-Opportunistic Manipulation}}\\
 Given a set of partial votes $\mathcal{P}$ over a set of candidates $\mathcal{C}$, a positive integer $\el~(>0)$ denoting the number of manipulators, and a candidate $c$, does there exist an $(r,c)$-opportunistic $\el$-vote profile?
\end{definition}

We finally define the \textsc{Strong Manipulation} problem.

\begin{definition}\textbf{\textsc{$r$-Strong Manipulation}}\\
 Given a set of partial votes $\mathcal{P}$ over a set of candidates $\mathcal{C}$, a positive integer $\el~(>0)$ denoting the number of manipulators, and a candidate $c$, do there exist votes $(\succ_i)_{i\in\el} \in \mathcal{L(\mathcal{C})}^\el$ such that for every extension $\succ\, \in \mathcal{\mathcal{L(\mathcal{C})}^{|\mathcal{P}|}}$ of $\mathcal{P}$, we have $r(\succ, (\succ_i)_{i\in[\el]}) = c$?
\end{definition}

We use $(\PP, \el, c)$ to denote instances of \WM, \OM, and \SM, where $\PP$ denotes a profile of partial votes, $\el$ denotes the number of manipulators, and $c$ denotes the desired winner.

For the sake of completeness, we provide the definitions of the \textsc{Coalitional Manipulation} and \textsc{Possible Winner} problems below.
\begin{definition}\textbf{\textsc{$r$-Coalitional Manipulation}}\\
 Given a set of complete votes $\succ$ over a set of candidates $\mathcal{C}$, a positive integer $\el~(>0)$ denoting the number of manipulators, and a candidate $c$, do there exist votes $(\succ_i)_{i\in\el} \in \mathcal{L(\mathcal{C})}^\el$ such that $r\left(\succ, \left(\succ_i\right)_{i\in[\el]}\right) = c$?
\end{definition}

\begin{definition}\textbf{\textsc{$r$-Possible Winner}}\\
 Given a set of partial votes $\mathcal{P}$ and a candidate $c$, does there exist an extension $\succ$ of the partial votes in $\mathcal{P}$ to linear votes such that $r(\succ)=c$?
\end{definition}

\subsection{Comparison with Possible Winner and Coalitional Manipulation} ~For any fixed voting rule, the \WM problem with $\ell$ manipulators reduces to the \PW problem. This is achieved by simply using the same set as truthful votes and introducing $\ell$ empty votes. We summarize this in the observation below.

\begin{observation}\label{pw_hard}
 The \WM problem many-to-one reduces to the \PW problem for every voting rule.
\end{observation}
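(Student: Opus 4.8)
The plan is to exhibit a polynomial-time many-to-one reduction from $r$-\WM to $r$-\PW that changes essentially nothing except how the manipulators are encoded. Given an instance $(\PP, \el, c)$ of \WM, I would construct a \PW instance by keeping the partial votes in $\PP$ exactly as they are, and then adding $\el$ completely unspecified partial votes—that is, $\el$ empty partial orders over $\CC$ (each imposing no constraints among the candidates). The designated candidate $c$ stays the same. This map is clearly computable in polynomial time, since we only append $\el$ trivial partial orders.

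The correctness argument rests on a single observation: an empty partial order over $\CC$ can be extended to \emph{any} linear order in $\LL(\CC)$, and conversely every linear order is a valid extension of the empty partial order. Hence choosing the manipulators' votes $\succ_1, \ldots, \succ_\el$ in the \WM instance corresponds exactly to choosing how to extend the $\el$ empty votes in the \PW instance. First I would argue the forward direction: if $(\PP, \el, c)$ is a \YES-instance of \WM, there exist manipulator votes $\succ_1, \ldots, \succ_\el$ and an extension $\succ$ of $\PP$ with $r(\succ, \succ_1, \ldots, \succ_\el) = c$; then the combined extension that extends the original partial votes according to $\succ$ and extends the $\el$ empty votes to $\succ_1, \ldots, \succ_\el$ is a valid extension of the constructed \PW profile in which $c$ wins, so the \PW instance is a \YES-instance. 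For the reverse direction, any extension of the constructed \PW profile in which $c$ wins splits into an extension $\succ$ of $\PP$ together with linear orders on the $\el$ formerly-empty votes; reading the latter as the manipulators' votes witnesses that the \WM instance is a \YES-instance.

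There is essentially no hard step here; the only point requiring mild care is to confirm that the existential quantifier structure of the two problems genuinely lines up. In \WM we ask for a \emph{single} choice of manipulator votes together with \emph{some} extension of $\PP$ making $c$ win, and in \PW we ask for \emph{some} extension of the whole profile making $c$ win—and because the manipulator votes and the extension of $\PP$ are chosen independently in both problems (there is no adversarial ``for all extensions'' quantifier as in \SM), the two existential searches coincide exactly. I would remark that this correspondence is precisely what makes \WM resemble \PW, while \SM and \OM, with their universal quantification over extensions, do not reduce to \PW in this simple way.
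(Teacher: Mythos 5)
Your proposal is correct and follows exactly the same approach as the paper: the paper's proof also appends $\el$ empty partial votes $\QQ$ to $\PP$ and observes that the \WM instance $(\PP,\el,c)$ is equivalent to the \PW instance $(\PP\cup\QQ,c)$. Your write-up simply spells out the two directions of the equivalence that the paper leaves implicit as ``clearly.''
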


\begin{proof}
 Let $(\PP, \el, c)$ be an instance of \WM. Let $\QQ$ be the set consisting of \el many copies of partial votes $\{\emptyset\}$. Clearly the \WM instance $(\PP, \el, c)$ is equivalent to the \PW instance $(\PP\cup\QQ, c)$.
\end{proof}

However, whether the \PW problem reduces to the \WM problem or not is not clear since in any \WM problem instance, there must exist at least one manipulator and a \PW instance may have no empty vote. From a technical point of view, the difference between the \WM and \PW problems may look marginal; however we believe that the \WM problem is a very natural generalization of the \CM problem in the partial information setting and thus worth studying. Similarly, it is easy to show, that the \CM problem with $\ell$ manipulators reduces to \WM, \OM, and \SM problems with $\ell$ manipulators, since the former is a special case of the latter ones.

\begin{observation}\label{cm_hard}
 The \CM problem with \el manipulators many-to-one reduces to \WM, \OM, and \SM problems with \el manipulators for all voting rules and for all positive integers \el.
\end{observation}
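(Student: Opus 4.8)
The plan is to build, for each of the three target problems, a many-to-one reduction resting on the single structural fact that a complete (linear) vote is a special partial order whose only extension is itself. Given a \CM instance $(\succ, \el, c)$, where $\succ$ is a profile of complete votes, I would reinterpret each vote of $\succ$ as a (trivially) partial vote and output the instance $(\PP, \el, c)$ with $\PP = \succ$. Since every partial vote in $\PP$ is already a linear order, $\PP$ admits exactly one extension, namely $\succ$ itself. I would then verify the equivalence of the \YES/\NO answers separately for \WM, \SM, and \OM, the map being computable in linear time in all cases.

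For \WM and \SM the argument is essentially definitional. Because $\PP$ has a unique extension, the clause ``there exists an extension in which $c$ wins'' (\WM) and the clause ``$c$ wins in every extension'' (\SM) both collapse to the single condition that the manipulators can cast $(\succ_i)_{i\in[\el]}$ with $r(\succ, (\succ_i)_{i\in[\el]}) = c$. This is verbatim the \CM question, so $(\succ,\el,c)$ is a \YES-instance of \CM if and only if $(\PP,\el,c)$ is a \YES-instance of \WM, and identically for \SM.

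The delicate case, and the step I expect to be the main obstacle, is \OM. The naive identity map fails here because of the vacuity phenomenon: when $(\succ,\el,c)$ is a \NO-instance of \CM, the unique extension $\succ$ is not viable, so $\PP$ has \emph{no} viable extension and $(\PP,\el,c)$ is a vacuous \YES-instance of \OM. Thus the identity map would send \NO-instances of \CM to \YES-instances of \OM. To repair this, the plan is to augment the construction so that a viable extension is present regardless of the \CM answer. Concretely, I would produce a partial profile with two extensions $E_1$ and $E_2$, each viable for \emph{every} input, so that \OM coincides with demanding a single manipulative profile that wins in both; I would then arrange the sets of winning manipulator profiles of $E_1$ and of $E_2$ to intersect precisely when the underlying \CM instance is a \YES-instance, whence a common opportunistic profile exists if and only if \CM is a \YES-instance.

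The crux is realising such a gadget uniformly, since the set of winning manipulator profiles is rule-dependent, whereas the statement is claimed for all voting rules. I would first carry the construction out for the concrete rule where the implication is invoked (maximin with $\el \ge 2$), padding the \CM profile so that $c$ is always rescuable in one designated extension while the competing extension enforces the original \CM constraint, and then check that the same padding template transfers to any rule in which the manipulators can uniformly boost $c$. Confirming that both extensions stay viable on every input, and that their winning sets meet exactly on \CM solutions, is where the real work lies; by contrast the \WM and \SM reductions require no gadget at all.
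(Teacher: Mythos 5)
Your treatment of \WM and \SM is correct and is exactly the paper's argument: a complete profile is a partial profile whose unique extension is itself, so both problems collapse to \CM under the identity map. Your diagnosis of the \OM case is also correct, and it exposes a genuine defect in the paper's own one-line proof, which asserts that every \CM instance ``is also an equivalent instance'' of \OM. As you observe, when the \CM instance is a \NO instance, the unique extension of the (complete) profile is not viable, so \emph{every} manipulator profile is vacuously $(r,c)$-opportunistic and the resulting \OM instance is \YES; the identity map therefore sends \NO instances of \CM to \YES instances of \OM and is not a many-to-one reduction. The paper glosses over precisely this point.

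The gap in your proposal is the repair you sketch for \OM: it is not carried out, and in fact it cannot be carried out at the level of generality the statement demands. The statement quantifies over \emph{all} voting rules, but for the plurality and Fallback rules the paper itself shows (\Cref{obs:plurality_fallback}) that the profile in which all manipulators approve only $c$ is $(r,c)$-opportunistic for every instance, i.e., \emph{every} \OM instance for these rules is a \YES instance. Since \CM for plurality has both \YES and \NO instances, no many-to-one reduction from \CM to \OM can exist for that rule: no map can send \NO instances of a nontrivial problem into a problem that has no \NO instances. So the \OM part of the observation is false as stated, and your two-extension gadget can at best be built rule by rule (and even for Borda or maximin this would be delicate, since \CM is \NP-complete there while the paper establishes only \coNP-hardness of \OM by direct reductions from $\overline{\text{X3C}}$, not via this observation). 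The defensible claim — and the one your identity-map argument fully proves — is the observation restricted to \WM and \SM; for \OM you should flag the statement as erroneous rather than attempt to prove it.
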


\begin{proof}
 Follows from the fact that every instance of the \CM problem is also an equivalent instance of the \WM, \OM, and \SM problems.
\end{proof}

Finally, we note that the \CM problem with $\ell$ manipulators can be reduced to the \WM problem with just one manipulator, by introducing $\ell-1$ empty votes. These votes can be used to witness a good extension in the forward direction. In the reverse direction, given an extension where the manipulator is successful, the extension can be used as the manipulator's votes. This argument leads to the following observation.

\begin{observation}\label{cm_hard_one}
 The \CM problem with \el manipulators many-to-one reduces to the \WM problem with one manipulator for every voting rule and for every positive integer \el.
\end{observation}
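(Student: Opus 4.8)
The plan is to exhibit a many-to-one reduction from the \CM problem with \el manipulators to the \WM problem with a single manipulator. The core idea, already hinted at in the paragraph preceding the statement, is that $\el-1$ empty (fully unspecified) partial votes can be used to ``simulate'' the freedom of $\el-1$ manipulators: in the \WM problem a single manipulator need only succeed in \emph{one} extension, and the empty votes can be extended to whatever the remaining manipulators would have cast. First I would take an arbitrary instance of \CM, say a set of complete votes $\succ$ over $\CC$, the integer $\el$, and the distinguished candidate $c$. From this I would construct a \WM instance $(\PP, 1, c)$ where $\PP$ consists of the $n$ complete votes of $\succ$ together with $\el-1$ copies of the empty partial vote $\{\emptyset\}$, and where the number of manipulators is set to exactly one.

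The heart of the argument is then the equivalence of the two instances, which I would establish by proving both directions of the implication. In the forward direction, suppose the \CM instance is a \YES-instance, witnessed by manipulative votes $\succ_1, \ldots, \succ_\el$ making $c$ win alongside $\succ$. I would let the single \WM manipulator cast $\succ_1$, and then extend the $\el-1$ empty partial votes in $\PP$ to the linear orders $\succ_2, \ldots, \succ_\el$. Since every empty partial order can be extended to any linear order, this is a valid extension, and in it the total profile is exactly $(\succ, \succ_1, \ldots, \succ_\el)$, which elects $c$; hence $c$ wins in at least one extension, so the \WM instance is a \YES-instance. In the reverse direction, suppose the \WM instance is a \YES-instance: there is a single manipulative vote $\succ_1$ and an extension of $\PP$ in which $c$ wins. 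I would read off the extensions of the $\el-1$ empty votes as linear orders $\succ_2, \ldots, \succ_\el$, and observe that the winning profile is precisely $(\succ, \succ_1, \ldots, \succ_\el)$; taking $\succ_1, \ldots, \succ_\el$ as the coalition's votes then witnesses that the \CM instance is a \YES-instance.

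Both directions hinge on the same structural observation, namely that an empty partial vote admits \emph{every} linear order as an extension and contributes nothing to $\PP$ beyond that freedom, so the $\el-1$ empty votes in the \WM extension play exactly the role of the $\el-1$ extra manipulators in \CM. I would also note that the reduction is clearly polynomial-time computable, since it only appends $\el-1$ empty votes and leaves the original votes and candidate set untouched, which suffices for the many-to-one reduction claim.

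I do not anticipate a genuine obstacle here: the construction is elementary and the equivalence is a direct bijection between manipulative coalitions on one side and (manipulator-vote, empty-vote-extension) pairs on the other. The only point requiring a moment of care is to verify that the truthful non-manipulator votes $\succ$ are treated identically on both sides, so that $c$'s victory in the \WM extension genuinely corresponds to $c$'s victory in the \CM profile under the \emph{same} voting rule $r$; since the rule and the non-manipulator votes are copied verbatim, this holds immediately. The argument is essentially a reindexing, and the proof should be just a few lines.
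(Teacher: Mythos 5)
Your reduction is correct, and it matches the idea sketched in the paper's prose immediately before the observation (padding with $\el-1$ empty votes). The paper's formal proof, however, pads with $\el-1$ copies of the partial vote $c \succ \text{others}$ rather than with empty votes, so its extensions are constrained to rank $c$ first. The reverse directions of the two arguments are identical, but the forward directions differ: with the paper's padding, the coalition witnesses $\succ_2,\ldots,\succ_\el$ from the \CM instance must be replaced by votes ranking $c$ first, which is harmless for the natural monotone rules studied in the paper (moving $c$ to the top never hurts $c$ there) but is not literally valid for an arbitrary voting rule --- one can construct a rule under which $c$ wins only if no vote ranks $c$ first. Your empty-vote version needs no such assumption: every linear order extends the empty vote, so the correspondence between coalition votes and (manipulator vote, extension) pairs is exact, and the claim ``for every voting rule'' holds verbatim. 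In that sense your proof is the more faithful one to the statement as written; what the paper's variant buys is only that the padded votes can never be used to favour any candidate other than $c$, a property the argument does not actually need.
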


\begin{proof}
 Let $(\PP, \el, c)$ be an instance of \CM. Let $\QQ$ be the set of consisting of $\el-1$ many copies of partial vote $\{c\succ \text{others}\}$. Clearly the \WM instance $(\PP\cup\QQ, 1, c)$ is equivalent to the \CM instance $(\PP, \el, 1)$.
\end{proof}

This observation can be used to derive the hardness of \WM even for one manipulator whenever the hardness for \CM is known for any fixed number of manipulators (for instance, this is the case for the voting rules such as Borda, maximin and Copeland). However, determining the complexity of \WM with one manipulator requires further work for voting rules where \CM is polynomially solvable for any number of manipulators (such as $k$-approval, Plurality, Bucklin, and so on).

\section{Hardness Results}\label{sec:hard}

In this section, we present our hardness results. While some of our reductions are from the \textsc{Possible Winner} problem, the other reductions in this section are from the \textsc{Exact Cover by 3-Sets} problem, also referred to as X3C. This is a well-known \NPC{}~\cite{garey1979computers} problem, and is defined as follows.

\begin{definition}[Exact Cover by 3-Sets (X3C)] Given a set $\UU$ and a collection $\SS = \{S_1,S_2, \dots, S_t\}$ of $t$ subsets of $ \UU$ with $|S_i|=3 ~\forall i=1, \dots, t,$ does there exist a $\TT\subset\SS$ with $|\TT|=\frac{|\UU|}{3}$ such that $\cup_{X\in \TT} X = \UU$?
\end{definition}

We use $\overline{\text{X3C}}$ to refer to the complement of X3C, which is to say that an instance of $\overline{\text{X3C}}$ is a \YES instance if and only if it is a \NO instance of X3C. The rest of this section is organized according to the problems being addressed.

\subsection{Weak Manipulation} 

To begin with, recall that the \CM problem is \NPC for the Borda~\cite{davies2011complexity,betzler2011unweighted}, maximin~\cite{xia2009complexity}, and Copeland$^\alpha$~\cite{faliszewski2008copeland,FaliszewskiHHR09,faliszewski2010manipulation} voting rules for every rational $\alpha\in[0,1]\setminus\{0.5\}$, when we have two manipulators. Therefore, it follows from~\Cref{cm_hard_one} that the \WM problem is \NPC for the Borda, maximin, and Copeland$^\alpha$ voting rules for every rational $\alpha\in[0,1]\setminus\{0.5\}$, even with one manipulator. 

For the $k$-approval and $k$-veto voting rules, we reduce from the corresponding \PW problems. While it is natural to start from the same voting profile, the main challenge is in undoing the advantage that the favorite candidate receives from the manipulator's vote, in the reverse direction.

We begin with proving that the \WM problem is \NP{}-complete for the $k$-approval voting rule even with one manipulator and at most $4$ undetermined pairs per vote.

\begin{theorem}\label{wm_k_hard}
The \WM problem is \NP{}-complete for the $k$-approval voting rule even with one manipulator for any constant $k>1$, even when the number of undetermined pairs in each vote is no more than $4$.
\end{theorem}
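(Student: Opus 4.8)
The plan is to establish membership in \NP routinely, and then prove hardness by a reduction from the \PW problem for $k$-approval, which is \NPC for every constant $k>1$. For membership, note that a short certificate for a \YES-instance of \WM consists of the manipulator's linear vote $\succ_1$ together with a linear extension $\succ$ of the partial profile $\PP$: given both, we compute all $k$-approval scores of $(\succ,\succ_1)$ and check in polynomial time that $c$ is the unique winner. So \WM is in \NP, and the whole effort is the hardness reduction. The guiding principle, following \Cref{pw_hard}, is that a single manipulator is almost the same as an empty vote, so the reduction should make the manipulator's ballot essentially forced and then cancel out its effect.

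For the reduction I would take a \PW instance $(\QQ,c)$ over a candidate set $\CC$ and build a \WM instance with one manipulator ($\el=1$) as follows: keep all votes of $\QQ$ as the non-manipulators' partial votes, and adjoin a handful of dummy ``sink'' candidates together with some complete votes calibrated so that the sinks have very low base score. Writing $s(a)$ for the $k$-approval score of a candidate $a$ in the extension under consideration, the point of the sinks is that the manipulator, who must fill exactly $k$ approval slots, has a canonical optimal ballot: approve $c$ and spend the remaining $k-1$ approvals on sinks. This is justified because approving $c$ can only help $c$'s cause while approving any genuine opponent $a\ne c$ can only hurt it, so any successful manipulator ballot can be replaced by this canonical one without losing success. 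With this canonical ballot the forward direction is immediate: an extension witnessing $s(c)>s(a)$ for every opponent $a$ becomes a \WM-success in which $c$, now boosted to $s(c)+1$, remains the unique winner.

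The delicate point — and the main obstacle, which as the paper notes bites in the reverse direction — is to \emph{undo the single extra point} that $c$ receives from the manipulator. A verbatim copy of the \PW profile would only certify that a \WM-success corresponds to an extension with $s(c)+1>s(a)$ for every opponent $a$, i.e.\ with $s(c)\ge s(a)$; this is merely the co-winner condition and is strictly weaker than the \PW requirement $s(c)>s(a)$. To realign the two I would introduce auxiliary complete votes (and, if needed, one guard candidate placed in a controlled way) calibrated so that the manipulator's point for $c$ is counterbalanced, forcing every \WM-success to already meet the strict inequality against the genuine opponents; equivalently, it suffices to start from an instance family in which $c$ can be made a co-winner exactly when it can be made the unique winner. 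Verifying that this calibration neither creates a spurious winner nor hands the manipulator a more profitable deviation from its canonical ballot is the technical heart of the argument.

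Finally, for the bound on undetermined pairs, I would start from a \PW instance in which each partial vote has only a constant number of undetermined pairs — the standard \PW hardness constructions for $k$-approval have this feature — and observe that every vote we adjoin (the sink-calibrating votes and the guard votes) is \emph{complete}, contributing no undetermined pairs. A careful accounting of the at most a constant number of extra unspecified comparisons introduced by the guard gadget then keeps the total at most $4$ per vote, giving the stated bound. Combining the \NP membership with this reduction yields \NP-completeness of \WM for $k$-approval with a single manipulator and at most $4$ undetermined pairs per vote.
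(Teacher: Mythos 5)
Your high-level plan --- reduce from \PW for $k$-approval, add dummy candidates, give the manipulator a canonical ballot approving $c$ and $k-1$ dummies, and then neutralize the single extra point that $c$ receives --- is the same strategy as the paper's proof, and your \NP membership argument is fine. But the proposal stops exactly at the step you yourself call ``the technical heart'': you never specify \emph{how} the manipulator's point for $c$ is counterbalanced, and the concrete hints you give would not work. Note first that no set of added complete votes can \emph{lower} $c$'s score; the only available move is to raise every genuine opponent's score by exactly one, so that the comparison ``$s(c)+1$ versus $s(c_i)+1$'' in the constructed instance coincides with the \PW comparison ``$s(c)$ versus $s(c_i)$''. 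The paper does this with one complete vote $w_i = c_i \succ d_i \succ \text{others}$ per opponent $c_i$, and crucially this requires $\Theta(m)$ \emph{pairwise distinct} dummies ($m+1$ of them for $2$-approval, $(m+1)(k-1)$ for general $k$): each boosting vote must fill its remaining $k-1$ approval slots with dummies approved nowhere else, since a dummy reused across boosting votes would accumulate score $\Theta(m)$ and overtake $c$. Your ``handful of sink candidates'' plus ``one guard candidate'' cannot implement this. Your fallback suggestion --- start from a \PW family in which co-winning and unique winning coincide --- merely relocates the difficulty into an unproved hardness assumption about \PW.

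There is a second gap: you keep the \PW votes as partial votes over the \emph{enlarged} candidate set without saying where the dummies go in them. This breaks both halves of the theorem. The positions of the dummies in those votes become undetermined pairs, so the bound of $4$ undetermined pairs per vote is lost; and, worse, extensions may now push dummies into top-$k$ positions and steal approvals from original candidates, so an extension in which $c$ wins the constructed election need not project to an extension in which $c$ wins the \PW instance (an opponent held below $c$ only because a dummy absorbed one of its approvals regains that point after projection). The paper closes this hole by adding the constraint $\mathcal{C} \succ \{d_1, \ldots, d_{m+1}\}$ to every partial vote, pinning all dummies below all original candidates. This keeps the set of incomparable pairs of each vote literally unchanged (hence still at most $4$) and makes extensions of the constructed profile correspond exactly to extensions of the original one, with every original candidate's score shifted by the same offset. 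With these two ingredients the forward and reverse directions go through as you sketch; without them, the reduction as proposed is not sound.
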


\begin{proof}
For simplicity of presentation, we prove the theorem for $2$-approval.
We reduce from the \PW problem for $2$-approval which is \NP{}-complete~\cite{xia2008determining}, even when the number of undetermined pairs in each vote is no more than $4$. 
Let $\mathcal{P}$ be the set of partial votes in a \PW instance, and 
let ${\mathcal C} = \{c_1, \ldots, c_m, c\}$ be the set of candidates, 
where the goal is to check if there is an extension of $\mathcal{P}$ that makes $c$ win. For developing the instance of \WM, we need to ``reverse'' any advantage that the candidate $c$ obtains from the vote of the manipulator. Notice that the most that the manipulator can do is to increase the score of $c$ by one. Therefore, in our construction, we \textit{``artificially''} increase the score of all the other candidates by one, so that despite of the manipulator's vote, $c$ will win the new election if and only if it was a possible winner in the \PW instance. To this end, we introduce $(m+1)$ many \textit{dummy} candidates $d_1, \ldots, d_{m+1}$ and the complete votes:
$$w_i = c_i \succ d_i \succ \text{others},\text{ for every } i\in \{1, \dots, m\}$$
Further, we extend the given partial votes of the \PW instance to force the dummy candidates 
to be preferred least over the rest - by defining, for every $v_i \in \mathcal{P}$, the corresponding partial vote $v_i^{\prime}$ as follows.
$$v_i^\prime = v_i \cup \{{\mathcal C} \succ \{d_1, \ldots, d_{m+1}\}\}.$$ 
This ensures that all the dummy candidates do not receive any score 
from the modified partial votes corresponding to the partial votes of the \PW instance. Notice that since the number of undetermined pairs in $v_i$ is no more than $4$, the number of undetermined pairs in $v_i^\prime$ is also no more than $4$. Let $({\mathcal C^\prime},\mathcal{Q},c)$ denote this constructed \WM instance. We claim that the two instances are equivalent.

In the forward direction, suppose $c$ is a possible winner with respect to $\mathcal{P}$, and let $\overline{\mathcal{P}}$ be an extension where $c$ wins. Then it is easy to see that the manipulator can make $c$ win in some extension by placing $c$ and $d_{m+1}$ in the first two positions of her vote (note that the partial score of $d_{m+1}$ is zero in $\mathcal{Q}$). Indeed, consider the extension of $\mathcal{Q}$ obtained by mimicking the extension $\overline{\mathcal{P}}$ on the ``common'' partial votes, $\{v_i^\prime ~|~ v_i \in \mathcal{P}\}$. Notice that this is well-defined since $v_i$ and $v_i^\prime$ have exactly the same set of incomparable pairs. In this extension, the score of $c$ is strictly greater than the scores of all the other candidates, since the scores of all candidates in $\mathcal{C}$ is exactly one more than their scores in $\mathcal{P}$, and all the dummy candidates have a score of at most one. 

In the reverse direction, notice that the manipulator puts the candidates $c$ and $d_{m+1}$ in the top two positions without loss of generality. Now suppose the manipulator's vote $c\succ d_{m+1}\succ \text{others}$ makes $c$ win the election for an extension $\overline{\mathcal{Q}}$ of $\mathcal{Q}$. Then consider the extension $\overline{\mathcal{P}}$ obtained by restricting $\overline{\mathcal{Q}}$ to $\mathcal{C}$. Notice that the score of each candidate in $\mathcal{C}$ in this extension is one less than their scores in $\mathcal{Q}$. Therefore, the candidate $c$ wins this election as well, concluding the proof. 

The above proof can be imitated for any other constant values of $k$ by reducing it from the \PW problem for $k$-approval and introducing $(m+1)(k-1)$ dummy candidates.
\end{proof}

We will use \Cref{score_gen} in subsequent proofs which has been used before~\cite{baumeister2011computational,DeyMN15,journalsDeyMN16}.

\begin{lemma}\label{score_gen}
Let $\mathcal{C} = \{c_1, \ldots, c_m\} \uplus D, (|D|>0)$ be a set of candidates, and $\vec{\alpha}$ a normalized score vector of length $|\mathcal{C}|$. Then, for any given $\mathbf{X} = (X_1, \ldots, X_m) \in \mathbb{Z}^m$, there exists $\lambda\in \mathbb{R}$ and a voting profile such that the $\vec{\alpha}$-score of $c_i$ is $\lambda + X_i$ for all $1\le i\le m$,  and the score of candidates $d\in D$ is less than $\lambda$. Moreover, the number of votes is $O(poly(|\mathcal{C}|\cdot \sum_{i=1}^m |X_i|, \lambda))$.
\end{lemma}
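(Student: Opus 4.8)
The plan is to build the profile explicitly, exploiting the one structural feature we are handed: since $\vec{\alpha}$ is normalized, there is a position $j$ with $\alpha_j - \alpha_{j+1} = 1$ and $\alpha_k = 0$ for all $k > j$, so in fact $\alpha_j = 1$ and $\alpha_{j+1} = 0$. This unit gap is the only lever I need, because transposing the occupants of positions $j$ and $j+1$ in a single vote changes each of their scores by exactly $1$ and leaves every other candidate untouched. Before starting I would normalize the target: as $\lambda$ is free, I may replace each $X_i$ by $X_i - \min_{i'} X_{i'} + 1$, which merely shifts $\lambda$ by a constant; so I assume henceforth that $X_i \ge 1$ for every $i$ and set $T := \sum_i X_i \ge 1$.

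Stage one (\emph{separation}) produces a block of votes that makes all of $c_1, \dots, c_m$ tie at a large common value while keeping every $d \in D$ strictly below them. I would take cyclic rotations that keep the $c_i$'s among the top $m$ positions and the dummies among the bottom $|D|$ positions, rotating within each group; over one full pass each $c_i$ receives $|D|\sum_{k=1}^{m}\alpha_k$ and each dummy receives $m\sum_{k=m+1}^{m+|D|}\alpha_k$. The point is that the average of the top $m$ entries of $\vec{\alpha}$ strictly exceeds that of the bottom $|D|$ entries (a standard consequence of $\vec{\alpha}$ being nonincreasing with $\alpha_1 > \alpha_{m+|D|} = 0$), so this block hands the $c_i$'s a strictly larger common score than every dummy.

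Stage two (\emph{installing the differences}) is where the unit gap is used. For a fixed $i$ I would write down the $|\mathcal{C}|$ cyclic rotations of a reference order in which some dummy $d^\ast$ sits immediately before $c_i$; by themselves these rotations give every candidate the same total $S := \sum_k \alpha_k$. Then, in the single rotation where $c_i$ occupies position $j+1$ (its predecessor $d^\ast$ then occupies position $j$), I swap $c_i$ and $d^\ast$. The net effect of this modified block is $+1$ for $c_i$, $-1$ for $d^\ast$, and $+S$ for everyone else. Repeating this block $X_i$ times for each $i$ adds exactly $X_i$ to $c_i$, adds the common constant $TS$ to all candidates, and only decreases dummy scores. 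Combining the two stages, each $c_i$ ends with (stage-one baseline) $+\,TS + X_i$, while each dummy ends with at most (its smaller stage-one baseline) $+\,TS$; taking $\lambda$ to be the stage-one $c$-baseline plus $TS$ yields $\mathrm{score}(c_i) = \lambda + X_i$, and the strict gap from stage one keeps every dummy below $\lambda$. The number of votes is $|\mathcal{C}|\cdot T$ plus the $O(m|D|)$ separation votes, which is polynomial in the stated parameters.

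The main obstacle — the step I would watch most carefully — is keeping the filler of the high-scoring positions $1,\dots,j-1$ from corrupting the relative scores of the $c_i$'s: a naive single-vote increment for $c_i$ dumps positive scores onto whoever fills those positions, and if they are other candidates $c_{i'}$ the intended differences get scrambled, while if they are dummies the dummy scores can balloon above $\lambda$. The cyclic-rotation-plus-one-swap device is precisely what circumvents this, since the rotations distribute all filler uniformly (a common additive constant, harmless for differences) and the sole asymmetry, the transposition, is arranged to push a \emph{dummy} down rather than to perturb a second $c_{i'}$. The separation block is then needed only to upgrade the resulting ``$\le$'' bound on the dummies to the required strict inequality.
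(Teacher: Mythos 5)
Your construction is correct, but it cannot be matched against a proof in the paper, because the paper never proves \Cref{score_gen}: it imports the lemma wholesale from the cited literature (Baumeister et al.; Dey, Misra, and Narahari) and only remarks that the number of votes stays polynomial in the regimes where it is applied. What you have written is a complete, self-contained argument, and its two ingredients check out. The separation block is sound: with $m|D|$ votes rotating the $c_i$'s through the top $m$ positions and the dummies through the bottom $|D|$ positions, the strict inequality between the two group averages follows exactly as you say, since equality would force $\vec{\alpha}$ to be constant, contradicting $\alpha_1>\alpha_{|\mathcal{C}|}$. The incrementing block is also sound: among the $|\mathcal{C}|$ cyclic rotations of an order in which $d^\ast$ immediately precedes $c_i$, the one placing $c_i$ at position $j+1$ has $c_i\ne$ first, so the adjacency survives the rotation and the single transposition transfers exactly one point from $d^\ast$ to $c_i$ (using $\alpha_j-\alpha_{j+1}=1$), leaving all other scores at the common value $S=\sum_k\alpha_k$; your preliminary shift making every $X_i\ge 1$ is what legitimizes repeating the block $X_i$ times, and the shift is absorbed into the free parameter $\lambda$. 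The vote count $O(m|D|)+|\mathcal{C}|\cdot\sum_i X_i$ is within the stated bound. In comparison, the paper's citation-only treatment buys brevity; your proof buys transparency, in particular making explicit where the normalization of $\vec{\alpha}$ and the nonemptiness of $D$ (the sacrificial candidate $d^\ast$ absorbing the $-1$'s) are actually used --- which is essentially the same rotation-plus-adjacent-swap mechanism that underlies the constructions in the works the paper cites.
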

  
Note that the number of votes used in \Cref{score_gen} is polynomial in $m$ if $\lambda$ and $|X_i|$ is polynomial in $m$ for every $i\in[m]$, which indeed is the case in all the proofs that use \Cref{score_gen}. We next show that the WM problem is \NP{}-complete for the $k$-veto voting rule.

\begin{theorem}\label{k_veto_wm}
 The \WM problem for the $k$-veto voting rule is \NP{}-complete even with one manipulator for any constant $k>1$.
\end{theorem}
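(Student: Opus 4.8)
The plan is to reduce from the \PW problem for $k$-veto, which is known to be \NP{}-complete, mirroring the strategy of \Cref{wm_k_hard} but adapting it to the fact that a single $k$-veto manipulator disapproves (vetoes) exactly $k$ candidates rather than approving a bounded set. Membership in \NP{} is routine, since a certificate consists of the manipulator's vote together with an extension, checkable in polynomial time. For the hardness, let $(\PP, c)$ be a \PW instance with candidate set $\{c_1,\dots,c_m,c\}$, where we must decide whether some extension of $\PP$ makes $c$ the unique $k$-veto winner. Throughout I think of a candidate's tally as its \emph{veto count} (the number of votes ranking it among the bottom $k$), so that the winner is the candidate with the smallest veto count, and \PW asks for an extension in which $c$'s veto count is strictly below that of every $c_i$.

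The single manipulator can help $c$ only by vetoing $k$ of its rivals, and the crux --- exactly the ``undoing the advantage'' difficulty flagged above --- is to force these $k$ vetoes to be wasted. To this end I introduce $k$ dummy candidates $d_1,\dots,d_k$ and build the \WM instance as follows. First I extend every partial vote of $\PP$ by placing all dummies above all of $\{c_1,\dots,c_m,c\}$; this leaves the undetermined pairs among the original candidates untouched, guarantees that the dummies are never vetoed by these votes, and makes the veto counts of $c_1,\dots,c_m,c$ coincide with their \PW values in the corresponding extension. Let $V_c^{\min}$ be the number of partial votes of $\PP$ that force $c$ into the bottom $k$ positions; this is the minimum possible veto count of $c$ and is computable in polynomial time. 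I then add $V_c^{\min}$ complete votes that rank $d_1,\dots,d_k$ in the last $k$ positions, fixing each dummy's veto count at exactly $V_c^{\min}$ and giving the original candidates no extra vetoes (\Cref{score_gen} provides a convenient way to realize such score configurations and to pad the profile).

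The equivalence then hinges on two observations. In the reverse direction, if a successful manipulator vote fails to veto some dummy $d_j$, then $d_j$ keeps veto count $V_c^{\min}$, which is at most $c$'s veto count in \emph{every} extension; hence $d_j$ ties or beats $c$ and $c$ cannot be the unique winner. Thus any successful manipulator vote must veto all $k$ dummies, exhausting its veto budget and contributing nothing to the rivals $c_1,\dots,c_m$; what remains is precisely the requirement that $c$ beat all $c_i$ while its own veto count equals $V_c^{\min}$, which restricts to a \YES instance of \PW. In the forward direction I use the fact that if $c$ is a possible winner at all, then it is a possible winner in an extension achieving its minimum veto count $V_c^{\min}$: one repeatedly lifts $c$ out of the bottom $k$ of any non-forced vote, and because pulling $c$ up can only drag up candidates already forced above $c$ --- whose counts drop by one alongside $c$'s, preserving the gap --- while pushing a fresh candidate into the vetoed zone, the winning condition is maintained. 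Given such an extension, the manipulator vetoes all $k$ dummies, every dummy rises to veto count $V_c^{\min}+1>V_c^{\min}=V(c)$, and $c$ wins.

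The hard part will be this tightness in the reverse direction: since all $k$-veto tallies are bounded, the dummies can force the manipulator's vetoes only if they sit exactly at $c$'s best achievable level, which in turn forces the ``lift $c$ to its minimum veto count'' exchange argument. Carrying out that exchange carefully --- verifying that the replacement really is a valid linear extension of each partial vote and that no rival's veto count is pushed below $c$'s --- is the delicate step; the remaining bookkeeping (counting dummies, checking polynomial size via \Cref{score_gen}, and extending from $2$-veto to general $k>1$ by using $k$ dummies) is routine.
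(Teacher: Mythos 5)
Your proposal is correct, and it shares the paper's overall skeleton: a reduction from \PW for $k$-veto in which dummy candidates are inserted above all original candidates in every partial vote, so that a successful manipulator is forced to spend all $k$ vetoes on the dummies, after which the winning extension restricts to a witness for \PW. The difference lies in how the dummies' scores are calibrated against $c$'s, and it is a genuine one. The paper first assumes, without loss of generality, that $c$'s position is fixed (as high as possible) in every partial vote, so that $c$'s score is a single known constant across all extensions; it then invokes \Cref{score_gen} (which requires an extra $(k+1)$-st dummy $d$) to make each $d_i$'s final score exactly equal to $c$'s. With that normalization the forward direction is immediate: any winning extension together with a manipulator vote placing the dummies in the bottom $k$ positions suffices. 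You instead leave $c$'s score variable, pin each dummy's veto count at $c$'s minimum achievable count $V_c^{\min}$ using explicitly constructed padding votes, and pay for this in the forward direction with an exchange lemma: if $c$ is a possible winner at all, it is a possible winner in an extension attaining $V_c^{\min}$, since re-extending a non-forced vote to lift $c$ out of the bottom $k$ lowers $c$'s count by exactly one while lowering any rival's count by at most one, so all strict gaps survive. (Your stated reason --- that only candidates ``already forced above $c$'' get dragged up --- is imprecise; the correct and sufficient observation is simply that each rival's count drops by at most one per swap. The inequality you actually use is the right one, so the argument stands.) The two proofs are morally dual: the paper's unjustified ``fix $c$ as high as possible'' step is validated by essentially your exchange argument, so your version makes explicit what the paper leaves implicit, and in exchange dispenses with both the extra dummy candidate and the appeal to \Cref{score_gen}; the reverse directions of the two proofs are essentially identical.
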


\begin{proof}
We reduce from the \PW problem for the $k$-veto voting rule which is known to be \NPC~\cite{betzler2009towards}. Let $\mathcal{P}$ be the set of partial votes in a \PW problem instance, and 
let ${\mathcal C} = \{c_1, \ldots, c_m, c\}$ be the set of candidates, 
where the goal is to check if there is an extension that makes $c$ win with respect to $k$-veto. We assume without loss of generality that $c$'s position is fixed in all the partial votes (if not, then we fix the position of $c$ as high as possible in every vote).

We introduce $k+1$ many \textit{dummy} candidates $d_1, \ldots, d_k, d$. The role of the first $k$ dummy candidates is to ensure that the manipulator is forced to place them at the ``bottom $k$'' positions of her vote, so that all the original candidates get the same score from the additional vote of the manipulator. The most natural way of achieving this is to ensure that the dummy candidates have the same score as $c$ in any extension (note that we know the score of $c$ since $c$'s position is fixed in all the partial votes). This would force the manipulator to place these $k$ candidates in the last $k$ positions. Indeed, doing anything else will cause these candidates to tie with $c$, even when there is an extension of $\mathcal{P}$ that makes $c$ win.

To this end, we begin by placing the dummy candidates in the top $k$ positions in all the partial votes. Formally, we modify every partial vote as follows:
$$w = d_i \succ \text{others},\text{ for every } i\in \{1, \dots, k\}$$
At this point, we know the scores of $c$ and $d_i,\text{ for every } i\in \{1, \dots, k\}$. Using \Cref{score_gen}, we add complete votes such that the final score of $c$ is the same with the score of every $d_i$ and the score of $c$ is strictly more than the score of $d$. The relative score of every other candidate remains the same. This completes the description of the construction. We denote the augmented set of partial votes by $\overline{\mathcal{P}}$. 

We now argue the correctness. In the forward direction, if there is an extension of the votes that makes $c$ win, then we repeat this extension, and the vote of the manipulator puts the candidate $d_i$ at the position $m+i+2$; and all the other candidates in an arbitrary fashion. Formally, we let the manipulator's vote be:
$$\mathfrak{v} = c \succ c_1 \succ \cdots \succ c_m \succ d \succ d_1 \succ \cdots \succ d_k.$$
By construction $c$ wins the election in this particular setup. In the reverse direction, consider a vote of the manipulator and an extension $\overline{\mathcal{Q}}$ of $\overline{\mathcal{P}}$ in which $c$ wins. Note that the manipulator's vote necessarily places the candidates $d_i$ in the bottom $k$ positions --- indeed, if not, then $c$ cannot win the election by construction. We extend a partial vote $w \in \mathcal{P}$ by mimicking the extension of the corresponding partial vote $w^\prime \in \overline{\mathcal{P}}$, that is, we simply project the extension of $w^\prime$ on the original set of candidates $\mathcal{C}$. Let $\mathcal{Q}$ denote this proposed extension of $\mathcal{P}$. We claim that $c$ wins the election given by $\mathcal{Q}$. Indeed, suppose not. Let $c_i$ be a candidate whose score is at least the score of $c$ in the extension $\mathcal{Q}$. Note that the scores of $c_i$ and $c$ in the extension  $\overline{\mathcal{Q}}$ are exactly the same as their scores in $\mathcal{Q}$, except for a constant offset --- importantly, their scores are offset by the same amount. This implies that the score of $c_i$ is at least the score of $c$ in $\overline{\mathcal{Q}}$ as well, which is a contradiction. Hence, the two instances are equivalent.
\end{proof}

We next prove, by a reduction from X3C, that the \textsc{Weak Manipulation} problem for the Bucklin and simplified Bucklin voting rules is \NPC{} even with one manipulator and at most $16$ undetermined pairs per vote.

\begin{theorem}\label{wm_bucklin_hard}
The \textsc{Weak Manipulation} problem is \NPC{} for Bucklin, simplified Bucklin, Fallback, and simplified Fallback voting rules, even when we have only one manipulator and the number of undetermined pairs in each vote is no more than $16$. 
\end{theorem}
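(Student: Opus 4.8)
The plan is to first settle membership and then prove hardness by a reduction from X3C. Membership in \NP{} is immediate for all four rules: a certificate consists of the manipulator's single linear vote together with one extension of $\PP$, both of polynomial size, and winner determination under Bucklin, simplified Bucklin, Fallback, and simplified Fallback is polynomial-time, so the certificate can be verified efficiently.

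For hardness, given an X3C instance $(\UU, \SS)$ with $|\UU| = 3k$ and $|\SS| = t$, I would build a \WM instance whose candidates are the distinguished candidate $c$, one element-candidate for every $u \in \UU$, and a pool of dummy candidates used only to tune scores. The heart of the construction is a family of \emph{set-votes}: for each $S_i \in \SS$ I introduce a partial vote in which $c$, the three element-candidates of $S_i$, and a few dummies are arranged around the majority boundary of a fixed critical round $\ell^\ast$, leaving only a constant number (at most $16$) of pairs undetermined. The undetermined pairs are set up so that the two ``regimes'' of a linear extension of this vote correspond to the binary decision ``$S_i$ is selected'' versus ``$S_i$ is not selected,'' and this decision controls whether the three elements of $S_i$, or instead $c$, gain an approval inside the top-$\ell^\ast$ block. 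In addition I add complete filler votes, using \Cref{score_gen} (or a direct count), so that before any selection is made $c$ sits exactly one approval short of majority at round $\ell^\ast$, while every element-candidate is safely below majority at round $\ell^\ast - 1$.

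The single manipulator's job is simply to supply $c$'s last, decisive approval: placing $c$ on top contributes $+1$ to $c$ in every round, and since one vote can raise $c$'s count by at most one, $c$ can be made to win only if the chosen extension already (i) brings $c$ to exactly one short of majority at round $\ell^\ast$ and (ii) keeps every other candidate from reaching majority at any earlier round. I would tune the filler scores so that a selected set covering an element $u_j$ is exactly what pushes $u_j$ across majority at round $\ell^\ast$; consequently, keeping all element-candidates from beating $c$ forces the selected sets to cover $\UU$, and a parallel count on $c$ forces the number of selected sets to be exactly $k$. This yields the equivalence: in the forward direction an exact cover $\TT$ gives the extension ``select precisely the sets in $\TT$,'' after which the manipulator's top vote for $c$ makes $c$ the unique candidate with majority at round $\ell^\ast$; in the reverse direction, reading the selected sets off any successful extension and invoking the tuned thresholds shows that the selection must be an exact cover of size $k$.

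Finally, one construction should serve all four rules. I would arrange for $c$ to attain majority \emph{strictly before} every other candidate, so that the ``most approvals within the top $\ell$ positions'' winner (Bucklin) and the ``more than $\nicefrac{n}{2}$ approvals within the top $\ell$ positions'' winner (simplified Bucklin) both coincide with $c$; for Fallback and simplified Fallback I would fix each vote's approval set so that every candidate relevant to the argument is approved within the prefix under consideration, which makes Fallback agree with Bucklin on these instances. I expect the main obstacle to be precisely this score-tuning: making ``$c$ wins'' equivalent to ``there is an exact cover of size exactly $k$,'' ruling out both under-covering (some element never crosses) and over-covering or using too many sets (which would spoil $c$'s own count), while simultaneously keeping the number of undetermined pairs per vote bounded by the constant $16$ and keeping the single gadget valid for all four rules at once.
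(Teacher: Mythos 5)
Your overall skeleton---\NP{} membership via a (manipulator vote, extension) certificate, hardness via a reduction from X3C with one partial ``set-vote'' per $S_i$ carrying a constant number of undetermined pairs that encode a select/don't-select decision, filler votes to tune scores, a manipulator whose only real job is to supply $c$'s one extra approval, and a single construction serving all four rules (Fallback inheriting from Bucklin because every Bucklin election is a Fallback election with all candidates approved)---is exactly the right plan, and matches the paper's. The gap is in the one place you yourself flag as the main obstacle: the combinatorial mechanism forcing the selection to be an \emph{exact} cover, and the mechanism you describe does not work. You set up the gadget as an exclusive-or (``the three elements of $S_i$, \emph{or instead} $c$, gain an approval inside the top-$\ell^\ast$ block'') and then ask the fillers to ensure both that covering an element keeps it from beating $c$ and that ``using too many sets'' spoils $c$'s own count. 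Neither half can be made to hold under a Bucklin-style rule. If selecting $S_i$ is what rescues its elements (keeps them below majority) while handing $c$ an approval, then selecting \emph{more} sets is monotonically better for $c$ and for every element, so over-selection is never penalized: whenever every element lies in some set, selecting all of $\SS$ succeeds, and the reduction collapses to a triviality rather than encoding X3C. If instead (as another of your sentences says) a single selected set covering $u_j$ ``pushes $u_j$ across majority at round $\ell^\ast$,'' then even a genuine exact cover puts every element at majority in $c$'s own round, so $c$ is never the \emph{unique} winner. The internal inconsistency between these two sentences is the symptom of the missing idea.

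The paper resolves this with a non-exclusive coupling. In each set-vote the undetermined pairs are exactly $\XX\times(\{c\}\cup S_i)$ with $|\XX|=4$ (hence the $16$), and ``selecting'' $S_i$ slides the whole block $S_i\succ c$ \emph{above} $\XX$, so one selection boosts $c$ \emph{and} the three elements of $S_i$ simultaneously. The complete votes are tuned so that (i) $c$ needs at least $\nfrac{m}{3}$ selections, plus the manipulator's vote, to reach majority at round $m+1$; and (ii) each element $u_j$ sits just far enough below majority at round $m$ that one boost leaves it strictly below, while a second boost gives it majority at round $m$, i.e.\ strictly \emph{earlier} than $c$'s round, destroying $c$'s unique win. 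Thus the upper bound is a per-element cap of one selection (a packing constraint), not any constraint on $c$'s count; exactness then follows by counting: at least $\nfrac{m}{3}$ selected sets, three elements each, all elements distinct, over a universe of size $m$ forces exactly $\nfrac{m}{3}$ pairwise-disjoint sets covering $\UU$. Without this ``both go up together, elements capped at one'' mechanism (or an equivalent source of an upper bound), the equivalence you claim---$c$ wins in some extension iff an exact cover exists---does not follow from your construction.
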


\begin{proof}
We reduce the X3C problem to \textsc{Weak Manipulation} for simplified Bucklin. Let $(\UU = \{u_1, \ldots, u_m\}, \SS:= \{S_1,S_2, \dots, S_t\})$ be an instance of X3C, where each $S_i$ is a subset of $\UU$ of size three. We construct a \WM instance based on $(\UU,\SS)$ as follows.
$$ \text{Candidate set: } \mathcal{C} = \WW\cup \XX \cup \DD\cup \UU\cup \{c,w, a, b\},\text{ where } |\WW|=m-3, |\XX|=4, |\DD|=m+1$$ 
We first introduce the following partial votes $\PP$ in correspondence with the sets in the family as follows.
$$ \WW\succ \XX\succ S_i\succ c\succ (\UU\setminus S_i)\succ \DD \setminus \left(\{\XX \times (\{c\}\cup S_i)\}\right), \forall i\le t$$
Notice that the number of undetermined pairs in every vote in $\PP$ is $16$. We introduce the following additional complete votes $\QQ$:
\begin{itemize}
 \item $t$ copies of $\UU\succ c\succ \text{others}$
 \item $\nfrac{m}{3}-1$ copies of $\UU\succ a\succ c\succ \text{others}$
 \item $\nfrac{m}{3}+1$ copies of $\DD\succ b\succ \text{others}$
\end{itemize}
The total number of voters, including the manipulator, is $2t+\nfrac{2m}{3}+1$. Now we show equivalence of the two instances. 

In the forward direction, suppose we have an exact set cover $\TT\subset \SS$. Let the vote of the manipulator $\vvv$ be $c\succ D\succ \text{others}$. We consider the following extension $\overline{\PP}$ of $\PP$. 

$$\WW \succ S_i\succ c\succ \XX\succ (\UU\setminus S_i)\succ \DD$$

On the other hand, if $S_i\in\SS\setminus\TT$, then we have:
$$\WW \succ \XX\succ S_i\succ c\succ (\UU\setminus S_i)\succ \DD$$

We claim that $c$ is the unique simplified Bucklin winner in the profile $(\overline{\PP},\WW,\vvv)$. Notice that the simplified Bucklin score of $c$ is $m+1$ in this extension, since it appears in the top $m+1$ positions in the $m/3$ votes corresponding to the set cover, $t$ votes from the complete profile $\QQ$ and one vote $\vvv$ of the manipulator. For any other candidate $u_i\in \UU$, $u_i$ appears in the top $m+1$ positions once in $\overline{\PP}$ and $t+\frac{m}{3}-1$ times in $\QQ$. Thus, $u_i$ does not get majority in top $m+1$ positions making its simplified Bucklin score at least $m+2$. Hence, $c$ is the unique simplified Bucklin winner in the profile $(\overline{\PP},\WW,\vvv)$.
Similarly, the candidate $w_1$ appears only $t$ times in the top $m+1$ positions. 
The same can be argued for the remaining candidates in $\DD, \WW,$ and $w$. 

In the reverse direction, suppose the \WM is a \YES instance. We may assume without loss of generality that the manipulator's vote $\vvv$ is $c\succ \DD\succ \text{others}$, since the simplified Bucklin score of the candidates in $\DD$ is at least $2m$. Let $\overline{\PP}$ be the extension of $\PP$ such that $c$ is the unique winner in the profile $(\overline{\PP},\QQ,\vvv)$. As every candidate in $\WW$ is ranked within top $m+2$ positions in $t+\frac{m}{3}+1$ votes in $\QQ$, for $c$ to win, $c\succ \XX$ must hold in at least $\frac{m}{3}$ votes in $\overline{\PP}$. In those votes, all the candidates in $S_i$ are also within top $m+2$ positions. Now if any candidate in $\UU$ is within top $m+1$ positions in $\overline{\PP}$ more than once, then $c$ will not be the unique winner. Hence, the $S_i$'s corresponding to the votes where $c\succ \XX$ in $\overline{\PP}$ form an exact set cover. 

The reduction above also works for the Bucklin voting rule. Specifically, the argument for the forward direction is exactly the same as the simplified Bucklin above and the argument for the reverse direction is as follows. Every candidate in $\WW$ is ranked within top $m+2$ positions in $t+\frac{m}{3}+1$ votes in $\QQ$ and $c$ is never placed within top $m+2$ positions in any vote in $\QQ$. Hence, for $c$ to win, $c\succ \XX$ must hold in at least $\frac{m}{3}$ votes in $\overline{\PP}$. In those votes, all the candidates in $S_i$ are also within top $m$ positions. Notice that $c$ never gets placed within top $m$ positions in any vote in $(\overline{\PP},\QQ)$. Now if any candidate $x\in\UU$ is within top $m$ positions in $\overline{\PP}$ more than once, then $x$ gets majority within top $m$ positions and thus $c$ cannot win.

The result for the Fallback and simplified Fallback voting rules follow from the corresponding results for the Bucklin and simplified Bucklin voting rules respectively since every Bucklin and simplified Bucklin election is also a Fallback and simplified Fallback election respectively.
\end{proof}

\subsection{Strong Manipulation} 

We know that the \CM problem is \NPC for the Borda, maximin, and Copeland$^\alpha$ voting rules for every rational $\alpha\in[0,1]\setminus\{0.5\}$, when we have two manipulators. Thus, it follows from~\Cref{cm_hard} that \SM is \NPH{} for Borda, maximin, and Copeland$^\alpha$ voting rules for every rational $\alpha\in[0,1]\setminus\{0.5\}$ for at least two manipulators. 

For the case of one manipulator, \SM turns out to be polynomial-time solvable for most other voting rules. For Copeland$^\alpha$, however, we show that the problem is \coNPH for every $\alpha\in[0,1]$ for a single manipulator, even when the number of undetermined pairs in each vote is bounded by a constant. This is achieved by a careful reduction from $\overline{\text{X3C}}$. The following lemma has been used before 
\cite{mcgarvey1953theorem}.
\begin{lemma}\label{thm:mcgarvey}
 For any function $f:\mathcal{C} \times \mathcal{C} \longrightarrow \mathbb{Z}$, such that
 \begin{enumerate}[leftmargin=0cm,itemindent=.5cm,labelwidth=\itemindent,labelsep=0cm,align=left]
  \item $\forall a,b \in \mathcal{C}, f(a,b) = -f(b,a)$.
  \item $\forall a,b, c, d \in \mathcal{C}, f(a,b) + f(c,d)$ is even,
 \end{enumerate}
 there exists a $n$-voters' profile such that for all $a,b \in \mathcal{C}$, $a$ defeats 
 $b$ with a margin of $f(a,b)$. Moreover, 
 $$n \text{ is even and } n = O\left(\sum_{\{a,b\}\in \mathcal{C}\times\mathcal{C}} |f(a,b)|\right)$$
\end{lemma}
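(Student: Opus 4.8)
The plan is to realize the prescribed margins one ordered pair at a time, using a small ``gadget'' consisting of two votes that shifts a single pairwise margin by exactly $2$ while leaving every other pairwise margin untouched; summing enough copies of these gadgets then yields a profile realizing $f$ exactly. The one conceptual step is the parity bookkeeping needed to guarantee the gadgets suffice.

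First I would record a parity observation. By the antisymmetry condition~(1), setting $b=a$ gives $f(a,a) = -f(a,a)$, so $f(a,a)=0$ for every $a\in\mathcal{C}$. Now apply condition~(2) with the choice $c=d=a$: this forces $f(a,b)+f(a,a) = f(a,b)$ to be even for every pair $a,b$. Hence every value $f(a,b)$ is an \emph{even} integer. This is exactly the parity that a margin $D(a,b)$ can take when the number of voters is even, since $D(a,b) = N(a,b)-N(b,a) \equiv n \pmod 2$.

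Next I would describe the gadget. Writing $m=|\mathcal{C}|$ and fixing an enumeration $x_1, \dots, x_{m-2}$ of $\mathcal{C}\setminus\{a,b\}$, consider the two linear orders
$$v_1 = a \succ b \succ x_1 \succ x_2 \succ \cdots \succ x_{m-2}, \qquad v_2 = x_{m-2} \succ \cdots \succ x_2 \succ x_1 \succ a \succ b.$$
In both votes $a$ is ranked above $b$, so adding $\{v_1,v_2\}$ raises $D(a,b)$ by $2$. For every other unordered pair, the two votes rank the pair in opposite orders: any two of the $x_i$ are reversed between $v_1$ and $v_2$, while each of $a,b$ precedes all the $x_i$ in $v_1$ but follows them in $v_2$. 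Thus every contribution other than to $D(a,b)$ cancels, and all remaining margins are unchanged.

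Finally I would assemble the profile: for each unordered pair $\{a,b\}$, orient it so that $f(a,b)\ge 0$ and include $f(a,b)/2$ copies of the gadget for the ordered pair $(a,b)$, a nonnegative integer by the parity observation. The gadgets for one pair set $D(a,b)=f(a,b)$ without disturbing any other margin, so the combined profile realizes $f$ exactly. The total number of votes is $\sum_{\{a,b\}} 2\cdot(|f(a,b)|/2) = \sum_{\{a,b\}} |f(a,b)|$, which is $O\!\left(\sum_{\{a,b\}} |f(a,b)|\right)$ and is even since each $|f(a,b)|$ is even. The only step worth checking with care—and the crux of the argument—is the cancellation claim for the gadget, namely that every pair other than $(a,b)$ really does appear in opposite orders across $v_1$ and $v_2$; once that verification is in hand, the remainder is routine bookkeeping.
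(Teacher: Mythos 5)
Your proof is correct; in fact the paper never proves this lemma at all, invoking it as McGarvey's classical theorem via citation, and your argument is exactly the standard McGarvey construction: the diagonal substitution $c=d=a$ in condition (2) (combined with $f(a,a)=0$ from condition (1)) forcing every value of $f$ to be even, the two-vote gadget $a \succ b \succ x_1 \succ \cdots \succ x_{m-2}$ and $x_{m-2} \succ \cdots \succ x_1 \succ a \succ b$ that raises $D(a,b)$ by exactly $2$ while cancelling on every other pair, and the resulting even vote count $\sum_{\{a,b\}}|f(a,b)|$. All steps check out, including the crucial cancellation verification for the gadget, so your write-up supplies a complete proof of the result the paper only cites.
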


We have following intractability result for the \textsc{Strong Manipulation} problem for the Copeland$^\alpha$ rule with one manipulator and at most $10$ undetermined pairs per vote.

\begin{theorem}\label{sm_copeland_hard}
\textsc{Strong Manipulation} is \coNPH for Copeland$^\alpha$ voting rule for every $\alpha\in[0,1]$ even when we have only one manipulator and the number of undetermined pairs in each vote is no more than $10$. 
\end{theorem}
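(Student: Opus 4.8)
The plan is to reduce from $\overline{\text{X3C}}$, which is \coNP-complete since X3C is \NPC. Given an X3C instance $(\UU,\SS)$ with $|\UU|=m$ and $|S_i|=3$, I will build a one-manipulator \SM instance for Copeland$^\alpha$ that is a \textsc{Yes} instance if and only if $(\UU,\SS)$ has \emph{no} exact cover. The adversary in \SM is the choice of extension, so I want: a bad extension (one in which $c$ is not the unique Copeland$^\alpha$ winner) exists if and only if some subfamily of $\SS$ is an exact cover. The combinatorial heart of the construction is the identity that for any $\TT\subseteq\SS$, writing $\mathrm{cov}=|\bigcup_{S\in\TT}S|$ and $\mathrm{sel}=|\TT|$, one has $\mathrm{cov}-\mathrm{sel}\le \nfrac{2m}{3}$, with equality if and only if $\TT$ is an exact cover (if $\mathrm{sel}<\nfrac{m}{3}$ then $\mathrm{cov}\le 3\,\mathrm{sel}$ forces $\mathrm{cov}-\mathrm{sel}<\nfrac{2m}{3}$, while if $\mathrm{sel}\ge\nfrac{m}{3}$ then $\mathrm{cov}-\mathrm{sel}\le m-\mathrm{sel}\le\nfrac{2m}{3}$). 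I will engineer the election so that $c$'s score in the extension that selects $\TT$ equals $B-(\mathrm{cov}-\mathrm{sel})$ for a fixed constant $B$, so that $c$'s score is minimized precisely at exact covers.

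The candidates are $c$, one candidate $u_j$ per element, one ``set candidate'' $g_i$ per set, a fixed rival $w$, and a pool of dummy candidates. First I use \Cref{thm:mcgarvey} to install a baseline of complete votes realizing prescribed pairwise margins: every dummy is made a non-contender (losing almost all pairwise contests, hence having low Copeland$^\alpha$ score), the rival $w$ is given pairwise outcomes with margins at least $3$ in absolute value so that its Copeland$^\alpha$ score equals a target $T$ that no single extra vote can perturb, and the ``decisive'' contests of $c$ against the $u_j$ and $g_i$ are set close (margin $\pm1$). I choose $T$ to equal $B-\nfrac{2m}{3}$, the value of $c$'s score at an exact cover. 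I keep all decisive margins \emph{odd}, so no decisive contest is ever a pairwise tie; this is exactly what makes the argument uniform in $\alpha$, since $\alpha$ only scores ties and all ties in the construction live in fixed background contests that are absorbed into $B$ and $T$.

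The core is a per-set gadget, realized by a partial vote for each $S_i$ with at most $10$ undetermined pairs, encoding one binary choice ``$S_i$ selected or not'' and \emph{coupling} coverage to a cardinality penalty: resolving it the ``selected'' way must simultaneously (i) flip $c$ below each of the three $u_j\in S_i$ (contributing to $\mathrm{cov}$) and (ii) make $c$ beat $g_i$ (contributing $+1$ to $\mathrm{sel}$). Baseline margins are tuned so that $c$ beats $u_j$ iff $u_j$ is covered zero times (so $c$'s wins among the $u_j$ equal $m-\mathrm{cov}$) and $c$ beats $g_i$ iff $S_i$ is selected (so $c$'s wins among the $g_i$ equal $\mathrm{sel}$); together with the fixed contributions from $w$, the dummies, and the background this yields $c$'s Copeland$^\alpha$ score $=B-(\mathrm{cov}-\mathrm{sel})$. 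The manipulator has a single vote; folding the ``best'' manipulator vote ($c$ on top, $w$ at the bottom) into the baseline makes these the margins that maximize $c$'s standing, and the large margins around $w$ and the dummies guarantee no other manipulator vote can change their scores.

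Correctness follows in both directions. If $(\UU,\SS)$ has no exact cover, then in every extension $\mathrm{cov}-\mathrm{sel}\le\nfrac{2m}{3}-1$, so $c$'s score is at least $T+1$, strictly above $w$ and (by the dummy bounds) above everyone else; hence with the manipulator voting $c$ first, $c$ is the unique winner in every extension and the instance is \textsc{Yes}. Conversely, if some $\TT$ is an exact cover, its extension gives $c$ score exactly $T$, tying $w$; since any manipulator vote can only lower $c$'s score relative to the $c$-first vote while leaving $w$'s robust score at $T$, $c$ is never the unique winner in this extension, so the instance is \textsc{No}. The main obstacle is precisely this gadget: I must guarantee it is a \textbf{clean} binary switch within the $10$-pair budget \textbf{and} that no \emph{partial} resolution of its undetermined pairs lets the adversary decouple coverage from the penalty (covering the three elements of $S_i$ while dodging the $+1$ to $c$ from $g_i$). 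Any such leak would allow $\mathrm{cov}-\mathrm{sel}>\nfrac{2m}{3}$ without an exact cover and destroy the reduction, so the bulk of the work is choosing the fixed intra-gadget relations (via transitivity, and by making every non-extreme placement of $c$ strictly worse for the adversary) so that each linear extension of a set-vote behaves as one of the two intended clean choices.
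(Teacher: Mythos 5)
Your overall plan (reduce from $\overline{\text{X3C}}$ so that ``bad'' extensions correspond exactly to exact covers) is the same as the paper's, and your counting identity --- $\mathrm{cov}-\mathrm{sel}\le\nfrac{2m}{3}$ for every $\TT\subseteq\SS$, with equality precisely at exact covers --- is correct. The fatal problem is the per-set gadget, and it is not a gap that can be filled: the coupling you need is provably unachievable by any partial order. You require every linear extension of the $i$-th partial vote to be of one of two types, ``selected'' ($u_j\succ c$ for all three $u_j\in S_i$, and $c\succ g_i$) or ``not selected'' ($c\succ u_j$ for all three, and $g_i\succ c$), and in particular you must exclude the ``leak'' pattern in which all three $u_j$ \emph{and} $g_i$ sit above $c$. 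But if the underlying partial order $P$ admits both intended types, then $P$ contains neither $c\succ u_j$ (else ``selected'' is infeasible) nor $c\succ g_i$ (else ``not selected'' is infeasible). Adding the relations $u_j\succ c$ (all three $j$) and $g_i\succ c$ to $P$ then creates no cycle: every added edge points into $c$, so a cycle would need a $P$-path from $c$ back up to some $u_j$ or to $g_i$, i.e., exactly one of the relations just excluded. Hence $P\cup\{u_j\succ c,\, g_i\succ c\}$ is acyclic and extends to a linear order, so the leak extension exists. The deeper reason is that orienting an undetermined pair as $u_j\succ c$ can, via transitivity, only force more candidates above $c$ or below $u_j$; it can \emph{never} force $c$ above anything, so a penalty that helps $c$ (making $c$ beat $g_i$) cannot be a forced consequence of coverage. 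With the leak available in every gadget, the adversary covers all of $\UU$ at zero selection cost (assuming, as we may, that $\SS$ covers $\UU$), giving $\mathrm{cov}-\mathrm{sel}=m>\nfrac{2m}{3}$ and pushing $c$'s score to $B-m<T$ in some extension of \emph{every} instance. Thus instances with no exact cover also become \NO instances of \SM, and your claimed equivalence fails.

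The paper's construction exists precisely to dodge this obstruction, via an indirection your design lacks: coverage there is performed not by the $u_j$'s beating $c$ but by an auxiliary candidate $z$ beating the $u_j$'s, where $c\succ z$ is \emph{fixed} in every partial vote, as is the block $d\succ S_i\succ w$. Orienting $z\succ u_j$ (a pair not involving $c$) transitively forces $c\succ u_j\succ w$, so every ``covering'' vote automatically pays the penalty $c\succ w$; the margin $D(c,w)$ is calibrated via \Cref{thm:mcgarvey} so that $w$ defeats $c$ only if $c\succ w$ holds in at most $\nfrac{m}{3}$ of the $t$ partial votes, and since each such vote covers at most three elements, covering everything within that budget forces an exact cover. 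If you wish to salvage your element/set-candidate accounting, the penalty must likewise be a transitive consequence of fixed chains oriented \emph{against} $c$'s rival rather than in $c$'s favor --- for instance, fixing $g_i\succ u_j$ and $c\succ w$ so that $u_j\succ c$ forces $g_i\succ w$, with $w$'s score decreasing in the number of selections --- never the impossible ``coverage forces $c\succ g_i$.'' Two further points even granting a working gadget: you never bound the Copeland scores of the $u_j$'s and $g_i$'s (a covered $u_j$ beats $c$, so it must be prevented from accumulating a winning score; the paper does this with $4t$-margins arranged in a cycle among the $u_j$'s), and $\alpha$-independence is cleanest by forcing an odd total number of votes so that no pairwise ties occur anywhere, as the paper does, rather than tracking which ties are ``background.''
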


\begin{proof}
 We reduce X3C to \SM for Copeland$^\alpha$ rule. Let $(\mathcal{U} = \{u_1, \ldots, u_m\}, \mathcal{S}=\{S_1,S_2, \dots, S_t\})$ is an X3C instance. We assume, without loss of generality, $t$ to be an even integer (if not, replicate any set from $\mathcal{S}$). We construct a corresponding \WM instance for Copeland$^\alpha$ as follows.
 $$ \text{Candidate set } \mathcal{C} = \mathcal{U} \cup \{c, w, z, d\} $$
 Partial votes $\mathcal{P}$:
 $$ \forall i\le t, (\mathcal{U}\setminus S_i) \succ c\succ z\succ d\succ S_i\succ w \setminus \{ \{z, c\} \times (S_i\cup \{d,w\}) \}$$
 Notice that the number of undetermined pairs in every vote in $\PP$ is $10$. Now we add a set $\QQ$ of complete votes with $|\QQ|$ even and $|\QQ|=poly(m,t)$ using \Cref{thm:mcgarvey} to achieve the following margin of victories in pairwise elections. \Cref{fig:sm_copeland_hard} shows the weighted majority graph of the resulting election.
 
 \begin{itemize}
  \item $D_\QQ(d,z) = D_\QQ(z,c) = D_\QQ(c,d) = D_\QQ(w,z) = 4t$
  \item $D_\QQ(u_i,d) = D_\QQ(c,u_i) = 4t ~\forall u_i\in \mathcal{U}$
  \item $D_\QQ(z,u_i) = t ~\forall u_i\in \mathcal{U}$
  \item $D_\QQ(c,w) = t-\frac{2m}{3}-2$
  \item $D_\QQ(u_i, u_{i+1 \pmod* m}) = 4t ~\forall u_i\in \mathcal{U}$
  \item $D_\QQ(a,b)=0$ for every $a,b\in\CC, a\ne b,$ not mentioned above
 \end{itemize}
 
 \begin{figure}[!htbp]
  \begin{center}
   \begin{tikzpicture}
    
  \node[draw,circle] (w) at (10,5) {w};
  \node[draw,circle] (z) at (10,0) {z};
  \node[draw,circle] (c) at (0,0) {c};
  \node[draw,circle] (d) at (0,5) {d};
  \node[draw,circle] (u) at (5,5) {$u_j$};
  \node[draw,circle] (p) at (8,5) {$u_{j+1}$};  
  
  \draw[->] (d) -- node[above,pos=.2] {$4t$} (z);
  \draw[->] (z) -- node[below] {$4t$} (c);
  \draw[->] (c) -- node[left] {$4t$} (d);
  \draw[->] (w) -- node[right] {$4t$} (z);
  \draw[->] (u) -- node[above] {$4t$} (d);
  \draw[->] (c) -- node[above=.2cm] {$4t$} (u);
  \draw[->] (z) -- node[above] {$t$} (u);
  \draw[->] (c) -- node[right=.2cm,pos=.2] {$t-\frac{2m}{3}-2$} (w);
  \draw[->] (u) -- node[above] {$4t$} (p);
   \end{tikzpicture}
  \end{center}
  \caption{Weighted majority graph of the reduced instance in \Cref{sm_copeland_hard}. The weight of all the edges not shown in the figure are $0$. For simplicity, we do not show edges among $\{u_1, \ldots, u_m\}$.}\label{fig:sm_copeland_hard}
 \end{figure}
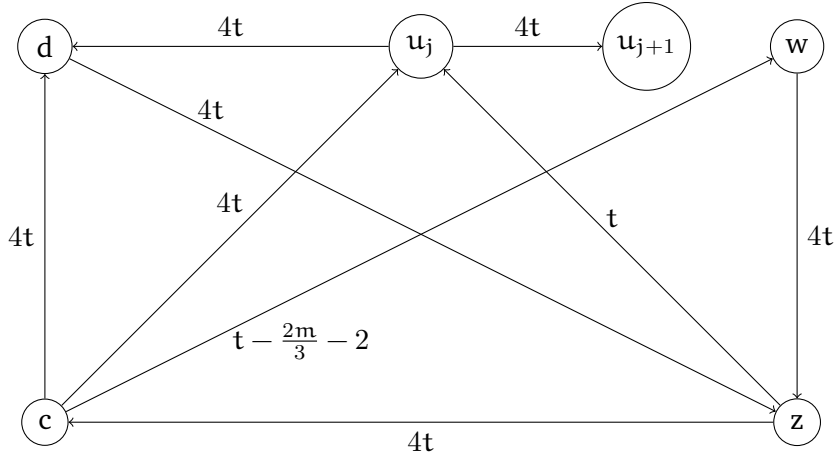

 We have only one manipulator who tries to make $c$ winner. Notice that the number of votes in the \SM instance $(\PP\cup\QQ,1,c)$ including the manipulator's vote is odd (since $|\PP|$ and $|\QQ|$ are even integers). Therefore, $D_{\PP^*\cup\QQ\cup\{v^*\}}(a,b)$ is never zero for every $a,b\in\CC, a\ne b$ in every extension $\PP^*$ of $\PP$ and manipulators vote $v^*$ and consequently the particular value of $\alpha$ does not play any role in this reduction. Hence, we assume, without loss of generality, $\alpha$ to be zero from here on and simply use the term Copeland instead of Copeland$^\alpha$.
 
 Now we show that the X3C instance $(\mathcal{U},\mathcal{S})$ is a \YES instance if and only if the \SM instance $(\PP\cup\QQ,1,c)$ is a \NO instance (a \SM instance is a \NO instance if there does not exist a vote of the manipulator which makes $c$ the unique winner in every extension of the partial votes). We can assume without loss of generality that manipulator puts $c$ at first position and $z$ at last position in her vote $\vvv$.
  
 Assume that the X3C instance is a \YES instance. Suppose (by renaming) that $S_1, \dots, S_{\frac{m}{3}}$ forms an exact set cover. We claim that the following extension $\overline{\PP}$ of $\PP$ makes both $z$ and $c$ Copeland co-winners.
 
 Extension $\overline{\PP}$ of $\mathcal{P}$:
 $$ i\le \frac{m}{3}, (\mathcal{U}\setminus S_i) \succ c\succ z\succ d\succ S_i\succ w $$
 $$ i\ge \frac{m}{3} + 1, (\mathcal{U}\setminus S_i) \succ d\succ S_i\succ w\succ c\succ z $$
 
 We have summarize the pairwise margins between $z$ and $c$ and the rest of the candidates from the profile  $(\overline{\PP}\cup\QQ\cup\vvv)$ in \Cref{tbl:copeland}. The candidates $z$ and $c$ are the co-winners with Copeland score $(m+1)$.
 
 \begin{table}[htbp!]
  \centering
  {
   \renewcommand*{\arraystretch}{1.5}
  \begin{tabular}{|c|c|c|c|c|}\cline{1-2}\cline{4-5}
   $\CC\setminus\{z\}$ & $D_{\overline{\PP}\cup\QQ\cup\vvv}(z, \cdot)$ && $\CC\setminus\{c\}$ & $D_{\overline{\PP}\cup\QQ\cup\vvv}(c, \cdot)$ \\\cline{1-2}\cline{4-5}
   
   $c$ & $\ge 3t$ && $z, u_i\in\UU$ & $\ge 3t$ \\\cline{1-2}\cline{4-5}
   $w, d$ & $\le -3t$ && $w$ & $-1$ \\\cline{1-2}\cline{4-5}
   $u_i\in\UU$ & $1$ && $d$ & $\le -3t$ \\\cline{1-2}\cline{4-5}
  \end{tabular}}
  \caption{$D_{\overline{\PP}\cup\QQ\cup\vvv}(z, \cdot)$ and $D_{\overline{\PP}\cup\QQ\cup\vvv}(c, \cdot)$}\label{tbl:copeland}
 \end{table}
 
 For the other direction, notice that Copeland score of $c$ is at least $m+1$ since $c$ defeats $d$ and every candidate in $\mathcal{U}$ in every extension of $\PP$. Also notice that the Copeland score of $z$ can be at most $m+1$ since $z$ loses to $w$ and $d$ in every extension of $\PP$. Hence the only way $c$ cannot be the unique winner is that $z$ defeats all candidates in $\mathcal{U}$ and $w$ defeats $c$. 
 
 This requires $w\succ c$ in at least $t-\frac{m}{3}$ extensions of $\mathcal{P}$. We claim that the sets $S_i$ in the remaining of the extensions where $c\succ w$ forms an exact set cover for $(\UU,\SS)$. Indeed, otherwise some candidate $u_i\in \mathcal{U}$ is not covered. Then, notice that $u_i \succ z$ in all $t$ votes, making $D(z,u_i) = -1$.
\end{proof}

\subsection{Opportunistic Manipulation}

All our reductions for the \coNPH{}ness for \OM start from $\overline{\text{X3C}}$. We note that all our hardness results  hold even when there is only one manipulator.  Our overall approach is the following. We engineer a set of partial votes in such a way that the manipulator is forced to vote in a limited number of ways to have any hope of making her favorite candidate win. For each such vote, we demonstrate a viable extension where the vote fails to make the candidate a winner, leading to a \NO{} instance of \OM. These extensions rely on the existence of an exact cover. On the other hand, we show that if there is no exact set cover, then there is no viable extension, thereby leading to an instance that is vacuously a \YES{} instance of \OM. Our first result on \OM shows that the \OM problem is \coNPH for the $k$-approval voting rule for constant $k\ge 3$ even when the number of manipulators is one and the number of undetermined pairs in each vote is no more than $15$.

\begin{theorem}\label{thm:kappOM}
 The \OM problem is \coNPH for the $k$-approval voting rule for constant $k\ge 3$ even when the number of manipulators is one and the number of undetermined pairs in each vote is no more than $15$.
\end{theorem}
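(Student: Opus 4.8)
The plan is to reduce from $\overline{\text{X3C}}$, following the generic template sketched just above: from an instance $(\UU,\SS)$ of X3C with $|\UU|=m$ and $\SS=\{S_1,\dots,S_t\}$ I would build a partial profile $\PP$ over a candidate set $\CC$ and designate a candidate $c$ so that $(\UU,\SS)$ is a \YES{} instance of X3C if and only if the resulting single-manipulator \OM instance is a \NO{} instance. Since X3C is \NPC, this yields \coNPH{}ness. The candidate set would be $\CC=\{c\}\cup\UU\cup D$, where $\UU$ doubles as a set of element candidates and $D$ is a pool of dummy candidates. I would use \Cref{score_gen} to append a polynomial number of complete votes that pin the baseline $k$-approval scores to any target vector; this is the main lever for (i) fixing the position, hence the score, of $c$ in every extension, (ii) making each element candidate sit exactly one approval below the score $c$ attains after the manipulator's vote, and (iii) pushing every harmful configuration of the dummies so high or so low that the manipulator's only sensible votes are confined to a constant-size family — approve $c$ and fill the remaining $k-1$ slots from a small prescribed pool. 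This realizes the ``forced to vote in a limited number of ways'' step.

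The combinatorial core is a per-set partial vote. For each $S_i$ I would place the three element candidates of $S_i$ on the top-$k$ approval boundary and leave their order relative to a constant number of boundary candidates undetermined, so that completing the vote amounts to deciding whether the elements of $S_i$ are approved (``$S_i$ not selected'') or pushed just below the threshold (``$S_i$ selected''). As each of the three elements interacts with only a constant number of boundary candidates, the number of undetermined pairs per vote stays at most $15$. I would calibrate the dummy scores from \Cref{score_gen} so that an extension is \emph{viable} — some manipulator vote makes $c$ the unique winner — \emph{exactly} when the sets selected by the extension form an exact cover of $\UU$: covering every element the required number of times is what simultaneously keeps every element candidate below $c$'s post-manipulation score while respecting a tight global score budget, whereas any deficiency or surplus leaves some element tied with or above $c$ under every admissible vote.

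With this calibration the two directions are short. If $(\UU,\SS)$ is a \NO{} instance of X3C, then no selection of sets is an exact cover, so \emph{no} extension is viable, and the \OM instance is vacuously a \YES{} instance. Conversely, if $\TT$ is an exact cover, viable extensions exist, and I would show that for \emph{each} admissible manipulator vote $\vvv$ (necessarily approving $c$ together with $k-1$ candidates from the pool) there is a viable extension built from $\TT$ in which some element candidate ties $c$ under $\vvv$, so $\vvv$ fails to make $c$ the unique winner; as this holds for every $\vvv$, the instance is a \NO{} instance of \OM. The flexibility needed here comes from the slack in $\TT$ — reordering the covered elements below the threshold, or swapping one covering set for another — so that whichever $k-1$ candidates $\vvv$ boosts, the extension can be routed so that a different, unboosted element reaches $c$'s score while a \emph{different} vote still wins (keeping the extension viable).

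The step I expect to be the main obstacle is the score calibration that makes viability coincide \emph{exactly} with the existence of an exact cover while \emph{simultaneously} guaranteeing that no single manipulator vote survives all viable extensions. These two demands pull in opposite directions: the first wants viable extensions to be scarce and rigid, the second wants them plentiful and varied enough to defeat every admissible vote. Reconciling them within the hard constraint of at most $15$ undetermined pairs per vote — so the boundary gadget has just enough configurability, but not so much that spurious viable extensions appear and collapse the equivalence — is the delicate part; the remaining arguments are routine score accounting via \Cref{score_gen}.
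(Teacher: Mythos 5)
Your high-level frame — reduce from X3C so that an exact cover makes the \OM instance a \NO{} instance while the absence of a cover kills all viable extensions and makes it vacuously \YES{}, with \Cref{score_gen} doing the calibration and a per-set boundary gadget keeping the undetermined pairs at $15$ — matches the paper, and your ``no cover $\Rightarrow$ no viable extension'' direction is essentially the paper's argument. The genuine gap is in the other direction, precisely the step you flag as the main obstacle, and the mechanism you sketch for it cannot work. In $k$-approval a manipulator vote $\vvv$ that approves $c$ fails in an extension only if some other candidate ends with final score at least $s(c)+1$, where $s(c)$ is $c$'s score from the truthful votes. If that candidate is \emph{not} approved by $\vvv$, then its score from the truthful votes alone is already at least $s(c)+1$, so \emph{no} manipulator vote can make $c$ the unique winner and the extension is not viable. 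Hence ``the extension can be routed so that a different, unboosted element reaches $c$'s score while a different vote still wins'' is self-contradictory: a viable extension can defeat $\vvv$ only through a candidate that $\vvv$ \emph{itself approves} and that sits at exactly $c$'s pre-manipulation score.

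This forces a design element your construction lacks. You confine the admissible votes to ``approve $c$ and fill the remaining $k-1$ slots from a small prescribed pool'' of dummies whose scores are pushed far down; but then all admissible votes have identical effect on the outcome, so in any viable extension either all of them make $c$ win or none does, and by viability it is the former. A single fixed admissible vote then succeeds in every viable extension, the instance is \YES{} even when a cover exists, and the equivalence collapses. The paper resolves exactly this with a two-candidate switch: besides $\WW$ ($k-3$ dummies) and $d$, it introduces $x,y,z_1,z_2$, leaves the pairs $\{y,x,z_1,z_2\}\times S_i$ together with $(z_1,z_2),(x,z_1),(x,z_2)$ undetermined (that is the count of $15$), and calibrates $s_\QQ(z_1)=s_\QQ(z_2)=s_\QQ(y)=s_\QQ(c)-\nfrac{m}{3}$ and $s_\QQ(x)=s_\QQ(c)-1$. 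Given an exact cover, one completion leaves $z_1$, $x$, $y$ and all the $u_j$ tied with $c$ and $z_2$ one point short, so the unique winning approval set is $\WW\cup\{c,d,z_2\}$; flipping $z_1$ and $z_2$ inside a single vote of the cover yields a second viable extension forcing $\WW\cup\{c,d,z_1\}$ instead. Two viable extensions demanding incompatible votes is what certifies a \NO{} instance; without such a switch pair inside the approval pool, no amount of score calibration on $\{c\}\cup\UU\cup D$ alone can produce it.
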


\begin{proof}
 We reduce $\overline{\text{X3C}}$ to \OM for $k$-approval rule. Let $(\mathcal{U} = \{u_1, \ldots, u_m\}, \mathcal{S}=\{S_1,S_2, \dots, S_t\})$ is an $\overline{\text{X3C}}$ instance. We construct a corresponding \OM instance for $k$-approval voting rule as follows. We begin by introducing a candidate for every element of the universe, along with $k-3$ dummy candidates (denoted by $\WW$), and special candidates $\{c,z_1,z_2, d,x,y\}$. Formally, we have:
 $$ \text{Candidate set } \mathcal{C} = \mathcal{U} \cup \{c, z_1, z_2, d, x, y\} \cup \WW.$$ 
Now, for every set $S_i$ in the universe, we define the following total order on the candidate set, which we denote by $\PP^\prime_i$: 
 $$\WW\succ S_i \succ y \succ z_1 \succ z_2 \succ x \succ (\mathcal{U}\setminus S_i) \succ c \succ d$$ 
 Using $\PP^\prime_i$, we define the partial vote $\PP_i$ as follows: $$\PP_i = \PP^\prime_i \setminus (\{ \{y, x, z_1, z_2\} \times S_i \} \cup \{(z_1, z_2), (x, z_1), (x,z_2)\}).$$ 
 
 We denote the set of partial votes $\{\PP_i: i\in[t]\}$ by $\PP$ and $\{\PP^\prime_i: i\in[t]\}$ by $\PP^\prime$. We remark that the number of undetermined pairs in each partial vote $\PP_i$ is $15$. 
 
 We now invoke Lemma 1 from~\cite{journalsDeyMN16}, which allows to achieve any pre-defined scores on the candidates using only polynomially many additional votes. Using this, we add a set $\QQ$ of complete votes with $|\QQ|=\text{poly}(m,t)$ to ensure the following scores, where we denote the $k$-approval score of a candidate from a set of votes $\VV$ by $s_\VV(\cdot)$: $s_\QQ(z_1) = s_\QQ(z_2) =  s_\QQ(y) = s_\QQ(c) - \nfrac{m}{3}; s_\QQ(d), s_\QQ(w) \le s_\QQ(c) - 2t ~\forall w\in\WW; s_\QQ(x) = s_\QQ(c) -1; s_{\PP^\prime\cup\QQ} (u_j) = s_\QQ(c) + 1 ~\forall j\in[m]$.
 
 Our reduced instance is $(\PP\cup\QQ,1,c)$. The reasoning for this score configuration will be apparent as we argue the equivalence. We first argue that if we had a \YES{} instance of $\overline{\text{X3C}}$ (in other words, there is no exact cover), then we have a \YES{} instance of \OM. It turns out that this will follow from the fact that there are no viable extensions, because, as we will show next, a viable extension implies the existence of an exact set cover.
 
 To this end, first observe that the partial votes are constructed in such a way that $c$ gets no additional score from \textit{any} extension. Assuming that the manipulator approves $c$ (without loss of generality), the final score of $c$ in any extension is going to be  $s_\QQ(c) + 1$. Now, in any viable extension, every candidate $u_j$ has to be ``pushed out'' of the top $k$ positions at least once. Observe that whenever this happens, $y$ is forced into the top $k$ positions. Since $y$ is behind the score of $c$ by only $m/3$ votes, $S_i$'s can be pushed out of place in only $m/3$ votes. For every $u_j$ to lose one point, these votes must correspond to an exact cover. Therefore, if there is no exact cover, then there is no viable extension, showing one direction of the reduction. 
 
 On the other hand, suppose we have a \NO{} instance of $\overline{\text{X3C}}$ -- that is, there is an exact cover. Let $\{S_i: i\in[\nfrac{m}{3}]\}$ forms an exact cover of $\UU$. We will now use the exact cover to come up with two viable extensions, both of which require the manipulator to vote in different ways to make $c$ win. Therefore, there is no single manipulative vote that accounts for both extensions, leading us to a \NO{} instance of \OM. 

 First, consider this completion of the partial votes: 
 $$ i=1, \WW\succ y \succ x \succ z_1 \succ z_2 \succ S_i \succ (\mathcal{U}\setminus S_i) \succ c \succ d$$
 $$ 2\le i\le \nfrac{m}{3}, \WW\succ y \succ z_1 \succ z_2 \succ x \succ S_i \succ (\mathcal{U}\setminus S_i) \succ c \succ d$$
 $$ \nfrac{m}{3} + 1\le i\le t, \WW\succ S_i \succ y \succ z_1 \succ z_2 \succ x \succ (\mathcal{U}\setminus S_i) \succ c \succ d$$
Notice that in this completion, once accounted for along with the votes in $\QQ$, the score of $c$ is tied with the scores of all $u_j$'s, $z_1, x$ and $y$, while the score of $z_2$ is one less than the score of $c$. Therefore, the only $k$ candidates that the manipulator can afford to approve are $\WW$, the candidates $c,d$ and $z_2$. However, consider the extension that is identical to the above except with the first vote changed to:
 $$ \WW\succ y \succ x \succ z_2 \succ z_1 \succ S_i \succ (\mathcal{U}\setminus S_i) \succ c \succ d$$
Here, on the other hand, the only way for $c$ to be an unique winner is if the manipulator approves $\WW, c,d$ and $z_1$. Therefore, it is clear that there is no way for the manipulator to provide a consolidated vote for both these profiles. Therefore, we have a \NO{} instance of \OM.
\end{proof}

We next move on to the $k$-veto voting rule and show that the \OM problem for the $k$-veto is \coNPH for every constant $k\ge 4$ even when the number of manipulators is one and the number of undetermined pairs in each vote is no more than $15$.

\begin{theorem}\label{thm:kvetoOM}
 The \OM problem is \coNPH for the $k$-veto voting rule for every constant $k\ge 4$ even when the number of manipulators is one and the number of undetermined pairs in each vote is no more than $15$.
\end{theorem}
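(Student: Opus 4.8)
The plan is to reduce from $\overline{\text{X3C}}$, following the template of \Cref{thm:kappOM} but \emph{dualising} the whole construction so that the bottom $k$ (vetoed) positions play the role that the top $k$ (approved) positions played for $k$-approval. Writing an $\overline{\text{X3C}}$ instance as $(\UU = \{u_1,\dots,u_m\}, \SS = \{S_1,\dots,S_t\})$, I would take the candidate set $\CC = \UU \cup \{c, z_1, z_2, d, x, y\} \cup \WW$ with $|\WW| = k-4$ dummy candidates (this is where the hypothesis $k \ge 4$ enters), and for each $S_i$ use a partial vote whose vetoed block encodes $S_i$ against the gadget candidates $y, x, z_1, z_2$, of the form
$$\PP_i = \bigl( d \succ c \succ (\UU\setminus S_i) \succ S_i \succ x \succ z_1 \succ z_2 \succ y \succ \WW \bigr) \setminus \bigl(\{y,x,z_1,z_2\}\times S_i \;\cup\; \{(z_1,z_2),(x,z_1),(x,z_2)\}\bigr),$$
so that each vote has exactly $12 + 3 = 15$ undetermined pairs. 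As in \Cref{thm:kappOM}, I would then invoke Lemma~1 of \cite{journalsDeyMN16} to append a polynomial family $\QQ$ of complete votes tuning the background $k$-veto scores: $c$ carries the fewest vetoes, each $u_j$ is placed level with $c$ (so it is tied and must be driven \emph{into} the vetoed block to be beaten), $y$ trails $c$ by exactly $\nfrac{m}{3}$ vetoes, and $x, d$ together with the dummies $\WW$ are pinned level with $c$. The reduced instance is $(\PP\cup\QQ, 1, c)$.

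The two outer halves mirror \Cref{thm:kappOM}. In the forward direction I would show that if no exact cover exists then there is no viable extension, so the instance is a vacuous \YES{} instance of \OM. In the reverse direction, given an exact cover I would build, from that same cover, two extensions differing only by swapping $z_1$ and $z_2$ inside a single designated vote (using the internal pairs $\{(z_1,z_2),(x,z_1),(x,z_2)\}$), argue each is viable, and observe that the \emph{unique} manipulative vote demanded by the first differs from the one demanded by the second in exactly the $z_1$-versus-$z_2$ coordinate; since both forced veto sets have size exactly $k$ there is no spare veto to accommodate the other, no single vote serves both extensions, and the instance is a \NO{} instance of \OM.

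The step I expect to be the genuine obstacle — and where $k$-veto departs from $k$-approval — is the implication \emph{viable extension $\Rightarrow$ exact cover}. For $k$-approval the manipulator can only \emph{add} approvals and is therefore powerless to lower any $u_j$, so the extension alone must knock every $u_j$ out of the top $k$; here the manipulator can \emph{veto}, i.e.\ subtract, and could in principle ``rescue'' several $u_j$'s by vetoing them directly. To block this I would tune $\QQ$ so that the manipulator is forced to spend its entire budget of $k$ vetoes on gadget candidates that sit level with $c$, leaving none for any $u_j$: the $k-2$ candidates $x, d$ and $\WW$ are pinned level with $c$ in every extension and must always be vetoed, while each push of a triple $S_i$ into the vetoed block ejects $y$ (and, through the $z$-orderings, one of $z_1,z_2$) from it, raising it toward $c$; after exactly $\nfrac{m}{3}$ pushes these two reach $c$'s level and must be vetoed as well, accounting for the final two vetoes. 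Thus once $\nfrac{m}{3}$ pushes have occurred the budget is exhausted and no $u_j$ can be rescued, so all $m$ elements must be covered by those $\nfrac{m}{3}$ triples, i.e.\ by an exact cover; and with fewer than $\nfrac{m}{3}$ pushes the number of $u_j$'s left tied with $c$ provably exceeds the at most two vetoes then freed, so such extensions are not viable either. Making these score inequalities simultaneously consistent with the $z_1/z_2$ swap that powers the two conflicting extensions — keeping the forced veto set of size exactly $k$ with no free slot in both — is the delicate part of the argument, and is exactly why the forced veto set must have size $k$ and hence why we require $k \ge 4$.
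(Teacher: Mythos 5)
Your high-level plan is the paper's own (dualise \Cref{thm:kappOM}, force the manipulator's vetoes onto a fixed block, and use a $z_1$-versus-$z_2$ swap across two viable extensions to defeat every single manipulative vote), but the gadget you wrote down does not implement the mechanism you describe, and the defect is fatal rather than cosmetic. In your partial vote the pairs $(x,y)$, $(z_1,y)$, $(z_2,y)$ are \emph{not} deleted, so $y$ is fixed below $x,z_1,z_2$ in every extension; since only the three elements of $S_i$ can ever be placed below $y$, and the vetoed block has exactly four slots above $\WW$, the candidate $y$ sits in the bottom $k$ positions of \emph{every} extension. Pushing a triple $S_i$ into the vetoed block therefore ejects members of $\{x,z_1,z_2\}$ and never $y$, so your central accounting --- ``each push ejects $y$, and after $\nfrac{m}{3}$ pushes $y$ reaches $c$'s level'' --- cannot run at all; the $\nfrac{m}{3}$ cap on pushes, which is what would force an exact cover, never binds. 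The paper places $y$ at the \emph{top} of the vetoed block, immediately below $S_i$ (the block reads $S_i \succ y \succ x \succ z_1 \succ z_2 \succ \WW$), precisely so that $y$ is the first candidate ejected by every push.

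The second fatal point is that you pin $x$ level with $c$. Then $x$ can never be ejected in a viable extension: one ejection puts $x$ at $s(c)+1$, and even a manipulator veto only restores a tie. But in your ordering a full push of a triple ejects all of $x,z_1,z_2$ (the single free slot above $\WW$ is taken by $y$), so full pushes are forbidden; and partial pushes of at most two elements eject only $z$'s, one per pushed element, so covering all of $\UU$ forces at least $m$ ejections onto $z_1,z_2$ combined. No score assignment for $z_1,z_2$ can absorb that many ejections while also capping the number of pushes near $\nfrac{m}{3}$, which your converse direction needs. The upshot is that in your reduced instance there is no viable extension even when an exact cover exists, so a \YES{} instance of X3C maps to a vacuously \YES{} instance of \OM{} --- the wrong answer. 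The paper's construction escapes both problems by (i) the block ordering above, (ii) setting $s(x)=s(c)-2$ rather than $s(c)$, so that exactly one designated vote can eject $x$ in place of $z_2$ --- this is what creates the asymmetry between the two extensions --- and (iii) adding three extra candidates $\AA=\{a_1,a_2,a_3\}$ that are tied with $c$ and unejectable, so that the forced-veto set has size exactly $k-1$ and the single remaining veto is contested between $z_1$ and $z_2$.
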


\begin{proof}
 We reduce X3C to \OM for $k$-veto rule. Let $(\mathcal{U} = \{u_1, \ldots, u_m\}, \mathcal{S}=\{S_1,S_2, \dots, S_t\})$ is an X3C instance. We construct a corresponding \OM instance for $k$-veto voting rule as follows.
 $$ \text{Candidate set } \mathcal{C} = \mathcal{U} \cup \{c, z_1, z_2, d, x, y\} \cup \AA \cup \WW, \text{ where } \AA = \{a_1, a_2, a_3\}, |\WW|=k-4 $$
 For every $i\in[t]$, we define $\mathcal{P}^\prime_i$ as follows:
 $$ \forall i\le t, c \succ \AA \succ (\mathcal{U}\setminus S_i) \succ d \succ S_i \succ y \succ x \succ z_1 \succ z_2\succ \WW$$
 Using $\PP^\prime_i$, we define partial vote $\PP_i = \PP^\prime_i \setminus (\{ \{y, x, z_1, z_2\} \times S_i \} \cup \{(z_1, z_2), (x, z_1), (x,z_2)\})$ for every $i\in[t]$. We denote the set of partial votes $\{\PP_i: i\in[t]\}$ by $\PP$ and $\{\PP^\prime_i: i\in[t]\}$ by $\PP^\prime$. We note that the number of undetermined pairs in each partial vote $\PP_i$ is $15$. Using \Cref{score_gen}, we add a set $\QQ$ of complete votes with $|\QQ|=\text{poly}(m,t)$ to ensure the following. We denote the $k$-veto score of a candidate from a set of votes $\WW$ by $s_\WW(\cdot)$.
 \begin{itemize}
  \item $s_{\PP^\prime\cup\QQ} (z_1) = s_{\PP^\prime\cup\QQ} (z_2) = s_{\PP^\prime\cup\QQ} (c) - \nfrac{m}{3}$
  \item $s_{\PP^\prime\cup\QQ} (a_i) = s_{\PP^\prime\cup\QQ} (u_j) = s_{\PP^\prime\cup\QQ} (w) = s_{\PP^\prime\cup\QQ} (c) ~\forall a_i\in \AA, u_j\in\UU, w\in\WW$
  \item $s_{\PP^\prime\cup\QQ} (y) = s_{\PP^\prime\cup\QQ} (c) - \nfrac{m}{3} - 1$
  \item $s_{\PP^\prime\cup\QQ} (x) = s_{\PP^\prime\cup\QQ} (c) - 2$
 \end{itemize}
 
 We have only one manipulator who tries to make $c$ winner. Now we show that the X3C instance $(\mathcal{U},\mathcal{S})$ is a \YES instance if and only if the \OM instance $(\PP\cup\QQ,1,c)$ is a \NO instance.
 
 In the forward direction, let us now assume that the X3C instance is a \YES instance. Suppose (by renaming) that $S_1, \dots, S_{\nfrac{m}{3}}$ forms an exact set cover. Let us assume that the manipulator's vote $\vvv$ disapproves every candidate in $\WW\cup\AA$ since otherwise $c$ can never win uniquely. We now show that if $\vvv$ does not disapprove $z_1$ then, $\vvv$ is not a $c$-optimal vote. Suppose $\vvv$ does not disapprove $z_1$. Then we consider the following extension $\overline{\PP}$ of $\PP$.
 
 $$ i=1, c \succ \AA \succ (\mathcal{U}\setminus S_i) \succ d \succ y \succ z_1 \succ x \succ z_2 \succ S_i\succ \WW$$
 $$ 2\le i\le \nfrac{m}{3}, c \succ \AA \succ (\mathcal{U}\setminus S_i) \succ d \succ y \succ z_1 \succ z_2 \succ x \succ S_i\succ \WW$$
 $$ \nfrac{m}{3} + 1\le i\le t, c \succ \AA \succ (\mathcal{U}\setminus S_i) \succ d \succ S_i \succ y \succ x \succ z_1 \succ z_2\succ \WW$$
 
 We have the following scores $s_{\overline{\PP}\cup\QQ}(c) = s_{\overline{\PP}\cup\QQ}(z_1) = s_{\overline{\PP}\cup\QQ}(z_2) + 1 = s_{\overline{\PP}\cup\QQ}(x) + 1 = s_{\overline{\PP}\cup\QQ}(u_j) + 1 ~\forall u_j\in\UU$. Hence, both $c$ and $z_1$ win for the votes $\overline{\PP}\cup\QQ\cup\{\vvv\}$. However, the vote $\vvv^\prime$ which disapproves $a_1, a_2, a_3, z_1$ makes $c$ a unique winner for the votes $\overline{\PP}\cup\QQ\cup\{\vvv^\prime\}$. Hence, $\vvv$ is not a $c$-optimal vote. Similarly, we can show that if the manipulator's vote does not disapprove $z_2$ then, the vote is not $c$-optimal. Hence, there does not exist any $c$-optimal vote and the \OM instance is a \NO instance.
 
 In the reverse direction, we show that if the X3C instance is a \NO instance, then there does not exist a vote $\vvv$ of the manipulator and an extension $\overline{\PP}$ of $\PP$ such that $c$ is the unique winner for the votes $\overline{\PP}\cup\QQ\cup\{\vvv^\prime\}$ thereby proving that the \OM instance is vacuously \YES (and thus every vote is $c$-optimal). Notice that, there must be at least $\nfrac{m}{3}$ votes $\PP_1$ in $\overline{\PP}$ where the corresponding $S_i$ gets pushed in bottom $k$ positions since $s_{\PP^\prime\cup\QQ} (u_j) = s_{\PP^\prime\cup\QQ} (c) ~\forall a_i\in \AA, u_j\in\UU$. However, in each vote in $\PP_1$, $y$ is placed within top $m-k$ many position and thus we have $|\PP_1|$ is exactly $\nfrac{m}{3}$ since $s_{\PP^\prime\cup\QQ} (y) = s_{\PP^\prime\cup\QQ} (c) - \nfrac{m}{3} - 1$. Now notice that there must be at least one candidate $u\in\UU$ which is not covered by the sets $S_i$s corresponding to the votes $\PP_1$ because the X3C instance is a \NO instance. Hence, $c$ cannot win the election uniquely irrespective of the manipulator's vote. Thus every vote is $c$-optimal and the \OM instance is a \YES instance.
\end{proof}

We show next similar intractability result for the Borda voting rule too with only at most $7$ undetermined pairs per vote.

\begin{theorem}\label{thm:bordaOM}
 The \OM problem is \coNPH for the Borda voting rule even when the number of manipulators is one and the number of undetermined pairs in every vote is no more than $7$.
\end{theorem}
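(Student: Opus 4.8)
The plan is to reduce from $\overline{\text{X3C}}$, following the template already used in \Cref{thm:kappOM,thm:kvetoOM}: I will engineer the partial profile so that (i) every viable extension forces an \emph{exact} cover, and (ii) whenever an exact cover exists there are two viable extensions demanding incompatible manipulative votes. Concretely, I would take a candidate set of the form $\mathcal{C} = \mathcal{U}\cup\{c,z_1,z_2,d\}\cup\WW$, with $\WW$ a small block of padding candidates, and for each set $S_i=\{s_{i,1},s_{i,2},s_{i,3}\}$ introduce a single partial vote whose only undetermined pairs are the six pairs in $\{z_1,z_2\}\times S_i$ together with the pair $(z_1,z_2)$ --- exactly $7$ undetermined pairs per vote, matching the claimed bound. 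In the fixed part of each such vote the three elements of $S_i$ occupy a consecutive block, and the positions of $c$ and of $\mathcal{U}\setminus S_i$ are chosen so that $c$ receives the \emph{same} Borda score in every extension; the only genuine freedom is how $z_1$ and $z_2$ slide past the block of $S_i$.

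Next I would invoke \Cref{score_gen} to add a complete profile $\QQ$ pinning the base Borda scores. This calibration is the heart of the argument. I want to arrange the scores so that \emph{activating} a set $S_i$ --- extending its partial vote so that $z_1,z_2$ sit above the whole block $S_i$, thereby pushing $s_{i,1},s_{i,2},s_{i,3}$ downward --- is the only mechanism that lowers the element-candidates, while simultaneously raising $z_1$ and $z_2$. I would fix the base scores so that in the unactivated extension each $u_j$ beats $c$ by exactly the amount it loses under one activation, and so that $z_1$ and $z_2$ reach $c$'s score after exactly $m/3$ activations. A viable extension must then activate enough sets to pull every $u_j$ below $c$ (forcing a cover), yet can afford at most $m/3$ activations before $z_1$ or $z_2$ overtakes $c$ (forcing the cover to be exact). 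Consequently, if there is no exact cover there is no viable extension, and the instance is a vacuous \YES{} instance of \OM.

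For the converse, given an exact cover $S_1,\dots,S_{m/3}$, I would exhibit two viable extensions that agree everywhere except on the order of $z_1$ and $z_2$ inside one activated vote: in the first, the residual scores leave $z_1$ as the unique candidate tied with $c$, so $c$ wins uniquely only if the manipulator ranks $z_1$ strictly below $z_2$; in the second, the roles of $z_1$ and $z_2$ are swapped, forcing the opposite order. Since one linear vote cannot realise both $z_2\succ z_1$ and $z_1\succ z_2$, no $(\text{Borda},c)$-opportunistic profile exists, giving a \NO{} instance and completing the equivalence ``X3C is \YES{} $\Leftrightarrow$ \OM{} is \NO''. I expect the calibration to be the main obstacle: because Borda scores vary continuously with position rather than in the all-or-nothing fashion of $k$-approval, I must rule out that partial or over-activations accidentally yield a viable extension, and I must guarantee the manipulator has no slack --- beyond the $z_1$/$z_2$ swap --- to rescue $c$ by reshuffling the remaining candidates. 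Pinning every serious competitor to a score exactly one apart (so the manipulator's optimal vote is the unique greedy descending order except at the $z_1,z_2$ slot) via \Cref{score_gen} is the delicate step that makes the two-extension conflict airtight.
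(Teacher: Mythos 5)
Your high-level template (reduction from X3C, two conflicting extensions whenever an exact cover exists, vacuous \YES{} instance otherwise, manipulator's vote forced up to the $z_1,z_2$ slot) is the same as the paper's, and your forward direction is essentially the paper's. But there is a genuine gap in the reverse direction, located exactly at the step you flag as delicate but do not carry out: ruling out partial activations. In your construction the only undetermined pairs are $\{z_1,z_2\}\times S_i$ and $(z_1,z_2)$, so an extension may slide $z_1$ alone, or $z_2$ alone, or both, above any \emph{suffix} of the block $S_i$ rather than above the whole block. Such a partial slide lowers each passed element by one point at a cost of exactly one point of $z$-budget --- the same coverage-per-budget ratio as a full activation --- so budget counting alone cannot distinguish exact covers from arbitrary covers-with-multiplicity, and your calibration (each $u_j$ exceeds $c$ by what one activation removes; $z_1,z_2$ exhaust their deficits after $m/3$ full activations) does not force exactness. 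Concretely, take $\UU=\{1,\dots,6\}$ and $\SS=\{\{1,2,3\},\{3,4,5\},\{5,6,1\},\{2,4,6\}\}$, with the blocks ordered $1\succ 2\succ 3$, $3\succ 4\succ 5$, $5\succ 6\succ 1$, $2\succ 4\succ 6$: every two sets intersect, so there is no exact cover; yet the extension that fully activates the first vote, slides both $z$'s above the suffix $\{4,5\}$ of the second vote, and above the suffix $\{6\}$ of the fourth vote lowers every $u_j$ by $2$ while giving each of $z_1,z_2$ a gain of exactly $6=m$, i.e.\ precisely the budget your calibration allots to $m/3=2$ activations. This is a viable extension of a \NO{} instance of X3C, so your claim ``no exact cover $\Rightarrow$ no viable extension'' fails; the instance is then not vacuously \YES{}, and whether some single vote is $c$-optimal over the many resulting viable extensions is exactly what you would need to control --- nothing in the construction does.

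The paper's construction avoids this with the ingredient you dropped: an extra candidate $y$ ranked \emph{first} in every partial vote, with the pairs $\{y\}\times\{z_1,z_2\}$ also left undetermined. The scores are set so that $y$ exceeds $c$ by roughly $m/3$, and the only way to lower $y$ is to move $z_1$ or $z_2$ to the very top of a vote, which forces that $z$ to pass the \emph{entire} block $S_i$ plus $y$; activations thereby become all-or-nothing, and the asymmetric deficits of $z_1$ and $z_2$ (roughly $3m/6$ and $5m/6$, against per-top-placement gains of $4$ and $5$) are calibrated so that the required $m/3$ top placements are barely affordable and no budget is left for partial slides. Ironically, aiming at the stated bound of $7$ undetermined pairs is what led you to delete $y$: the paper's own partial votes use the pairs $(\{y\}\cup S_i)\times\{z_1,z_2\}$ together with $(z_1,z_2)$. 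To repair your proof you would have to reintroduce $y$ or an equivalent mechanism forcing full slides, at which point the argument becomes the paper's.
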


\begin{proof}
 We reduce X3C to \OM for the Borda rule. Let $(\mathcal{U} = \{u_1, \ldots, u_m\}, \mathcal{S}=\{S_1,S_2, \dots, S_t\})$ is an X3C instance. Without loss of generality we assume that $m$ is not divisible by $6$ (if not, then we add three new elements $b_1, b_2, b_3$ to $\UU$ and a set $\{b_1, b_2, b_3\}$ to $\SS$). We construct a corresponding \OM instance for the Borda voting rule as follows.
 $$ \text{Candidate set } \mathcal{C} = \mathcal{U} \cup \{c, z_1, z_2, d, y\} $$
 For every $i\in[t]$, we define $\mathcal{P}^\prime_i$ as follows:
 
 $$ \forall i\le t, y \succ S_i \succ z_1 \succ z_2 \succ (\mathcal{U}\setminus S_i) \succ d \succ c$$
 
 Using $\PP^\prime_i$, we define partial vote $\PP_i = \PP^\prime_i \setminus (\{ (\{y\} \cup S_i) \times \{z_1, z_2\} \} \cup \{(z_1, z_2)\})$ for every $i\in[t]$. We denote the set of partial votes $\{\PP_i: i\in[t]\}$ by $\PP$ and $\{\PP^\prime_i: i\in[t]\}$ by $\PP^\prime$. We note that the number of undetermined pairs in each partial vote $\PP_i$ is $7$. Using \Cref{score_gen}, we add a set $\QQ$ of complete votes with $|\QQ|=\text{poly}(m,t)$ to ensure the following. We denote the Borda score of a candidate from a set of votes $\WW$ by $s_\WW(\cdot)$.
 
 \begin{itemize}
  \item $s_{\PP^\prime\cup\QQ} (y) = s_{\PP^\prime\cup\QQ} (c) + m + \nfrac{m}{3} + 3$
  \item $s_{\PP^\prime\cup\QQ} (z_1) = s_{\PP^\prime\cup\QQ} (c) - 3\lfloor\nfrac{m}{6}\rfloor - 2$
  \item $s_{\PP^\prime\cup\QQ} (z_2) = s_{\PP^\prime\cup\QQ} (c) - 5\lfloor\nfrac{m}{6}\rfloor - 3$
  \item $s_{\PP^\prime\cup\QQ} (u_i) = s_{\PP^\prime\cup\QQ} (c) + m + 5 - i ~\forall i\in[m]$
  \item $s_{\PP^\prime\cup\QQ} (d) \le s_{\PP^\prime\cup\QQ} (c) - 5m$
 \end{itemize}
 
 We have only one manipulator who tries to make $c$ winner. Now we show that the X3C instance $(\mathcal{U},\mathcal{S})$ is a \YES instance if and only if the \OM instance $(\PP\cup\QQ,1,c)$ is a \NO instance. Notice that we can assume without loss of generality that the manipulator places $c$ at the first position, $d$ at the second position, the candidate $u_i$ at $(m+5-i)^{th}$ position for every $i\in[m]$, and $y$ at the last position, since otherwise $c$ can never win uniquely irrespective of the extension of $\PP$ (that it, the manipulator's vote looks like $c\succ d \succ \{z_1, z_2\} \succ u_m \succ u_{m-1} \succ \cdots \succ u_1 \succ y$). 
 
 In the forward direction, let us now assume that the X3C instance is a \YES instance. Suppose (by renaming) that $S_1, \dots, S_{\nfrac{m}{3}}$ forms an exact set cover. Let the manipulator's vote $\vvv$ be $c \succ d \succ z_1 \succ z_2 \succ u_m \succ \cdots \succ u_1 \succ y$. We now argue that $\vvv$ is not a $c$-optimal vote. The other case where the manipulator's vote $\vvv^\prime$ be $c \succ d \succ z_2 \succ z_1 \succ u_m \succ \cdots \succ u_1 \succ y$ can be argued similarly. We consider the following extension $\overline{\PP}$ of $\PP$.
 
 $$ 1\le i\le \lfloor\nfrac{m}{6}\rfloor, z_2 \succ y \succ S_i \succ z_1 \succ (\mathcal{U}\setminus S_i) \succ d \succ c $$
 $$ \lceil\nfrac{m}{6}\rceil\le i\le \nfrac{m}{3}, z_1 \succ y \succ S_i \succ z_2 \succ (\mathcal{U}\setminus S_i) \succ d \succ c $$
 $$ \nfrac{m}{3} + 1\le i\le t, y \succ S_i \succ z_1 \succ z_2 \succ (\mathcal{U}\setminus S_i) \succ d \succ c$$
 
 We have the following Borda scores $s_{\overline{\PP}\cup\QQ\cup\{\vvv\}}(c) = s_{\overline{\PP}\cup\QQ\cup\{\vvv\}}(y) + 1 = s_{\overline{\PP}\cup\QQ\cup\{\vvv\}}(z_2) + 6 = s_{\overline{\PP}\cup\QQ\cup\{\vvv\}}(z_1) = s_{\overline{\PP}\cup\QQ\cup\{\vvv\}}(u_i) + 1 ~\forall i\in[m]$. Hence, $c$ does not win uniquely for the votes $\overline{\PP}\cup\QQ\cup\{\vvv\}$. However, $c$ is the unique winner for the votes $\overline{\PP}\cup\QQ\cup\{\vvv^\prime\}$. Hence, there does not exist any $c$-optimal vote and the \OM instance is a \NO instance.

 In the reverse direction, we show that if the X3C instance is a \NO instance, then there does not exist a vote $\vvv$ of the manipulator and an extension $\overline{\PP}$ of $\PP$ such that $c$ is the unique winner for the votes $\overline{\PP}\cup\QQ\cup\{\vvv^\prime\}$ thereby proving that the \OM instance is vacuously \YES (and thus every vote is $c$-optimal). Notice that the score of $y$ must decrease by at least $\nfrac{m}{3}$ for $c$ to win uniquely. However, in every vote $v$ where the score of $y$ decreases by at least one in any extension $\overline{\PP}$ of $\PP$, at least one of $z_1$ or $z_2$ must be placed at top position of the vote $v$. However, the candidates $z_1$ and $z_2$ can be placed at top positions of the votes in $\overline{\PP}$ at most $\nfrac{m}{3}$ many times while ensuring $c$ does not lose the election. Also, even after manipulator places the candidate $u_i$ at $(m+5-i)^{th}$ position for every $i\in[m]$, for $c$ to win uniquely, the score of every $u_i$ must decrease by at least one. Hence, altogether, there will be exactly $\nfrac{m}{3}$ votes (denoted by the set $\PP_1$) in any extension of $\PP$ where $y$ is placed at the second position. However, since the X3C instance is a \NO instance, the $S_i$s corresponding to the votes in $\PP_1$ does not form a set cover. Let $u\in\UU$ be an element not covered by the $S_i$s corresponding to the votes in $\PP_1$. Notice that the score of $u$ does not decrease in the extension $\overline{\PP}$ and thus $c$ does not win uniquely irrespective of the manipulator's vote. Thus every vote is $c$-optimal and thus the \OM instance is a \YES instance. Thus every vote is $c$-optimal and the \OM instance is a \YES instance.
\end{proof}

For the maximin voting rule, we show intractability of \OM with one manipulator even when the number of undetermined pairs in every vote is at most $8$.

\begin{theorem}\label{thm:maximinOM}
 The \OM problem is \coNPH for the maximin voting rule even when the number of manipulators is one and the number of undetermined pairs in every vote is no more than $8$.
\end{theorem}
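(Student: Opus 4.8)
The plan is to reduce from X3C, establishing that a given instance $(\UU,\SS)$ is a \YES instance of X3C if and only if the constructed \OM instance is a \NO instance; since this is exactly a many-to-one reduction from $\overline{\text{X3C}}$, it establishes that \OM is \coNPH. The construction follows the template used for Borda in \Cref{thm:bordaOM}: introduce the universe candidates $\UU=\{u_1,\dots,u_m\}$, the intended winner $c$, two ``threshold'' candidates $z_1,z_2$, and auxiliary candidates $d,y$. For each set $S_i$ I build a partial vote whose only undetermined pairs are the orderings between the three members of $S_i$ and the pair $\{z_1,z_2\}$, together with the order of $z_1,z_2$ themselves, so that at most $8$ pairs are unspecified per vote. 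All remaining pairwise margins are then fixed by a set $\QQ$ of complete votes produced via McGarvey's construction (\Cref{thm:mcgarvey}); exactly as in \Cref{sm_copeland_hard}, I arrange the total number of voters (including the single manipulator) to be odd, so that no pairwise margin is ever zero and comparisons of maximin scores resolve strictly.

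The margins in $\QQ$ are chosen so that, in the canonical profile $\PP^\prime\cup\QQ$, three properties hold. First, $c$ defeats $d$ and every $u_j$ by a margin large enough that no single manipulator ballot can reverse the comparison of maximin scores, and $c$'s own maximin score is pinned by a matchup that no extension can disturb. Second, in the canonical completion each $u_j$ sits \emph{above} $z_1,z_2$ in every vote of a set containing it, giving $u_j$ a high binding margin; to drive $u_j$'s maximin score below that of $c$ one must push $u_j$ beneath both $z_1$ and $z_2$ in some such vote, but each time this is done the scores of $z_1,z_2$ rise, and the calibration permits this in at most $\nfrac{m}{3}$ votes before $z_1$ or $z_2$ overtakes $c$. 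Hence all $m$ candidates $u_j$ can be suppressed simultaneously only when the covering sets form an exact cover. Third, $z_1$ and $z_2$ are calibrated so that, in such a cover extension, each sits exactly one vote below $c$'s maximin score with binding matchup the mutual comparison $D(z_1,z_2)$; the manipulator's single ballot $\vvv$ can therefore lower the maximin score of at most one of $z_1,z_2$ --- whichever she ranks below the other.

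For the reverse direction (X3C a \NO instance), I argue that no extension is viable: since no exact cover exists, in every extension the $\nfrac{m}{3}$ budget is insufficient to push all $u_j$ below $z_1,z_2$, so some $u_j$ retains a maximin score at least that of $c$, and the gap is wide enough that even the best manipulator ballot cannot close it; thus $c$ is never the unique maximin winner and the \OM instance is vacuously a \YES instance. For the forward direction (X3C a \YES instance), I take an exact cover and build two viable extensions that are identical except that the roles of $z_1$ and $z_2$ are interchanged in one cover-vote. In the first extension the only ballots making $c$ the unique winner rank $z_1\succ z_2$ (forcing $D(z_2,z_1)$ to be the small binding margin of $z_2$); in the second they all require $z_2\succ z_1$. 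No single ballot satisfies both, so no $(r,c)$-opportunistic vote exists and the instance is a \NO instance of \OM.

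The main obstacle is precisely that the maximin score is a \emph{minimum} over pairwise margins: I must guarantee that for $c$, for each $u_j$, and for $z_1,z_2$, the matchup I intend to be decisive is genuinely the one attaining the minimum across every extension under consideration, and that neither the extension's choices nor the manipulator's $\pm 1$ perturbations accidentally promote a different matchup to binding status. Placing $z_1$ and $z_2$ simultaneously on the knife's edge --- each beatable by $c$ only through the shared comparison $(z_1,z_2)$ --- while keeping the $\nfrac{m}{3}$ covering budget intact is the delicate calibration the McGarvey weights must realize, and the odd-voter parity trick is what guarantees these near-ties break strictly in exactly one direction.
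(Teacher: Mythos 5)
Your overall strategy (reduce from X3C so that an exact cover yields viable extensions demanding incompatible manipulator ballots, while no cover means no viable extension at all) is exactly the paper's strategy, but your central gadget is mathematically impossible, and the impossibility comes from the very feature of maximin you flag as the ``main obstacle.'' You want a cover extension in which \emph{both} $z_1$ and $z_2$ sit exactly one vote below $c$'s maximin score with binding matchup the mutual comparison. By antisymmetry $D(z_1,z_2)=-D(z_2,z_1)$, so if both of these margins equal $s_c-1$ then $s_c-1=0$; this is ruled out whichever electorate you count over (margins have fixed parity equal to that of the number of voters, which your own odd-parity setup exploits), and it would anyway give $c$ a strictly positive maximin score, i.e., make $c$ a Condorcet winner, so that \emph{every} ballot placing $c$ on top succeeds and the intended dichotomy disappears. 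The natural repairs also fail. If $z_1$ and $z_2$ instead have binding matchups against two \emph{different} opponents, one ballot can rank both opponents above both $z$'s and suppress the two simultaneously, so no forced trade-off remains. And your fallback of two viable extensions ``identical except that the roles of $z_1$ and $z_2$ are interchanged in one cover-vote'' cannot work quantitatively: swapping one vote changes $D(z_1,z_2)$ by exactly $2$, whereas moving from an extension where $z_2$'s weak margin $D(z_2,z_1)\approx s_c$ to one where $z_1$'s weak margin $D(z_1,z_2)\approx s_c$ requires a swing of $2|s_c|$. A secondary gap of the same flavor: your budget argument (``each suppression of $u_j$ raises the scores of $z_1,z_2$'') has no force under maximin, since raising a matchup that is not the minimum does not raise a maximin score; that counting is sound for Borda, whose scores are additive, which is why the template does not transfer.

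The paper escapes the antisymmetry trap by using \emph{three} candidates $z_1,z_2,z_3$: the McGarvey votes (\Cref{thm:mcgarvey}) install a Condorcet cycle $D(z_1,z_2)=D(z_2,z_3)=D(z_3,z_1)=4t+2$, and one additional partial vote $\ppp$ leaves the three mutually unordered. Any linear order the manipulator casts must reverse at least one cycle edge, i.e., must help exactly one cyclic ``victim'' in its binding matchup; the adversary then completes $\ppp$ so that precisely this helped candidate rises to tie with $c$, while a ballot reversing a different edge would have made $c$ the unique winner in that same extension --- hence no ballot is $c$-optimal. The exact-cover machinery is also routed differently from yours: the undetermined pairs are $(\{x\}\cup S_i)\times\{d,y\}$, and the budget lives in $d$'s own binding matchup against $x$ --- suppressing $u_j$ requires $d\succ u_j$, which by transitivity (since $S_i\succ x$ is fixed) forces $d\succ x$ and raises $d$'s minimum margin, so at most $m/3$ votes can do this before $d$ ties with $c$. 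If you want to salvage your outline, you must replace the two-candidate knife-edge by an odd cycle of this kind; no two-candidate version of the gadget exists.
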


\begin{proof}
 We reduce X3C to \OM for the maximin rule. Let $(\mathcal{U} = \{u_1, \ldots, u_m\}, \mathcal{S}=\{S_1,S_2, \dots, S_t\})$ is an X3C instance. We construct a corresponding \OM instance for the maximin voting rule as follows.
 $$ \text{Candidate set } \mathcal{C} = \mathcal{U} \cup \{c, z_1, z_2, z_3, d, x, y\} $$
 For every $i\in[t]$, we define $\mathcal{P}^\prime_i$ as follows:
 
 $$ \forall i\le t, S_i \succ x \succ d \succ y \succ (\mathcal{U}\setminus S_i) \succ z_1 \succ z_2 \succ z_3$$
 
 Using $\PP^\prime_i$, we define partial vote $\PP_i = \PP^\prime_i \setminus (\{ (\{x\} \cup S_i) \times \{d, y\} \})$ for every $i\in[t]$. We denote the set of partial votes $\{\PP_i: i\in[t]\}$ by $\PP$ and $\{\PP^\prime_i: i\in[t]\}$ by $\PP^\prime$. We note that the number of undetermined pairs in each partial vote $\PP_i$ is $8$. We define another partial vote $\ppp$ as follows.
 
 $$ \ppp = (z_1 \succ z_2 \succ z_3 \succ \text{ others }) \setminus \{(z_1, z_2), (z_2, z_3), (z_1, z_3)\} $$
 
 Using \Cref{thm:mcgarvey}, we add a set $\QQ$ of complete votes with $|\QQ|=\text{poly}(m,t)$ to ensure the following pairwise margins (notice that the pairwise margins among $z_1, z_2,$ and $z_3$ does not include the partial vote $\ppp$). \Cref{fig:maximinOM} shows the weighted majority graph of the resulting election.
 
 \begin{itemize}
  \item $D_{\PP^\prime\cup\QQ\cup\{\ppp\}} (d, c) = 4t + 1$
  \item $D_{\PP^\prime\cup\QQ\cup\{\ppp\}} (x, d) = 4t + \nfrac{2m}{3} + 1$
  \item $D_{\PP^\prime\cup\QQ\cup\{\ppp\}} (y, x) = 4t - \nfrac{2m}{3} + 1$
  \item $D_{\PP^\prime\cup\QQ\cup\{\ppp\}} (d, u_j) = 4t - 1 ~\forall u_j\in\UU$
  \item $D_{\PP^\prime\cup\QQ} (z_1, z_2) = D_{\PP^\prime\cup\QQ} (z_2, z_3) = D_{\PP^\prime\cup\QQ} (z_3, z_1) = 4t + 2$
  \item $|D_{\PP^\prime\cup\QQ\cup\{\ppp\}} (a, b)| \le 1$ for every $a, b\in \CC$ not defined above.
 \end{itemize}

 \begin{figure}[!htbp]
  \begin{center}
   \begin{tikzpicture}
    
  \node[draw,circle] (x) at (5,0) {x};
  \node[draw,circle] (y) at (10,0) {y};
  \node[draw,circle] (c) at (0,0) {c};
  \node[draw,circle] (d) at (0,5) {d};
  \node[draw,circle] (u) at (5,5) {$u_j$};
  \node[draw,circle] (z1) at (10,5) {$z_1$};
  \node[draw,circle] (z2) at (10,2.5) {$z_2$};
  \node[draw,circle] (z3) at (13,5) {$z_3$};
  
  \draw[->] (d) -- node[left] {$4t+1$} (c);
  \draw[->] (x) -- node[left,pos=.3] {$4t+\nfrac{2m}{3}+1$} (d);
  \draw[->] (y) -- node[below] {$4t-\nfrac{2m}{3}+1$} (x);
  \draw[->] (d) -- node[above] {$4t-1$} (u);
  \draw[dashed,->] (z1) -- node[left] {$4t+2$} (z2);
  \draw[dashed,->] (z2) -- node[right] {$4t+2$} (z3);
  \draw[dashed,->] (z3) -- node[above] {$4t+2$} (z1);
   \end{tikzpicture}
  \end{center}
  \caption{Weighted majority graph of the reduced instance in \Cref{thm:maximinOM}. Solid line and dashed line represent pairwise margins in $\PP^\prime\cup\QQ\cup\{\ppp\}$ and $\PP^\prime\cup\QQ$ respectively. The weight of all the edges not shown in the figure are within $-1$ to $1$. For simplicity, we do not show edges among $\{u_1, \ldots, u_m\}$.}\label{fig:maximinOM}
 \end{figure}
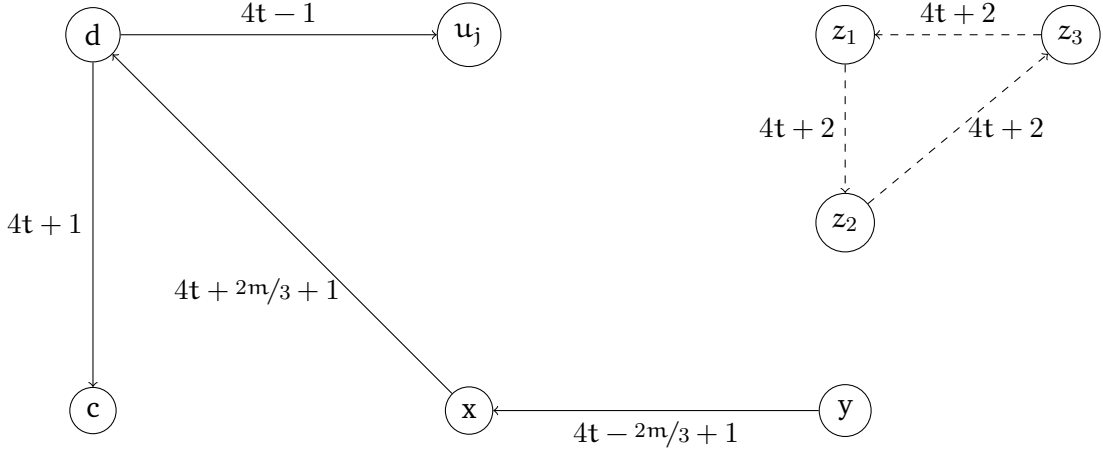

 We have only one manipulator who tries to make $c$ winner. Now we show that the X3C instance $(\mathcal{U},\mathcal{S})$ is a \YES instance if and only if the \OM instance $(\PP\cup\QQ\cup\{\ppp\},1,c)$ is a \NO instance. Notice that we can assume without loss of generality that the manipulator's vote prefers $c$ to every other candidate, $y$ to $x$, $x$ to $d$, and $d$ to $u_j$ for every $u_j\in\UU$.
 
 In the forward direction, let us now assume that the X3C instance is a \YES instance. Suppose (by renaming) that $S_1, \dots, S_{\nfrac{m}{3}}$ forms an exact set cover. Notice that the manipulator's vote  must prefer either $z_2$ to $z_1$ or $z_1$ to $z_3$ or $z_3$ to $z_2$. We show that if the manipulator's vote $\vvv$ prefers $z_2$ to $z_1$, then $\vvv$ is not a $c$-optimal vote. The other two cases are symmetrical. Consider the following extension $\overline{\PP}$ of $\PP$ and $\overline{\ppp}$ of $\ppp$.
 
 $$ 1\le i\le \nfrac{m}{3}, d \succ y \succ S_i \succ x \succ (\mathcal{U}\setminus S_i) \succ z_1 \succ z_2 \succ z_3 $$
 $$ \nfrac{m}{3} + 1 \le i\le t, S_i \succ x \succ d \succ y \succ (\mathcal{U}\setminus S_i) \succ z_1 \succ z_2 \succ z_3 $$
 $$ \overline{\ppp} = z_2 \succ z_3 \succ z_1 \succ \text{ others } $$
 
 From the votes in $\overline{\PP}\cup\QQ\cup\{\vvv, \overline{\ppp}\}$, the maximin score of $c$ is $-4t$, of $d, x, u_j ~\forall u_j\in\UU$ are $-4t-2$, of $z_1, z_3$ are at most than $-4t-2$, and of $z_2$ is $-4t$. Hence, $c$ is not the unique maximn winner. However, the manipulator's vote $c \succ z_1 \succ z_2 \succ z_3 \succ \text{ other }$ makes $c$ the unique maximin winner. Hence, $\vvv$ is not a $c$-optimal vote.
 
 For the reverse direction, we show that if the X3C instance is a \NO instance, then there does not exist a vote $\vvv$ of the manipulator and an extension $\overline{\PP}$ of $\PP$ such that $c$ is the unique winner for the votes $\overline{\PP}\cup\QQ\cup\{\vvv^\prime\}$ thereby proving that the \OM instance is vacuously \YES (and thus every vote is $c$-optimal). Consider any extension $\overline{\PP}$ of $\PP$. Notice that, for $c$ to win uniquely, $y \succ x$ must be at least $\nfrac{m}{3}$ of the votes in $\overline{\PP}$; call these set of votes $\PP_1$. However, $d\succ x$ in every vote in $\PP_1$ and $d\succ x$ can be in at most $\nfrac{m}{3}$ votes in $\overline{\PP}$ for $c$ to win uniquely. Hence, we have $|\PP_1| = \nfrac{m}{3}$. Also for $c$ to win, each $d\succ u_j$ must be at least one vote of $\overline{\PP}$ and $d\succ u_j$ is possible only in the votes in $\PP_1$. However, the sets $S_i$s corresponding to the votes in $\PP_1$ does not form a set cover since the X3C instance is a \NO instance. Hence, there must exist a $u_j\in \UU$ for which $u_j\succ d$ in every vote in $\overline{\PP}$ and thus $c$ cannot win uniquely irrespective of the vote of the manipulator. Thus every vote is $c$-optimal and the \OM instance is a \YES instance.
\end{proof}

Our next result proves that the \OM problem is \coNPH for the Copeland$^\alpha$ voting rule too for every $\alpha\in[0,1]$ even with one manipulator and at most $8$ undetermined pairs per vote.

\begin{theorem}\label{thm:copelandOM}
 The \OM problem is \coNPH for the Copeland$^\alpha$ voting rule for every $\alpha\in[0,1]$ even when the number of manipulators is one and the number of undetermined pairs in each vote is no more than $8$.
\end{theorem}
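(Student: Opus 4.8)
The plan is to reduce from $\overline{\text{X3C}}$, exactly in the spirit of our other \OM hardness proofs, by combining the three-candidate cyclic gadget of \Cref{thm:maximinOM} with the parity trick of \Cref{sm_copeland_hard}. Starting from an X3C instance $(\mathcal{U},\mathcal{S})$ with $t$ even, I would take a candidate set $\mathcal{C} = \mathcal{U} \cup \{c, z_1, z_2, z_3, d, \dots\}$, introduce one partial vote $\PP_i$ per set $S_i$ in which the only undetermined pairs involve the three elements of $S_i$ together with a small pivot block (so that the count stays at most $8$), and add one further partial vote $\ppp$ that leaves the order among $z_1, z_2, z_3$ unspecified. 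I would then invoke \Cref{thm:mcgarvey} to realize a prescribed weighted majority graph with a polynomial-size complete profile $\QQ$, arranging that the total number of voters \emph{including} the single manipulator is odd. As in \Cref{sm_copeland_hard}, oddness guarantees that no pairwise margin is ever zero in any extension, so no two candidates ever tie; hence the $\alpha$-dependent term in the Copeland$^\alpha$ score never contributes, and it suffices to argue for plain Copeland, which establishes the statement uniformly for every $\alpha\in[0,1]$.

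The weighted majority graph would be engineered so that $c$ defeats every $u_j\in\mathcal{U}$ and the auxiliary candidate $d$ in \emph{every} extension, pinning a large baseline Copeland score for $c$, while the margins among $z_1, z_2, z_3$ form a $3$-cycle that arises from $\QQ$ alone (and not from $\ppp$), mirroring \Cref{thm:maximinOM}. The design target is that $c$ can be made the unique winner only if, in at least $\nfrac{m}{3}$ of the votes $\PP_i$, the elements of $S_i$ are pushed below the pivot so that each $u_j$ loses one pairwise contest; since the pivot can absorb only $\nfrac{m}{3}$ such displacements, the sets realizing them must form an exact cover of $\mathcal{U}$.

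For the forward direction, where X3C is a \YES instance with an exact cover $S_1, \dots, S_{\nfrac{m}{3}}$, I would exhibit two viable extensions that complete the cyclic gadget in two incompatible ways --- completing $\ppp$ as, say, $z_2 \succ z_3 \succ z_1$ in one extension and as $z_3 \succ z_1 \succ z_2$ in the other. Because of the $3$-cycle, the unique-winner condition for $c$ in the first extension forces one particular tie-break among $z_1, z_2, z_3$ in the manipulator's ballot, while the second extension forces a different one; no single manipulator vote can satisfy both, so there is no $(r,c)$-opportunistic profile and the \OM instance is a \NO instance. For the reverse direction, where X3C is a \NO instance, I would show that \emph{no} extension is viable: absent an exact cover, some $u_j$ is never displaced below the pivot, so its Copeland score stays at least that of $c$ regardless of the manipulator's vote, and $c$ can never be made the unique winner; the \OM instance is then vacuously a \YES instance.

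The main obstacle will be the discrete, combinatorial nature of the Copeland$^\alpha$ score: unlike a weighted rule, a single flipped pairwise contest changes a candidate's score by exactly one, so the margins in the majority graph and the various thresholds --- the $\nfrac{m}{3}$ displacement budget on the pivot, and the offsets governing $c$'s contests with $d$ and the $z_i$ --- must be balanced precisely so that viability is equivalent to the existence of an exact cover, and so that any viable extension pins down the manipulator's tie-break among $z_1, z_2, z_3$ uniquely. Getting these margins to interlock while keeping each $\PP_i$ to at most $8$ undetermined pairs and preserving the oddness of the electorate is the delicate part; the cyclic gadget of \Cref{thm:maximinOM} and the parity argument of \Cref{sm_copeland_hard} are precisely the tools that make it go through.
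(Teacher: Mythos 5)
Your high-level architecture is indeed the same as the paper's: a reduction built on X3C, partial votes tied to the sets $S_i$, a floating partial vote $\ppp$ over $\{z_1,z_2,z_3\}$ whose completions interact with a majority-graph $3$-cycle realized by McGarvey's lemma (\Cref{thm:mcgarvey}), and an odd electorate so that $\alpha$ is irrelevant. However, your majority-graph design has a genuine flaw in the orientation of the edges between $c$ and $\UU$. You require simultaneously that (i) $c$ defeats every $u_j$ and $d$ in \emph{every} extension, and (ii) an undisplaced $u_j$ keeps Copeland score at least that of $c$, so that a missing cover blocks viability. These are incompatible: by (i), $c$'s score is at least $m+1$ in every extension, while each $u_j$ loses to $c$ and the candidates of $\UU$ play a round-robin among themselves, so their internal victories sum to exactly $\binom{m}{2}$; with only constantly many candidates outside $\UU$, the average score of a $u_j$ is $\nfrac{m}{2}+O(1)$, so it is impossible for all (or even most) $u_j$ to have score at least $m+1$. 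Since the uncovered element in your reverse direction can be an arbitrary $u_j$, the reduction needs \emph{every} $u_j$ to be a potential threat, which your design cannot deliver; hence ``no exact cover $\Rightarrow$ no viable extension'' fails. The paper resolves precisely this by orienting the edges the other way: $D(u_j,c)=2$ for all $j$, the undetermined pairs are $(\{x\}\cup S_i)\times\{c,y\}$, and a viable extension must lift $c$ above $S_i$ in enough votes to flip all of these contests; the budget of $\nfrac{m}{3}$ lifts is enforced through $D(x,y)=\nfrac{2m}{3}$, because $y\succ c$ is fixed in each partial vote, so every lift of $c$ also forces $y$ above $x$. Then, absent a cover, some $u_j$ beats $c$ in every extension, capping $c$'s score at $m+3$ while some $z_k$ always attains at least $m+3$, so no extension is viable.

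Your forward direction is also too weak as stated. With the natural cycle margins ($D(z_1,z_2)=D(z_2,z_3)=D(z_3,z_1)=1$ before $\ppp$), a completion of $\ppp$ placing $z_a$ on top zeroes out exactly one pairwise margin, so the unique-winner condition for $c$ constrains only one pair in the manipulator's ballot, not a full ordering: your completion $z_2\succ z_3\succ z_1$ forces only $z_1\succ z_2$, and $z_3\succ z_1\succ z_2$ forces only $z_2\succ z_3$. These two constraints are simultaneously satisfied by the ballot ordering $z_1\succ z_2\succ z_3$, so your two extensions produce no contradiction. You need either all three rotations (whose induced constraints $z_1\succ z_2$, $z_2\succ z_3$, $z_3\succ z_1$ are cyclic and hence unsatisfiable) or, as the paper does, fix an arbitrary manipulator ordering of the $z$'s and exhibit a completion aligned with it that lifts one $z_k$ into a tie with $c$, while a different vote on that same extension makes $c$ win uniquely. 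Finally, a small parity slip: with $t$ even, the electorate $\PP\cup\QQ\cup\{\ppp\}$ plus the manipulator has size $t+|\QQ|+2$, which is even since McGarvey's profile $\QQ$ has even size; you need $t$ odd (duplicate a set) for the ``$\alpha$ plays no role'' argument to go through.
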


\begin{proof}
 We reduce X3C to \OM for the Copeland$^\alpha$ voting rule. Let $(\mathcal{U} = \{u_1, \ldots, u_m\}, \mathcal{S}=\{S_1,S_2, \dots, S_t\})$ is an X3C instance. We construct a corresponding \OM instance for the Copeland$^\alpha$ voting rule as follows.
 $$ \text{Candidate set } \mathcal{C} = \mathcal{U} \cup \{c, z_1, z_2, z_3, d_1, d_2, d_3, x, y\} $$
 For every $i\in[t]$, we define $\mathcal{P}^\prime_i$ as follows:
 
 $$ \forall i\le t, S_i \succ x \succ y \succ c \succ \text{ others}$$
 
 Using $\PP^\prime_i$, we define partial vote $\PP_i = \PP^\prime_i \setminus (\{ (\{x\} \cup S_i) \times \{c, y\} \})$ for every $i\in[t]$. We denote the set of partial votes $\{\PP_i: i\in[t]\}$ by $\PP$ and $\{\PP^\prime_i: i\in[t]\}$ by $\PP^\prime$. We note that the number of undetermined pairs in each partial vote $\PP_i$ is $8$. We define another partial vote $\ppp$ as follows.
 
 $$ \ppp = (z_1 \succ z_2 \succ z_3 \succ \text{ others }) \setminus \{(z_1, z_2), (z_2, z_3), (z_1, z_3)\} $$
 
 Using \Cref{thm:mcgarvey}, we add a set $\QQ$ of complete votes with $|\QQ|=\text{poly}(m,t)$ to ensure the following pairwise margins (notice that the pairwise margins among $z_1, z_2,$ and $z_3$ does not include the partial vote $\ppp$). \Cref{fig:copelandOM} shows the weighted majority graph of the resulting election.
 
 \begin{itemize}
  \item $D_{\PP^\prime\cup\QQ\cup\{\ppp\}} (u_j, c) = 2 ~\forall u_j\in\UU$
  \item $D_{\PP^\prime\cup\QQ\cup\{\ppp\}} (x, y) = \nfrac{2m}{3}$
  \item $D_{\PP^\prime\cup\QQ\cup\{\ppp\}} (c, y) = D_{\PP^\prime\cup\QQ\cup\{\ppp\}} (x, c) = D_{\PP^\prime\cup\QQ\cup\{\ppp\}} (d_i, c) = D_{\PP^\prime\cup\QQ\cup\{\ppp\}} (z_k,c) = D_{\PP^\prime\cup\QQ\cup\{\ppp\}} (u_j, x) = D_{\PP^\prime\cup\QQ\cup\{\ppp\}} (x, z_k) = D_{\PP^\prime\cup\QQ\cup\{\ppp\}} (d_i, x) = D_{\PP^\prime\cup\QQ\cup\{\ppp\}} (y, u_j) = D_{\PP^\prime\cup\QQ\cup\{\ppp\}} (d_i, y) = D_{\PP^\prime\cup\QQ\cup\{\ppp\}} (y, z_k) = D_{\PP^\prime\cup\QQ\cup\{\ppp\}} (z_k, u_j) = D_{\PP^\prime\cup\QQ\cup\{\ppp\}} (u_j, d_i) = D_{\PP^\prime\cup\QQ\cup\{\ppp\}} (z_k, d_1) = D_{\PP^\prime\cup\QQ\cup\{\ppp\}} (z_k, d_2) = D_{\PP^\prime\cup\QQ\cup\{\ppp\}} (d_3, z_k) =  4t ~\forall i,k\in[3], j\in[m]$
  
  \item $D_{\PP^\prime\cup\QQ\cup\{\ppp\}} (u_j, u_\el) = -4t$ for at least $\nfrac{m}{3}$ many $u_\el\in\UU$  
  \item $D_{\PP^\prime\cup\QQ} (z_1, z_2) = D_{\PP^\prime\cup\QQ} (z_2, z_3) = D_{\PP^\prime\cup\QQ} (z_3, z_1) = 1$
  \item $|D_{\PP^\prime\cup\QQ\cup\{\ppp\}} (a, b)| \le 1$ for every $a, b\in \CC$ not defined above.
 \end{itemize}

 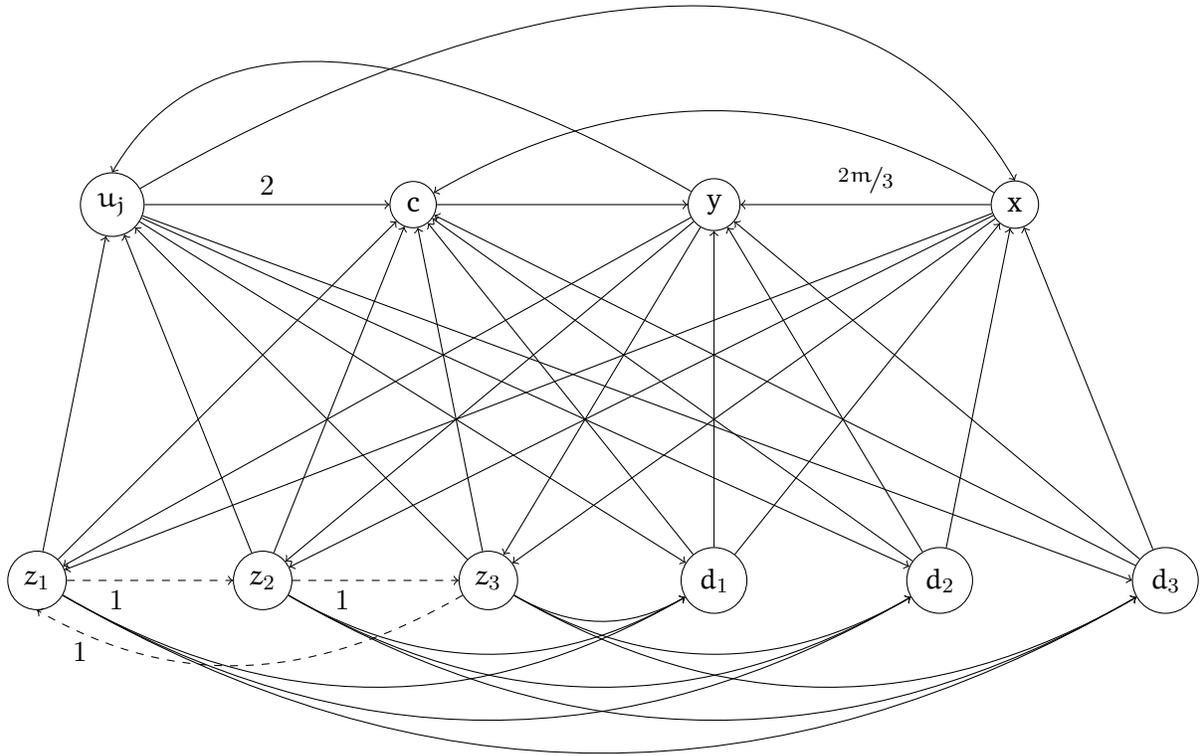
\begin{figure}[!htbp]
  \begin{center}
   \begin{tikzpicture}
    
  \node[draw,circle] (x) at (10,5) {x};
  \node[draw,circle] (y) at (6,5) {y};
  \node[draw,circle] (c) at (2,5) {c};
  \node[draw,circle] (u) at (-2,5) {$u_j$};
  
  \node[draw,circle] (z1) at (-3,0) {$z_1$};
  \node[draw,circle] (z2) at (0,0) {$z_2$};
  \node[draw,circle] (z3) at (3,0) {$z_3$};
  
  \node[draw,circle] (d1) at (6,0) {$d_1$};
  \node[draw,circle] (d2) at (9,0) {$d_2$};
  \node[draw,circle] (d3) at (12,0) {$d_3$};
  
  \draw[->] (u) -- node[above] {$2$} (c);
  \draw[->] (x) -- node[above] {$\nfrac{2m}{3}$} (y);
  \draw[->] (c) -- node[above] {} (y);
  \draw[->] (x) to [out=150,in=30] node[above] {} (c);
  
  \draw[->] (d1) -- node[right,pos=.2] {} (c);
  \draw[->] (d2) -- node[right,pos=.2] {} (c);
  \draw[->] (d3) -- node[right,pos=.2] {} (c);
  \draw[->] (z1) -- node[right,pos=.2] {} (c);
  \draw[->] (z2) -- node[right,pos=.2] {} (c);
  \draw[->] (z3) -- node[right,pos=.2] {} (c);
  
  \draw[->] (u) to [out=30,in=120] node[above] {} (x.north);
  \draw[->] (x) -- node[right,pos=.3] {} (z1);
  \draw[->] (x) -- node[right,pos=.3] {} (z2);
  \draw[->] (x) -- node[right,pos=.3] {} (z3);
  \draw[->] (d1) -- node[right,pos=.1] {} (x);
  \draw[->] (d2) -- node[right,pos=.1] {} (x);
  \draw[->] (d3) -- node[right,pos=.1] {} (x);
  \draw[->] (y) to [out=150,in=60] node[above] {} (u.north);
  \draw[->] (d1) -- node[right,pos=.1] {} (y);
  \draw[->] (d2) -- node[right,pos=.1] {} (y);
  \draw[->] (d3) -- node[right,pos=.1] {} (y);
  \draw[->] (y) -- node[right,pos=.3] {} (z1);
  \draw[->] (y) -- node[right,pos=.3] {} (z2);
  \draw[->] (y) -- node[right,pos=.3] {} (z3);
  \draw[->] (z1) -- node[right,pos=.2] {} (u);
  \draw[->] (z2) -- node[right,pos=.2] {} (u);
  \draw[->] (z3) -- node[right,pos=.2] {} (u);
  \draw[->] (u) -- node[right,pos=.3] {} (d1);
  \draw[->] (u) -- node[right,pos=.3] {} (d2);
  \draw[->] (u) -- node[right,pos=.3] {} (d3.west);
  \draw[->] (z1) to [out=-30,in=-150] node[right,pos=.3] {} (d1);
  \draw[->] (z1) to [out=-30,in=-150] node[right,pos=.3] {} (d2);
  \draw[->] (z1) to [out=-30,in=-150] node[right,pos=.3] {} (d3);
  \draw[->] (z2) to [out=-30,in=-150] node[right,pos=.3] {} (d1);
  \draw[->] (z2) to [out=-30,in=-150] node[right,pos=.3] {} (d2);
  \draw[->] (z2) to [out=-30,in=-150] node[right,pos=.3] {} (d3);
  \draw[->] (z3) to [out=-30,in=-150] node[right,pos=.3] {} (d1);
  \draw[->] (z3) to [out=-30,in=-150] node[right,pos=.3] {} (d2);
  \draw[->] (z3) to [out=-30,in=-150] node[right,pos=.3] {} (d3);
  
  \draw[->,dashed] (z1) -- node[below,pos=.3] {$1$} (z2);
  \draw[->,dashed] (z2) -- node[below,pos=.3] {$1$} (z3);
  \draw[->,dashed] (z3) to [out=-150,in=-30] node[below,pos=.9] {$1$} (z1.south);
   \end{tikzpicture}
  \end{center}
  \caption{Weighted majority graph of the reduced instance in \Cref{thm:copelandOM}. Solid line and dashed line represent pairwise margins in $\PP^\prime\cup\QQ\cup\{\ppp\}$ and $\PP^\prime\cup\QQ$ respectively. The weight of all the edges not shown in the figure are within $-1$ to $1$. The weight of all unlabeled edges are $4t$. For simplicity, we do not show edges among $\{u_1, \ldots, u_m\}$.}\label{fig:copelandOM}
 \end{figure}

 We have only one manipulator who tries to make $c$ winner. Now we show that the X3C instance $(\mathcal{U},\mathcal{S})$ is a \YES instance if and only if the \OM instance $(\PP\cup\QQ\cup\{\ppp\},1,c)$ is a \NO instance. Since the number of voters is odd, $\alpha$ does not play any role in the reduction and thus from here on we simply omit $\alpha$. Notice that we can assume without loss of generality that the manipulator's vote prefers $c$ to every other candidate and $x$ to $y$.
 
 In the forward direction, let us now assume that the X3C instance is a \YES instance. Suppose (by renaming) that $S_1, \dots, S_{\nfrac{m}{3}}$ forms an exact set cover. Suppose the manipulator's vote $\vvv$ order $z_1, z_2,$ and $z_3$ as $z_1 \succ z_2\succ z_3$. We will show that $\vvv$ is not a $c$-optimal vote. Symmetrically, we can show that the manipulator's vote ordering $z_1, z_2,$ and $z_3$ in any other order is not $c$-optimal. Consider the following extension $\overline{\PP}$ of $\PP$ and $\overline{\ppp}$ of $\ppp$.
 
 $$ 1\le i\le \nfrac{m}{3}, y \succ c \succ S_i \succ x \succ \text{others} $$
 $$ \nfrac{m}{3} + 1 \le i\le t, S_i \succ x \succ y \succ c \succ \text{others} $$
 $$ \overline{\ppp} = z_1 \succ z_2 \succ z_3 \succ \text{others } $$
 
 From the votes in $\overline{\PP}\cup\QQ\cup\{\vvv, \overline{\ppp}\}$, the Copeland score of $c$ is $m+4$ (defeating $y, z_k, u_j ~\forall k\in[3], j\in[m]$), of $y$ is $m+3$ (defeating $z_k, u_j ~\forall k\in[3], j\in[m]$), of $u_j$ is at most $\nfrac{2m}{3} + 4$ (defeating $x, d_i ~\forall i\in[3]$ and at most $\nfrac{2m}{3}$ many $u_\el\in\UU$), of $x$ is $5$ (defeating $c, y, z_k ~\forall l\in[3]$), of $d_1, d_2$ is $2$ (defeating $y$ and $c$), of $d_3$ is $5$ (defeating $y, c, z_k~\forall k\in[3]$). of $z_3$ is $m+3$ (defeating $d_i, u_j \forall i\in[3], j\in[m]$) for every $k\in[3]$, of $z_3$ is $m+2$ (defeating $d_1, d_2, u_j i\in[3], j\in[m]$), $z_2$ is $m+3$ (defeating $d_1, d_2, z_3, u_j i\in[3], j\in[m]$), $z_1$ is $m+4$ (defeating $d_1, d_2, z_2, z_3, u_j i\in[3], j\in[m]$). Hence, $c$ co-wins with $z_1$ with Copeland score $m+4$. However, the manipulator's vote $c\succ z_3\succ z_2 \succ z_1$ makes $c$ win uniquely. Hence, $\vvv$ is not a $c$-optimal vote and thus the \OM instance is a \NO instance.
 
 For the reverse direction, we show that if the X3C instance is a \NO instance, then there does not exist a vote $\vvv$ of the manipulator and an extension $\overline{\PP}$ of $\PP$ such that $c$ is the unique winner for the votes $\overline{\PP}\cup\QQ\cup\{\vvv^\prime\}$ thereby proving that the \OM instance is vacuously \YES (and thus every vote is $c$-optimal). Consider any extension $\overline{\PP}$ of $\PP$. Notice that, for $c$ to win uniquely, $c$ must defeat each $u_j\in\UU$ and thus $c$ is preferred over $u_j$ in at least one vote in $\overline{\PP}$; we call these votes $\PP_1$. However, in every vote in $\PP_1$, $y$ is preferred over $x$ and thus $|\PP_1|\le \nfrac{m}{3}$ because $x$ must defeat $y$ for $c$ to win uniquely. Since the X3C instance is a \NO instance, there must be a candidate $u\in\UU$ which is not covered by the sets corresponding to the votes in $\PP_1$ and thus $u$ is preferred over $c$ in every vote in $\PP$. Hence, $c$ cannot win uniquely irrespective of the vote of the manipulator. Thus every vote is $c$-optimal and the \OM instance is a \YES instance.
\end{proof}

For the Bucklin and simplified Bucklin voting rules, we show intractability of the \OM problem with at most $15$ undetermined pairs per vote and only one manipulator.

\begin{theorem}\label{thm:bucklinOM}
 The \OM problem is \coNPH for the Bucklin and simplified Bucklin voting rules even when the number of manipulators is one and the number of undetermined pairs in each vote is no more than $15$.
\end{theorem}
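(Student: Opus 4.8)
The plan is to reduce from X3C and follow the two-pronged template common to the earlier \OM{} hardness proofs (\Cref{thm:kappOM,thm:bordaOM,thm:maximinOM}): build a partial profile in which (i) an extension is \emph{viable} only if the ``activated'' sets form an exact cover, and (ii) whenever an exact cover exists there are two viable extensions that force the single manipulator into two mutually incompatible votes. With this, the X3C instance is a \YES{} instance exactly when the \OM{} instance is a \NO{} instance, which yields \coNPH{}ness.

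Concretely, I would adapt the $k$-approval construction of \Cref{thm:kappOM} to Bucklin. Take candidates $\UU\cup\{c,x,y,z_1,z_2,d\}\cup\WW$ for a small dummy block $\WW$, and for each set $S_i$ use a partial vote whose undetermined pairs are exactly $\{y,x,z_1,z_2\}\times S_i$ together with the triple $\{(z_1,z_2),(x,z_1),(x,z_2)\}$; this is the same $12+3=15$ undetermined pairs as in \Cref{thm:kappOM}. In place of the additive $k$-approval scores, I would add a set $\QQ$ of complete votes (using the appearance-counting freedom exploited in \Cref{wm_bucklin_hard}) that pins, at a designated Bucklin round $\ell^{\ast}$, the number of top-$\ell^{\ast}$ appearances of $c$, of each $u_j$, and of $y,z_1,z_2$ to the majority boundary, so that any single extra appearance of a $u_j$ (or of $y,z_1,z_2$) at round $\ell^{\ast}$ ties with or overtakes $c$. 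As in \Cref{wm_bucklin_hard}, $c$ can become the unique Bucklin (and simplified Bucklin) winner at round $\ell^{\ast}$ only if the $S_i$-candidates are pushed up in at least $\nfrac{m}{3}$ of the partial votes while $y$ stays within the critical block; because $y$ is only $\nfrac{m}{3}$ appearances behind $c$, this can happen in at most $\nfrac{m}{3}$ votes, and to eliminate every $u_j$ these votes must form an exact cover. This settles the reverse direction: if X3C is a \NO{} instance there is no viable extension, so \OM{} is vacuously a \YES{} instance and every vote is $c$-optimal.

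For the forward direction, given an exact cover $\{S_i : i\in[\nfrac{m}{3}]\}$, I would produce two viable extensions that agree everywhere except on the order of $z_1$ and $z_2$ in the single ``free'' cover-vote: in the first, $z_2$ sits one appearance behind $c$ at round $\ell^{\ast}$ while $z_1$ is tied, forcing the manipulator to keep $z_1$ out of its own top-$\ell^{\ast}$ block (equivalently to rank $z_2$ above $z_1$) if $c$ is to be the unique winner; in the second extension the roles of $z_1$ and $z_2$ are interchanged. Since a single ballot cannot rank $z_1$ below $z_2$ and $z_2$ below $z_1$ at once, no manipulative vote wins in both viable extensions, so no $(r,c)$-opportunistic vote exists and \OM{} is a \NO{} instance. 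The Bucklin and simplified Bucklin cases are separated exactly as in \Cref{wm_bucklin_hard}, by working with the top $m$ versus the top $m+1$ positions.

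The step I expect to be the main obstacle is the calibration in the forward direction: unlike $k$-approval, where a candidate contributes a single additive point, the Bucklin winner is decided by cumulative top-$\ell$ tallies across rounds, so I must place $z_1,z_2$ and choose the offsets of $\QQ$ so that exactly one of $z_1,z_2$ reaches majority at round $\ell^{\ast}$ before $c$ does, and so that this is toggled precisely by whether the manipulator ranks the relevant candidate inside or below its top-$\ell^{\ast}$ block. Making these offsets simultaneously consistent with viability (the exact-cover constraint) and with the $z_1/z_2$ flip, while holding every partial vote at $15$ undetermined pairs, is the delicate bookkeeping; once the offsets are fixed, the equivalence should follow from the same appearance-counting arguments as in \Cref{wm_bucklin_hard,thm:kappOM}.
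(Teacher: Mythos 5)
Your proposal follows essentially the same route as the paper's proof: a reduction from X3C in which a viable extension exists only if the activated sets form an exact cover (so a \NO{} instance of X3C gives a vacuous \YES{} instance of \OM), and, when a cover exists, two viable extensions differing only in the relative placement of $z_1$ and $z_2$ force the single manipulator into mutually incompatible ballots, exactly the paper's forward direction. The ``delicate bookkeeping'' you flag as the main obstacle is precisely what the paper supplies by hand-crafting an explicit block of complete votes (e.g., $\nfrac{t}{2}-\lfloor\nfrac{m}{6}\rfloor-1$ copies of $W\succ z_1\succ z_2\succ x_1\succ c\succ \text{others}$, and similar) that pins every candidate's top-$(m+1)$ appearance count at the majority boundary, after which the equivalence is argued by the same counting you describe.
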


\begin{proof}
 We reduce X3C to \OM for the Bucklin and simplified Bucklin voting rules. Let $(\mathcal{U} = \{u_1, \ldots, u_m\}, \mathcal{S}=\{S_1,S_2, \dots, S_t\})$ is an X3C instance. We assume without loss of generality that $m$ is not divisible by $6$ (if not, we introduce three elements in $\UU$ and a set containing them in $\SS$) and $t$ is an even integer (if not, we duplicate any set in $\SS$). We construct a corresponding \OM instance for the Bucklin and simplified Bucklin voting rules as follows.
 $$ \text{Candidate set } \mathcal{C} = \mathcal{U} \cup \{c, z_1, z_2, x_1, x_2, d\} \cup W, \text{ where } |W|=m-3 $$
 For every $i\in[t]$, we define $\mathcal{P}^\prime_i$ as follows:
 
 $$ \forall i\le t, (\UU\setminus S_i) \succ S_i \succ d \succ x_1 \succ x_2 \succ z_1 \succ z_2 \succ \text{ others}$$
 
 Using $\PP^\prime_i$, we define partial vote $\PP_i = \PP^\prime_i \setminus (\{ (\{d\} \cup S_i) \times \{x_1, x_2, z_1, z_2\} \}\cup \{(z_1, z_2)\})$ for every $i\in[t]$. We denote the set of partial votes $\{\PP_i: i\in[t]\}$ by $\PP$ and $\{\PP^\prime_i: i\in[t]\}$ by $\PP^\prime$. We note that the number of undetermined pairs in each partial vote $\PP_i$ is $15$. We introduce the following additional complete votes $\QQ$:
 \begin{itemize}
  \item $\nfrac{t}{2}-\lfloor\nfrac{m}{6}\rfloor-1$ copies of $W\succ z_1\succ z_2\succ x_1\succ c\succ \text{ others}$
  \item $\nfrac{t}{2}-\lfloor\nfrac{m}{6}\rfloor-1$ copies of $W\succ z_1\succ z_2\succ x_2\succ c\succ \text{ others}$
  \item $2\lceil\nfrac{m}{6}\rceil$ copies of $W\succ z_1\succ z_2\succ d\succ c\succ \text{ others}$
  \item $\lfloor\nfrac{m}{6}\rfloor$ copies of $W\succ z_1\succ d\succ x_1\succ c\succ \text{ others}$
  \item $\lfloor\nfrac{m}{6}\rfloor$ copies of $W\succ z_1\succ d\succ x_2\succ c\succ \text{ others}$
  \item $2\lceil\nfrac{m}{6}\rceil-1$ copies of $\UU\succ x_1\succ \text{ others}$
  \item One $\UU\succ c\succ \text{ others}$
 \end{itemize}
 
 We have only one manipulator who tries to make $c$ winner. Now we show that the X3C instance $(\mathcal{U},\mathcal{S})$ is a \YES instance if and only if the \OM instance $(\PP\cup\QQ,1,c)$ is a \NO instance. The total number of voters in the \OM instance is $2t+\nfrac{2m}{3}+1$. We notice that within top $m+1$ positions of the votes in $\PP^\prime\cup\QQ$, $c$ appears $t+\nfrac{m}{3}$ times, $z_1$ and $z_2$ appear $t+\lfloor\nfrac{m}{6}\rfloor$ times, $x_1$ appears $\nfrac{t}{2}+\nfrac{m}{3}-1$ times, $x_2$ appears $\nfrac{t}{2}-1$ times, every candidate in $W$ appears $t+\nfrac{m}{3}-1$ times, every candidate in $\UU$ appears $t+\nfrac{m}{3}+1$ times. Also every candidate in $\UU$ appears $t+\nfrac{m}{3}+1$ times within top $m$ positions of the votes in $\PP\cup\QQ$. Hence, for both Bucklin and simplified Bucklin voting rules, we can assume without loss of generality that the manipulator puts $c$, every candidate in $W$, $x_1$, $x_2$, and exactly one of $z_1$ and $z_2$.
 
 In the forward direction, let us now assume that the X3C instance is a \YES instance. Suppose (by renaming) that $S_1, \dots, S_{\nfrac{m}{3}}$ forms an exact set cover. Suppose the manipulator's vote $\vvv$ puts $c$, every candidate in $W$, $x_1$, $x_2$, and $z_1$ within top $m+1$ positions. We will show that $\vvv$ is not $c$-optimal. The other case where the manipulator's vote $\vvv^\prime$ puts $c$, every candidate in $W$, $x_1$, $x_2$, and $z_2$ within top $m+1$ positions is symmetrical. Consider the following extension $\overline{\PP}$ of $\PP$:
 $$ 1\le i\le \lfloor\nfrac{m}{6}\rfloor, (\UU\setminus S_i) d \succ x_1 \succ x_2 \succ z_2 \succ S_i \succ \succ z_1 \succ \text{ others} $$
 $$ \lceil\nfrac{m}{6}\rceil \le i\le \nfrac{m}{3}, (\UU\setminus S_i) d \succ x_1 \succ x_2 \succ z_1 \succ S_i \succ \succ z_2 \succ \text{ others} $$
 $$ \nfrac{m}{3}+1 \le i\le t, (\UU\setminus S_i) \succ S_i \succ d \succ x_1 \succ x_2 \succ z_1 \succ z_2 \succ \text{ others}$$
 For both Bucklin and simplified Bucklin voting rules, $c$ co-wins with $z_1$ for the votes in $\PP\cup\QQ\cup\{\vvv\}$. However, $c$ wins uniquely for the votes in $\PP\cup\QQ\cup\{\vvv^\prime\}$. Hence, $\vvv$ is not a $c$-optimal vote and thus the \OM instance is a \NO instance. 
 
 For the reverse direction, we show that if the X3C instance is a \NO instance, then there does not exist a vote $\vvv$ of the manipulator and an extension $\overline{\PP}$ of $\PP$ such that $c$ is the unique winner for the votes $\overline{\PP}\cup\QQ\cup\{\vvv^\prime\}$ thereby proving that the \OM instance is vacuously \YES (and thus every vote is $c$-optimal). Consider any extension $\overline{\PP}$ of $\PP$. Notice that, for $c$ to win uniquely, every candidate must be pushed out of top $m+1$ positions in at least one vote in $\PP$; we call these set of votes $\PP_1$. Notice that, $|\PP_1|\ge \nfrac{m}{3}$. However, in every vote in $\PP_1$, at least one of $z_1$ and $z_2$ appears within top $m+1$ many positions. Since, the manipulator has to put at least one of $z_1$ and $z_2$ within its top $m+1$ positions and $z_1$ and $z_2$ appear $t+\lfloor\nfrac{m}{6}\rfloor$ times in the votes in $\PP^\prime\cup\QQ$, we must have $|\PP_1|\le \nfrac{m}{3}$ and thus $|\PP_1|=\nfrac{m}{3}$, for $c$ to win uniquely. However, there exists a candidate $u\in\UU$ not covered by the $S_i$s corresponding to the votes in $\PP_1$. Notice that $u$ gets majority within top $m$ positions of the votes and $c$ can never get majority within top $m+1$ positions of the votes. Hence, $c$ cannot win uniquely irrespective of the vote of the manipulator. Thus every vote is $c$-optimal and the \OM instance is a \YES instance.
\end{proof}

\section{Polynomial Time Algorithms}\label{sec:poly}

We now turn to the polynomial time cases depicted in~\Cref{table:partial_summary}. This section is organized in three parts, one for each problem considered. 

\subsection{Weak Manipulation}
Since the \PW problem is in \Pb{} for the plurality and the veto voting rules~\cite{betzler2009towards}, it follows from \Cref{pw_hard} that the \WM problem is in \Pb{} for the plurality and veto voting rules for any number of manipulators. 

\begin{proposition}\label{pv_easy}
 The \WM problem is in \Pb{} for the plurality and veto voting rules for any number of manipulators.
\end{proposition}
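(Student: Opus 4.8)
The plan is to derive this as an immediate corollary of \Cref{pw_hard} combined with the known tractability of \PW for these two rules, rather than designing a bespoke algorithm. First I would recall that \Cref{pw_hard} supplies a polynomial-time many-to-one reduction from \WM to \PW valid for \emph{every} voting rule: a \WM instance $(\PP, \el, c)$ is turned into the \PW instance $(\PP \cup \QQ, c)$, where $\QQ$ consists of $\el$ copies of the empty partial vote. The equivalence proved there is exactly the lever we need --- the $\el$ empty votes may be extended to arbitrary linear orders, which faithfully models the manipulators' freedom to cast any $\el$ votes --- so $(\PP, \el, c)$ is a \YES instance of \WM if and only if $(\PP \cup \QQ, c)$ is a \YES instance of \PW.

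Next I would invoke the result of Betzler and Dorn~\cite{betzler2009towards} that \PW is polynomial-time solvable for plurality and veto. Composing the reduction above with this algorithm gives a polynomial-time decision procedure for \WM under both rules. The only point that deserves attention is the dependence on $\el$: since the reduction appends exactly $\el$ empty votes, the number of manipulators enters the constructed \PW instance purely as additional voters, so the instance size, and hence the total running time, stays polynomial in the size of the original \WM input no matter how large $\el$ is. This is precisely what the phrase ``for any number of manipulators'' demands.

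There is no genuinely hard step here; the work is entirely bookkeeping. The two things I would verify are that the reduction is size-preserving and computable in polynomial time (clear, as it only appends $\el$ empty votes) and that the cited \PW algorithm is stated for an arbitrary number of partial votes --- including vacuous ones --- so that it applies verbatim to $\PP \cup \QQ$. Given these, the proposition follows at once.
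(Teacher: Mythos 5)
Your proposal is correct and is essentially identical to the paper's own proof: the paper likewise derives \Cref{pv_easy} by composing the reduction of \Cref{pw_hard} (appending $\el$ empty partial votes) with the polynomial-time \PW algorithms for plurality and veto from~\cite{betzler2009towards}. Your additional remarks on the size and computability of the reduction are just explicit bookkeeping that the paper leaves implicit.
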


\begin{proof}
 The \PW problem is in \Pb{} for the plurality and the veto voting rules~\cite{betzler2009towards}. Hence, the result follows from \Cref{pw_hard}.
\end{proof}

\subsection{Strong Manipulation}

We now discuss our algorithms for the \SM{} problem. The common flavor in all our algorithms is the following: we try to devise an extension that is as adversarial as possible for the favorite candidate $c$, and if we can make $c$ win in such an extension, then roughly speaking, such a strategy should work for other extensions as well (where the situation only improves for $c$). However, it is challenging to come up with an extension that is globally dominant over all the others in the sense that we just described. So what we do instead is we consider every potential nemesis $w$ who might win instead of $c$, and we build profiles that are ``as good as possible'' for $w$ and ``as bad as possible'' for $c$. Each such profile leads us to constraints on how much the manipulators can afford to favor $w$ (in terms of which positions among the manipulative votes are \textit{safe} for $w$). We then typically show that we can determine whether there exists a set of votes that respects these constraints, either by using a greedy strategy or by an appropriate reduction to a flow problem. We note that the overall spirit here is similar to the approaches commonly used for solving the \NW problem, but as we will see, there are non-trivial differences in the details. We begin with the $k$-approval and $k$-veto voting rules.

\begin{theorem}\label{sm_k_easy}
The \SM problem is in \Pb{} for the $k$-approval and $k$-veto voting rules, for any $k$ and any number of manipulators.
\end{theorem}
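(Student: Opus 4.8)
The plan is to exploit the fact that, for additive (scoring) rules, an instance of \SM is a \NO{} instance precisely when \emph{some} rival $w \neq c$ can be made to tie or beat $c$ in \emph{some} extension. Since the final score is a sum over the individual partial votes and the extensions of distinct votes are chosen independently, the adversary's best response against a fixed rival $w$ decomposes vote-by-vote. This lets me treat each rival $w$ separately and, for each $w$, optimize each partial vote on its own. I would first argue that the manipulators' side is essentially forced: for $k$-approval every manipulator should approve $c$ (and hence has $k-1$ further approvals to spend), while for $k$-veto no manipulator should veto $c$ (and hence has $k$ vetoes to spend on the other candidates); deviating from this can only hurt $c$, so it is without loss of generality.

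For $k$-approval, fix a rival $w$ and let $M_w = \max_{\text{ext }\succ\text{ of }\PP}\big(s_\succ(w) - s_\succ(c)\big)$, where $s_\succ$ is the $k$-approval score from the non-manipulator votes in the extension $\succ$. By independence this equals $\sum_i \max\big(\mathbf{1}[w \in \text{top-}k] - \mathbf{1}[c \in \text{top-}k]\big)$, the maximum taken over the linear extensions of the single partial vote $v_i$. The only nontrivial per-vote question is whether $w$ can be pushed into the top $k$ while $c$ is simultaneously kept out; I would settle this using the standard fact that a size-$k$ set $T$ is the top $k$ of some extension iff $T$ is an up-set (filter) of the partial order. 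Thus $w$ can occupy the top $k$ with $c$ excluded iff $c \not\succ w$ is forced, the principal up-set $\{x : x \succeq w\}$ has size at most $k$, and the largest up-set avoiding $c$ has size at least $k$; all of these are checkable in polynomial time, and the remaining per-vote values ($0$ or $-1$) are analogous. If the manipulators approve $w$ in $a_w$ of the $\el$ votes, then $c$ beats $w$ in every extension iff $M_w + a_w - \el < 0$, i.e.\ iff $a_w \le \el - M_w - 1 =: t_w$. Hence a successful manipulative profile exists iff every $t_w \ge 0$ and the $\el(k-1)$ spare approvals can be distributed with $a_w \le \min(\el, t_w)$; this is feasible iff $\sum_{w \neq c}\min(\el, t_w) \ge \el(k-1)$, a Gale--Ryser/flow realizability check I can verify directly.

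The $k$-veto case is dual: reading a veto as $-1$ on the bottom $k$ candidates, I would define $N_w = \max_{\text{ext}}\big(\text{vetoes}_\PP(c) - \text{vetoes}_\PP(w)\big)$, again computed per vote via the reversed (down-set) characterization of which candidate can be forced into or out of the bottom $k$. If the manipulators veto $w$ in $b_w$ of their votes, then $c$ beats $w$ everywhere iff $N_w - b_w < 0$, i.e.\ $b_w \ge N_w + 1$; since each manipulator vetoes exactly $k$ of the $m-1$ non-$c$ candidates, the $\el k$ vetoes must be distributed with $N_w + 1 \le b_w \le \el$, which is feasible iff $\sum_{w \neq c}\max(0, N_w+1) \le \el k$ and each $N_w + 1 \le \el$. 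Both realizability questions are classical bipartite degree-sequence problems solvable greedily or by a single max-flow computation, so the whole procedure runs in polynomial time.

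I expect the main obstacle to be the per-vote optimization, specifically the \emph{simultaneous} requirement of promoting $w$ into and demoting $c$ out of the top $k$ (resp.\ bottom $k$) within a single linear extension of a partial order: the two moves can interact through forced comparabilities, so the clean additive decomposition hinges on the up-set characterization above being exactly right. Once that per-vote primitive is established, the reduction to a capacitated distribution/flow feasibility, and the argument that fixing the manipulators to approve $c$ (resp.\ to spare $c$ from vetoes) is without loss of generality, are routine.
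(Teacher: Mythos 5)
Your proposal is correct and takes essentially the same route as the paper's proof: compute the worst-case deficit $\max\big(s(w)-s(c)\big)$ rival-by-rival and vote-by-vote (using independence of the extensions of distinct partial votes), fix the manipulators' votes to favor $c$ (approve $c$, respectively never veto $c$), and then decide the placement of the remaining candidates in the manipulators' votes by a flow-type feasibility check. The paper is terser---it simply ``checks all four score combinations'' per vote and invokes max-flow---whereas you additionally spell out the up-set characterization of achievable top-$k$ sets and an explicit Gale--Ryser counting condition, but these are refinements of the same argument rather than a different approach.
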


\begin{proof}
 For the time being, we just concentrate on non-manipulators' votes. For each candidate $c^{\prime} \in \mathcal{C}\setminus\{c\}$, calculate the maximum possible value of $s^{max}_{NM}(c,c^{\prime}) = s_{NM}(c^{\prime}) - s_{NM}(c)$ from non-manipulators' votes, where $s_{NM}(a)$ is the score that candidate $a$  receives from the votes of the non-manipulators. This can be done by checking all $4$ possible score combinations that $c$ and $c^{\prime}$ can get in each vote $v$ and choosing the one which maximizes $s_v(c^{\prime}) - s_v(c)$ from that vote. We now fix the position of $c$ at the top position for the manipulators' votes and we check if it is possible to place other candidates in the manipulators' votes such that the final value of $s^{max}_{NM}(c,c^{\prime}) + s_M(c^{\prime}) - s_M(c)$ is negative which can be solved easily by reducing it to the max flow problem which is polynomial time solvable.
\end{proof}

We now prove that the \SM problem for scoring rules is in \Pb{} for one manipulator.

\begin{theorem}\label{sm_sr_easy}
The \SM problem is in \Pb{} for any scoring rule when we have only one manipulator.
\end{theorem}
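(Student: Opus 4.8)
The plan is to reduce the single-manipulator \SM problem for a scoring rule to a bipartite matching between the rival candidates and the positions of the manipulator's vote. First I would argue that we may assume the manipulator ranks $c$ at the top position. Indeed, given any successful manipulative vote, moving $c$ up to the first position only shifts the candidates that were ranked above $c$ one position down; since the score vector satisfies $\alpha_1 \ge \alpha_2 \ge \cdots \ge \alpha_m$, this does not increase the score of any rival and does not decrease the score of $c$, so the modified vote remains successful. Thus the only remaining freedom is how to assign the candidates in $\mathcal{C} \setminus \{c\}$ to the positions $2, \ldots, m$ of the manipulator's vote.

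Next I would treat each rival candidate independently. For each $w \in \mathcal{C}\setminus\{c\}$, define $M(w) = \max_{\overline{\PP}} \left( s_{NM}(w) - s_{NM}(c) \right)$, where the maximum is over all extensions $\overline{\PP}$ of $\PP$ and $s_{NM}(\cdot)$ is the score from the non-manipulators. Because the extensions of distinct partial votes are chosen independently, this maximum separates across votes, so it suffices to compute, for each partial vote, the linear extension maximizing $\alpha_{\mathrm{pos}(w)} - \alpha_{\mathrm{pos}(c)}$; this can be done by trying each of the $O(m^2)$ position pairs for $(w,c)$ and testing via a Hall/matching-type feasibility check whether some linear extension of that partial order realizes those two positions. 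Summing the per-vote optima yields $M(w)$ in polynomial time.

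The observation that makes the decomposition correct is the equivalence: a manipulative vote with $c$ on top and $w$ at position $\sigma(w)$ makes $c$ the unique winner in \emph{every} extension if and only if $\alpha_{\sigma(w)} < \alpha_1 - M(w)$ for every $w \ne c$. The forward direction applies the definition to the extension attaining $M(w)$; for the converse, in any extension the margin of $w$ over $c$ equals $(s_{NM}(w) - s_{NM}(c)) + (\alpha_{\sigma(w)} - \alpha_1) \le M(w) + \alpha_{\sigma(w)} - \alpha_1 < 0$. The worst extension may differ from one rival to another, but this causes no trouble precisely because each rival's margin is bounded by its own $M(w)$ \emph{simultaneously} in any fixed extension.

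Finally, deciding whether a feasible assignment $\sigma$ exists is a bipartite matching problem: put an edge between candidate $w$ and position $p \in \{2, \ldots, m\}$ exactly when $\alpha_p < \alpha_1 - M(w)$, and ask for a perfect matching saturating all rivals, solvable in polynomial time. Since the position scores $\alpha_2 \ge \cdots \ge \alpha_m$ are already sorted, one may equivalently proceed greedily, assigning the candidate with the largest $M(w)$ to the lowest-scoring position and continuing in order, a standard exchange argument showing this succeeds iff any assignment does. The whole procedure runs in polynomial time. The hard part will be twofold: justifying the per-candidate decomposition (that defeating each rival in its own worst extension is enough), and computing $M(w)$ per vote for a general, non-binary score vector, where---unlike the $k$-approval case with only four score combinations---one must optimize the placements of $w$ and $c$ jointly against the partial order.
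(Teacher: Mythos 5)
Your proposal is correct and follows essentially the same route as the paper's proof: fix $c$ at the top of the manipulator's vote, compute for each rival $w$ the worst-case score deficit over all extensions of the non-manipulators' votes (your $M(w)$ is the paper's $s^{max}_{NM}(c,w)$, computed vote-by-vote), and then decide feasibility via a perfect matching in the bipartite graph between rivals and positions $\{2,\ldots,m\}$ with an edge exactly when $M(w) + \alpha_p - \alpha_1 < 0$. The only place you go beyond the paper is in computing $M(w)$ per vote: the paper simply invokes the $k$-approval technique of checking four score combinations, whereas for a general score vector one must, as you correctly note, jointly optimize the positions of $w$ and $c$ against the partial order (e.g., by enumerating the $O(m^2)$ position pairs and checking feasibility of realizing them in a linear extension), so your write-up actually fills in a detail the paper leaves implicit.
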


\begin{proof}
For each candidate $c^{\prime} \in \mathcal{C}\setminus\{c\}$, calculate $s^{max}_{NM}(c,c^{\prime})$ using same technique described in the proof of \Cref{sm_k_easy}. We now put $c$ at the top position of the manipulator's vote. For each candidate $c^{\prime} \in \mathcal{C}\setminus\{c\}$, $c^{\prime}$ can be placed at positions $i\in \{2,\ldots,m\}$ in the manipulator's vote which makes $s^{max}_{NM}(c,c^{\prime}) + \alpha_i - \alpha_1$ negative. Using this, construct 
a bipartite graph with $\mathcal{C}\setminus\{c\}$ on left and $\{2, \dots, m\}$ 
on right and there is an edge between $c^{\prime}$ and $i$ iff the candidate $c^{\prime}$ 
can be placed at $i$ in the manipulator's vote according to the above criteria. 
Now solve the problem by finding existence of perfect matching in this graph.
\end{proof}

Our next result proves that the \SM problem for the Bucklin, simplified Bucklin, Fallback, and simplified Fallback voting rules are in \Pb{}.

\begin{theorem}\label{sm_bucklin_easy}
The \SM problem is in \Pb{} for the Bucklin, simplified Bucklin, Fallback, and simplified Fallback voting rules, for any number of manipulators. 
\end{theorem}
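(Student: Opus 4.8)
The plan is to reduce \SM for these rules to a feasibility question about how the manipulators fill their ballots, once $c$ has been committed to the top. First I would argue that placing $c$ in the first position of all $\el$ manipulative votes is without loss of generality: in any fixed extension of the non-manipulators' partial votes, ranking $c$ first maximises, for every prefix length $r$, the number of votes placing $c$ among the top $r$ candidates, so this is simultaneously optimal for $c$ across all rounds and all extensions. Having fixed $c$ on top, the residual problem is only how to fill positions $2, \dots, m$ of the manipulative votes so that $c$ is the unique (simplified) Bucklin winner in \emph{every} extension of $\PP$.

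Next I would reduce the ``every extension'' requirement to worst-case per-round counts, exactly as in the $k$-approval argument of \Cref{sm_k_easy}. For each candidate $x$ and each round $r\in\{1,\dots,m\}$ I would compute, vote by vote, the extremal number of non-manipulators that can place $x$ within their top $r$ positions: the minimum $\underline{a}_r(c)$ for the favourite $c$ (the adversary delays $c$'s majority) and the maximum $\overline{a}_r(w)$ for each rival $w$ (the adversary hastens $w$'s majority). Since each partial vote is treated independently and only the position of a single candidate relative to the top-$r$ cut matters, these quantities are computable in polynomial time. Because the manipulators all rank $c$ first, $c$ is guaranteed a top-$r$ count of $\underline{a}_r(c)+\el$ in every extension, so the worst-case round at which $c$ attains a majority among the $n=|\PP|+\el$ voters, $r_c := \min\{ r : \underline{a}_r(c)+\el > \nfrac{n}{2}\}$, is fixed once and for all.

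Then I would encode the winning condition as position constraints on the manipulators' lower slots and solve by a flow/matching feasibility test. A rival $w$ placed at position $p$ of a manipulative vote contributes to $w$'s top-$r$ count for every $r\ge p$, so the manipulators' only freedom is which rivals to expose within the top $r_c$ positions, and crucially they may \emph{spread} these exposures across the $\el$ votes rather than exposing the same rival every time. For $c$ to be the unique winner in every extension we need that no rival reaches a majority at any round $r\le r_c$, and, for standard Bucklin, that $c$ strictly dominates every rival at round $r_c$; comparing $\overline{a}_r(w)$ plus the manipulators' contribution against the majority threshold yields, for each rival and each round $r\le r_c$, a bound on how many manipulative votes may rank $w$ within their top $r$. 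I would phrase the simultaneous satisfiability of all these bounds as a flow problem (each vote offering $r-1$ ``top-$r$ slots''), solvable in polynomial time in the spirit of \Cref{sm_sr_easy,sm_k_easy}. The Fallback and simplified Fallback cases follow the same template, with the manipulators approving $c$ and the extra bookkeeping that, when no candidate reaches a majority, the rule falls back to comparing approval counts, whose worst-case values are obtained by the same per-vote computation using the known approval-set sizes.

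The main obstacle I anticipate is the coupling between ``delaying $c$'' and ``advancing a rival'' inside a \emph{single} adversarial extension: since the adversary commits to one extension, I must verify that the separately optimised counts $\underline{a}_r(c)$ and $\overline{a}_r(w)$ are jointly realisable (or argue that any failure to realise them only helps $c$), and that the per-round safe-position constraints for all $r\le r_c$ can be met by one assignment rather than conflicting round by round. Getting the unique-winner (strict) requirement correct, together with the differing tie-break conventions of Bucklin versus simplified Bucklin and the plurality fallback of the Fallback rules, is where the delicacy of the feasibility formulation lies.
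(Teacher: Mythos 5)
Your proposal follows the paper's high-level template --- fix $c$ at the top of all \el manipulative votes, reduce the ``every extension'' requirement to worst-case counts, and decide feasibility of the remaining slots --- but the step you flag as ``the main obstacle I anticipate'' is precisely the step that the paper's proof actually consists of, and your proposal does not resolve it. The separately optimised counts $\underline{a}_r(c)$ and $\overline{a}_r(w)$ are in general \emph{not} jointly realisable: in a partial vote where $c \succ w$ is already decided, the only way to bring $w$ into the top $\ell$ positions may be to also drag $c$ into the top $\ell-1$ positions. Nor is it true that this failure ``only helps $c$'' in an ignorable way: an algorithm that tests against the phantom extension combining both separate worst cases is over-conservative and can return \NO{} on instances whose true answer is \YES{}, while an algorithm that ignores the coupled votes entirely errs in the other direction. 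The paper resolves this by building, for \emph{each pair} $(w,\ell)$, a single joint adversarial extension $\overline{\mathcal{P}}_{w,\ell}$ via a case analysis on the partial votes, isolating the set $\mathcal{P}^*_{w,\ell}$ (of size $t$) of votes in which pushing $w$ into the top $\ell$ positions necessarily pushes $c$ into the top $\ell-1$ positions; the defeat condition $(\star)$ then contains a third, margin-comparison inequality expressing that the adversary profits from spending coupled votes on $w$ only if the deficit $\nfrac{n}{2} - \bigl(d_\ell(w)+\eta_{w,\ell}\bigr)$ is smaller than the slack $\nfrac{n}{2} - d_\ell(c)$. This quantitative treatment of the trade-off is what makes the resulting constraints on $\eta_{w,\ell}$ both sound and complete, and it is absent from your sketch.

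A second, related problem is your fixed round $r_c$. The round at which $c$ first attains a majority varies with the extension, and the adversary chooses the extension; demanding that every rival be blocked up to the single worst-case round $r_c$ in \emph{all} extensions is a strictly stronger requirement than strong manipulation (a rival may harmlessly reach majority at a round $r \le r_c$ in an extension where $c$ happens to reach majority before round $r$). The paper avoids this by quantifying over all pairs $(w,\ell)$ with $\ell \in \{2,\dots,m\}$ and requiring, for each pair separately, that the conjunction $(\star)$ --- ``$w$ attains majority at $\ell$ while $c$ misses majority at $\ell-1$'' --- be unachievable. Your flow formulation could in principle be retargeted to these corrected per-pair constraints (the paper itself uses a simple greedy filling of the manipulators' votes), but as written the constraint system you would feed into it is the wrong one.
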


\begin{proof}
Let $(\mathcal{C},\mathcal{P},M,c)$ be an instance of \SM for simplified Bucklin, and let $m$ denote the total number of candidates in this instance. Recall that the manipulators have to cast their votes so as to ensure that the candidate $c$ wins in every possible extension of $\mathcal{P}$. We use $\mathcal{Q}$ to denote the set of manipulating votes that we will construct. To begin with, without loss of generality, the manipulators place $c$ in the top position of all their votes. We now have to organize the positioning of the remaining candidates across the votes of the manipulators to ensure that $c$ is a necessary winner of the profile $(\mathcal{P}, \mathcal{Q})$.

To this end, we would like to develop a system of constraints indicating the overall number of times that we are free to place a candidate $w \in \mathcal{C} \setminus \{c\}$ among the top $\ell$ positions in the profile $\mathcal{Q}$. In particular, let us fix $w \in \mathcal{C} \setminus \{c\}$ and $2 \leq \ell \leq m$. Let $\eta_{w,\ell}$ be the maximum number of votes of $\mathcal{Q}$ in which $w$ can appear in the top $\ell$ positions. Our first step is to compute necessary conditions for $\eta_{w,\ell}$.

We use $\overline{\mathcal{P}}_{w,\ell}$ to denote a set of complete votes that we will construct based on the given partial votes. Intuitively, these votes will represent the ``worst'' possible extensions from the point of view of $c$ when pitted against $w$. These votes are engineered to ensure that the manipulators can make $c$ win the elections $\overline{\mathcal{P}}_{w,\ell}$ for all $w \in \mathcal{C} \setminus \{c\}$ and $\ell \in \{2,\ldots,m\}$, if, and only if, they can strongly manipulate in favor of $c$. More formally, there exists a voting profile $\mathcal{Q}$ of the manipulators so that $c$ wins the election $\overline{\mathcal{P}}_{w,\ell} \cup \mathcal{Q}$, for all $w \in \mathcal{C} \setminus \{c\}$ and $\ell \in \{2,\ldots,m\}$ if and only if $c$ wins in every extension of the profile $\mathcal{P} \cup \mathcal{Q}$. 

We now describe the profile $\overline{\mathcal{P}}_{w,\ell}$. The construction is based on the following case analysis, where our goal is to ensure that, to the extent possible, we position $c$ out of the top $\ell-1$ positions, and incorporate $w$ among the top $\ell$ positions. 
\begin{itemize}
 \item Let $v \in \mathcal{P}$ be such that either $c$ and $w$ are incomparable or $w \succ c$. We add the complete vote $v^\prime$ to  $\overline{\mathcal{P}}_{w,\ell}$, where $v^\prime$ is obtained from $v$ by placing $w$ at the highest possible position and $c$ at the lowest possible position, and extending the remaining vote arbitrarily. 
 \item Let $v \in \mathcal{P}$ be such that $c \succ w$, but there are at least $\ell$ candidates that are preferred over $w$ in $v$. We add the complete vote $v^\prime$ to  $\overline{\mathcal{P}}_{w,\ell}$, where $v^\prime$ is obtained from $v$ by placing $c$ at the lowest possible position, and extending the remaining vote arbitrarily.  
 \item Let $v \in \mathcal{P}$ be such that $c$ is forced to be within the top $\ell-1$ positions, then we add the complete vote $v^\prime$ to  $\overline{\mathcal{P}}_{w,\ell}$, where $v^\prime$ is obtained from $v$ by first placing $w$ at the highest possible position followed by placing $c$ at the lowest possible position, and extending the remaining vote arbitrarily.
 \item In the remaining votes, notice that whenever $w$ is in the top $\ell$ positions, $c$ is also in the top $\ell-1$ positions. Let $\mathcal{P}^*_{w,\ell}$ denote this set of votes, and let $t$ be the number of votes in $\mathcal{P}^*_{w,\ell}$.
\end{itemize}
We now consider two cases. Let $d_\ell(c)$ be the number of times $c$ is placed in the top $\ell-1$ positions in the profile $\overline{\mathcal{P}}_{w,\ell} \cup \mathcal{Q}$, and let $d_\ell(w)$ be the number of times $w$ is placed in the top $\ell$ positions in the profile $\overline{\mathcal{P}}_{w,\ell}$. Let us now formulate the requirement that in $\overline{\mathcal{P}}_{w,\ell} \cup \mathcal{Q}$, the candidate $c$ does \emph{not} have a majority in the top $\ell-1$ positions and $w$ \emph{does} have a majority in the top $\ell$ positions. Note that if this requirement holds for any $w$ and $\ell$, then strong manipulation is not possible. Therefore, to strongly manipulate in favor of $c$, we must ensure that for every choice of $w$ and $\ell$, we are able to negate the conditions that we derive. 

The first condition from above simply translates to $d_\ell(c) \le \nfrac{n}{2}$. The second condition amounts to requiring first, that there are at least $\nfrac{n}{2}$ votes where $w$ appears in the top $\ell$ positions, that is, $d_\ell(w) + \eta_{w,\ell} + t > \nfrac{n}{2}$. Further, note that the gap between $d_\ell(w)+\eta_{w,\ell}$ and majority will be filled by using votes from $\mathcal{P}^*_{w,\ell}$ to ``push'' $w$ forward. However, these votes  contribute equally to $w$ and $c$ being in the top $\ell$ and $\ell-1$ positions, respectively. Therefore, the difference between $d_\ell(w)+\eta_{w,\ell}$ and $\nfrac{n}{2}$ must be less than the difference between $d_\ell(c)$ and $\nfrac{n}{2}$. Summarizing, the following conditions, which we collectively denote by $(\star)$, are sufficient to defeat $c$ in some extension: $d_\ell(c) \le \nfrac{n}{2}, d_\ell(w) + \eta_{w,\ell} + t > \nfrac{n}{2}, \nfrac{n}{2} - d_\ell(w) + \eta_{w,\ell} < \nfrac{n}{2} - d_\ell(c)$.

From the manipulator's point of view, the above provides a set of constraints to be satisfied as they place the remaining candidates across their votes. Whenever $d_\ell(c) > \nfrac{n}{2}$, the manipulators place any of the other candidates among the top $\ell$ positions freely, because $c$ already has majority. On the other hand, if $d_\ell(c) \leq \nfrac{n}{2}$, then the manipulators must respect at least one of the following constraints: $\eta_{w,\ell} \leq \nfrac{n}{2} - t - d_\ell(w)$ and $\eta_{w,\ell} \leq d_\ell(c) - d_\ell(w)$.

Extending the votes of the manipulator while respecting these constraints (or concluding that this is impossible to do) can be achieved by a natural greedy strategy --- construct the manipulators' votes by moving positionally from left to right. For each position, consider each manipulator and populate her vote for that position with any available candidate. We output the profile if the process terminates by completing all the votes, otherwise, we say \textsc{No}.

We now argue the proof of correctness. Suppose the algorithm returns \textsc{No}. This implies that there exists a choice of $w \in \mathcal{C} \setminus \{c\}$ and $\ell \in \{2,\ldots,m\}$ such that for any voting profile $\mathcal{Q}$ of the manipulators, the conditions in $(\star)$ are satisfied. (Indeed, if there exists a voting profile that violated at least one of these conditions, then the greedy algorithm would have discovered it.) Therefore, no matter how the manipulators cast their vote, there exists an extension where $c$ is defeated. In particular, for the votes in $\mathcal{P} \setminus \mathcal{P}^*_{w,\ell}$, this extension is given by $\overline{\mathcal{P}}_{w,\ell}$. Further, we choose $\nfrac{n}{2} - \eta_{w,\ell} - d_\ell(w)$ votes among the votes in $\mathcal{P}^*_{w,\ell}$ and extend them by placing $w$ in the top $\ell$ positions (and extending the rest of the profile arbitrary). We extend the remaining votes in $\mathcal{P}^*_{w,\ell}$ by positioning $w$ outside the top $\ell$ positions. Clearly, in this extension, $c$ fails to achieve majority in the top $\ell-1$ positions while $w$ does achieve majority in the top $\ell$ positions. 

On the other hand, if the algorithm returns \textsc{Yes}, then consider the voting profile of the manipulators. We claim that $c$ wins in every extension of $\mathcal{P} \cup \mathcal{Q}$. Suppose, to the contrary, that there exists an extension $\mathcal{R}$ and a candidate $w$ such that the simplified Bucklin score of $c$ is no more than the simplified Bucklin score of $w$ in $\mathcal{R}$. In this extension, therefore, there exists $\ell \in \{2,\ldots,m\}$ for which $w$ attains majority in the top $\ell$ positions and $c$ fails to attain majority in the top $\ell-1$ positions. However, note that this is already impossible in any extension of the profile $\overline{\mathcal{P}}_{w,l} \cup \mathcal{P}^*_{w,\ell}$, because of the design of the constraints. By construction, the number of votes in which $c$ appears in the top $\ell-1$ positions in $\mathcal{R}$ is only greater than the number of times $c$ appears in the top $\ell-1$ positions in any extension of $\overline{\mathcal{P}}_{w,l} \cup \mathcal{P}^*_{w,\ell}$ (and similarly for $w$). This leads us to the desired contradiction. 

For the Bucklin voting rule, we do the following modifications to the algorithm. If $d_\el(c) > d_\el(w)$ for some $w\in\CC\setminus\{c\}$ and $\el<m$, then we make $\eta_{w,\el} = \infty$. The proof of correctness for the Bucklin voting rule is similar to the proof of correctness for the simplified Bucklin voting rule above. 

For Fallback and simplified Fallback voting rules, we consider the number of candidates each voter approves while computing $\eta_{w,\el}$. We output \YES if and only if $\eta_{w,\el}\ge 0$ for every $w\in\CC\setminus\{c\}$ and every $\el\le m$, since we can assume, without loss of generality, that the manipulator approves the candidate $c$ only. Again the proof of correctness is along similar lines to the proof of correctness for the simplified Bucklin voting rule.
\end{proof}

We next show that the \SM problem for the maximin voting rule is polynomial-time solvable when we have only one manipulator.

\begin{theorem}\label{sm_maximin_easy}
The \SM problem for the maximin voting rules are in \Pb{}, when we have only one manipulator.
\end{theorem}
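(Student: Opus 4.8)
The plan is to mirror the strategy used for the other \SM{} algorithms: fix the manipulator's vote as adversarially-friendly-to-$c$ as possible, then for each potential spoiler $w \in \CC \setminus \{c\}$ build the extension of $\PP$ that is simultaneously worst for $c$ and best for $w$, and finally decide whether the single manipulative vote can be completed so that $c$ survives against every such $w$.

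First I would argue that, without loss of generality, the manipulator ranks $c$ first: this adds $1$ to $D(c,y)$ for every $y \ne c$ and does not constrain the pairwise margins among the remaining candidates in any way that could hurt $c$. Since the remaining positions of the manipulator's vote are irrelevant to $c$'s own margins, the worst-case maximin score of $c$ is already pinned down. Indeed, for every opponent $a$ the adversary minimises $D(c,a)$ by ranking $a$ above $c$ wherever they are incomparable, and placing $c$ at the bottom of every non-manipulator vote achieves this minimum for \emph{all} opponents $a$ simultaneously. Hence we can compute, in polynomial time, a single value $s_c := 1 + \min_{a \ne c} \underline{D}_{\PP}(c,a)$, where $\underline{D}_{\PP}(c,a)$ is the smallest possible non-manipulator margin of $c$ over $a$; this $s_c$ is the maximin score of $c$ in every extension that is worst for $c$.

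Next, for a fixed spoiler $w$, I would maximise $w$'s maximin score over all extensions. Because $w$'s score is $\min_{b \ne w} D(w,b)$, the adversary wants every margin $D(w,b)$ large, which it achieves by pushing $w$ as high as possible in every non-manipulator vote; note that ``push $c$ down'' and ``push $w$ up'' are aligned directions, so a single extension $\overline{\PP}_w$ realises both $c$'s minimum and $w$'s maximum. The manipulator's only remaining lever against $w$ is the relative order of the non-$c$ candidates: ranking an opponent $b \succ w$ subtracts $1$ from $D(w,b)$. Thus $c$ defeats $w$ in every extension if and only if, in $\overline{\PP}_w$, some opponent has best-case margin (after the manipulator's $-1$) strictly below $s_c$; writing $\overline{D}_{\PP}(w,b)$ for the largest possible non-manipulator margin, this is the requirement that the ``blocking set'' $B_w := \{\, b : \overline{D}_{\PP}(w,b) - 1 < s_c \,\}$ contain a candidate that the manipulator ranks above $w$ (the opponent $c$ is always present but fixed, corresponding to $w$ being already capped by $c$ itself). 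It then remains to decide whether a single linear order, with $c$ on top, can rank, for every threatening $w$, at least one element of $B_w$ above $w$. I would model this as an ordering-feasibility problem solved greedily: place $c$ first, then repeatedly append any candidate that is either non-threatening or already has one of its blockers placed, answering \textsc{No} only if the process stalls. If the process completes we output the resulting vote; correctness follows because a stall set is a collection of candidates each of whose blockers lies entirely within the set, witnessing that some $w$ cannot be capped and hence that $c$ loses in $\overline{\PP}_w$. The whole procedure is polynomial.

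The step I expect to be the main obstacle is justifying that the single canonical extension $\overline{\PP}_w$ is genuinely the worst case for the pair $(c,w)$, so that the per-spoiler test is exact. The delicate point is the coupling peculiar to maximin: a non-manipulator vote forcing $c \succ w$ links the two candidates through transitivity, so raising $w$ above a third candidate $b$ simultaneously drags $c$ above $b$ and thereby raises $D(c,b)$. I would handle this by showing that such forced relations only affect the already-fixed $c$-versus-$w$ contribution and margins that are not the binding (minimising) opponent, so that $\min_{a} D(c,a) - \min_{b} D(w,b)$ is still minimised by $\overline{\PP}_w$; establishing this formally, together with verifying that the blocking-set greedy faithfully captures the existence of a consistent manipulative order, is where the real work lies.
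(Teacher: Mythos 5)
Your high-level architecture (rank $c$ first, reason about worst-case extensions per potential spoiler $w$, then greedily build one vote under ``some capping candidate must be ranked above $w$'' constraints) is the same BTT-style greedy the paper uses. The genuine gap is exactly the step you flagged and then waved at: your per-spoiler test compares quantities attained in \emph{different} extensions --- the per-pair maxima $\overline{D}_{\PP}(w,b)$ against the global minimum $s_c$ --- and the ``only if'' direction of your blocking-set characterization (no blocker above $w$ $\Rightarrow$ some extension lets $w$ tie or beat $c$) is false; moreover your proposed repair (``forced relations only affect non-binding margins'') is false as well, since $c$'s binding opponent can be precisely a coupled candidate. Concretely: take candidates $\{c,w,b,d\}$, ten partial votes forcing $d\succ c\succ w$ and $d\succ b$ with $b$ incomparable to $c,w$, and complete votes fixing the remaining margins at $D(c,b)=D(w,b)=5$ (excluding the ten votes), $D(c,w)=3$, $D(c,d)=1$, $D(w,d)=21$, $D(b,d)=-31$. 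If $y$ of the ten votes place $b$ above $c$, then against the manipulative vote $c\succ d\succ b\succ w$ the score of $c$ is $\min(16-2y,\,2)$ while the score of $w$ is at most $\min(14-2y,\,-4)$ (its margin over $b$ moves in lockstep with $c$'s, and $c$ itself caps it at $D(w,c)-1=-4$), so $c$ wins strictly in \emph{every} extension: this is a \YES instance. Yet $s_c=-4$ is attained only at $y=10$, where $D(w,b)$ is also driven down, while $\overline{D}_{\PP}(w,b)-1=14$ and $\overline{D}_{\PP}(w,c)-1=-4\not<-4$; hence $B_w=\emptyset$, every vote fails your test, your greedy stalls, and your algorithm answers \NO. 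The mutual exclusivity you identified --- in votes forcing $c\succ w$, pushing a third candidate $b$ below $w$ raises $D(w,b)$ \emph{and} $D(c,b)$, while pushing $b$ above $c$ lowers both --- is fatal to any test built from decoupled extremes.

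The paper avoids this by never decoupling. Following the \NW algorithm of Xia and Conitzer for maximin, it fixes a \emph{witness pair} $(w,w')$ --- the spoiler $w$ together with the opponent $w'$ against whom $c$'s minimum margin is supposed to be realized --- and computes, for each such pair, one targeted worst-case extension and the counts $N_{(w,w')}(w,d)$ and $N_{(w,w')}(c,w')$ inside that single extension. Once $w'$ is fixed, the per-vote choice of extension is a well-defined local optimization, so all compared quantities live in one extension, and quantifying over all pairs $(w,w')$ restores exactness (in the example above, the pair $(w,b)$ forces $D(w,b)$ and $D(c,b)$ to be evaluated together, correctly certifying that $c$ is safe). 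The greedy completion with a pessimistic $+1$ for not-yet-placed candidates then plays the role of your blockers, with correctness as in the classical greedy manipulation argument. Replacing your $s_c$ and $\overline{D}_{\PP}(w,\cdot)$ by these pair-indexed quantities is the missing idea.
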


\begin{proof}
 For the time being, just concentrate on non-manipulators' votes. Using the algorithm for NW for maximin in \cite{xia2008determining}, we compute for all pairs $w, w^{\prime} \in \mathcal{C}$, $N_{(w,w^{\prime})}(w,d)$ and $N_{(w,w^{\prime})}(c,w^{\prime})$ for all $d\in \mathcal{C}\setminus \{c\}$. This can be computed in polynomial time. Now we place $c$ at the top position in the manipulator's vote and increase all $N_{(w,w^{\prime})}(c,w^{\prime})$ by one. Now we place a candidate $w$ at the second position if for all $w^{\prime} \in \mathcal{C}$, $N_{(w,w^{\prime})}^{\prime}(w,d) < N_{(w,w^{\prime})}(c,w^{\prime})$ for all $d\in \mathcal{C}\setminus \{c\}$, where $N_{(w,w^{\prime})}^{\prime}(w,d) = N_{(w,w^{\prime})}(w,d)$ of the candidate $d$ has already been assigned some position in the manipulator's vote, and $N_{(w,w^{\prime})}^{\prime}(w,d) = N_{(w,w^{\prime})}(w,d)+1$ else. The correctness argument is in the similar lines of the classical greedy manipulation algorithm of~\cite{bartholdi1989computational}.
\end{proof}

\subsection{Opportunistic Manipulation}

For the plurality, Fallback, and simplified Fallback voting rules, it turns out that the voting profile where all the manipulators approve only $c$ is a $c$-opportunistic voting profile, and therefore it is easy to devise a manipulative vote. 

\begin{observation}\label{obs:plurality_fallback}
 The \OM problem is in \Pb for the plurality and Fallback voting rules for a any number of manipulators.
\end{observation}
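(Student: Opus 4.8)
The plan is to exhibit, for both rules, a single manipulative profile that is $(r,c)$-opportunistic \emph{regardless} of the partial votes $\PP$; this shows that every instance is a \YES-instance and hence that \OM is trivially in \Pb. The profile I propose is the ``lazy'' one $\QQ_0$ in which every manipulator ranks $c$ first (for plurality) or, equivalently, approves only $c$ (for Fallback). First I would show that $\QQ_0$ \emph{weakly dominates} every other manipulative profile from the point of view of $c$: for plurality, placing $c$ on top adds $\el$ to the plurality score of $c$ and $0$ to every other candidate, which is simultaneously the largest possible boost to $c$ and the smallest possible boost to each rival; for Fallback, approving only $c$ puts $c$ among the top $j$ approved positions of all $\el$ manipulative votes at every level $j$, while keeping every other candidate out of the manipulators' approved sets entirely.

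Next I would turn this domination into the opportunistic property. Fix any viable extension $\overline{\PP}$ and let $(\succ'_i)_{i\in[\el]}$ be a witnessing profile, i.e. one under which $c$ is the unique winner in $\overline{\PP}\cup(\succ'_i)_{i\in[\el]}$. For plurality this is immediate: replacing $(\succ'_i)_{i\in[\el]}$ by $\QQ_0$ can only raise the score of $c$ and can only (weakly) lower the score of every rival, so $c$ remains the unique winner. For Fallback I would run the comparison level by level. Writing $B_j(a)$ for the number of votes placing $a$ among the top $j$ approved positions, the domination gives $B_j^{\QQ_0}(c)\ge B_j^{(\succ'_i)}(c)$ and $B_j^{\QQ_0}(w)\le B_j^{(\succ'_i)}(w)$ for every rival $w$ and every $j$. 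Hence $c$ attains majority no later under $\QQ_0$ than under $(\succ'_i)_{i\in[\el]}$, while every rival attains it no earlier; a short case analysis on the Bucklin winning round (and, should no candidate ever reach majority, on the fallback count of approvals, where the same monotonicity applies) shows that $c$ stays the unique winner. The same argument works verbatim for simplified Fallback.

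The main obstacle is the Fallback case: unlike plurality, its winner is determined by a round-by-round majority test, so I must verify that the level-wise domination of the $B_j$'s is genuinely preserved by the winner-selection rule, including the tie-breaking into the unique-winner convention and the fallback-to-approval-count clause. Once that monotonicity is in hand, the conclusion is immediate---$\QQ_0$ makes $c$ win in every viable extension, so it is $(r,c)$-opportunistic, every \OM instance is a (possibly vacuous) \YES-instance, and the algorithm simply outputs $\QQ_0$ in polynomial time.
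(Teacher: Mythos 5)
Your proposal is correct and takes essentially the same route as the paper: the observation there rests on exactly the profile you call $\QQ_0$ (all manipulators rank, respectively approve, only $c$) being $(r,c)$-opportunistic, so that outputting it solves \OM trivially in polynomial time. Your level-by-level domination argument for Fallback (and the approval-count fallback clause) merely fleshes out the monotonicity that the paper leaves implicit, so there is no substantive difference between the two arguments.
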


For the veto voting rule, however, a more intricate argument is needed, that requires building a system of constraints and a reduction to a suitable instance of the maximum flow problem in a network, to show polynomial time tractability of \OM.

\begin{theorem}\label{thm:vetoOM}
 The \OM problem is in \Pb for the veto voting rule for a constant number of manipulators.
\end{theorem}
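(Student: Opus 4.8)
The plan is to exploit the fact that under veto only the bottom position of each ballot affects the outcome, so a manipulative profile is completely described by how many of the $\el$ manipulators veto each candidate. First I would observe that, without loss of generality, no manipulator vetoes $c$ (doing so only hurts $c$), so a manipulative profile is a vector $(m_a)_{a\in\CC\setminus\{c\}}$ of nonnegative integers with $\sum_a m_a=\el$. For a \emph{constant} number of manipulators there are only $O(m^\el)$ such vectors, so I would simply enumerate all of them and test each for being $(\text{veto},c)$-opportunistic; the instance is a \YES instance of \OM iff some vector passes the test. The entire difficulty is thus pushed into a single subroutine: given a fixed profile $(m_a)$, decide whether it makes $c$ the unique winner in every \emph{viable} extension.

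For the subroutine I would work with the contrapositive: $(m_a)$ fails to be opportunistic iff there is an extension that is viable and in which $c$ is not the unique winner. An extension is determined, for scoring purposes, by which candidate is placed last in each non-manipulator vote (a candidate is \emph{vetoable} in a vote iff nothing is forced below it). Writing $f(a)$ for the number of non-manipulators who veto $a$ in such an extension, the manipulators can make $c$ win iff $\sum_{a\ne c}\max(0,f(c)-f(a)+1)\le\el$ (this is the viability condition), while our fixed profile fails to make $c$ the unique winner iff there is some $a$ with $f(a)+m_a\le f(c)$. So the subroutine must decide whether some vetoable assignment realizes an $f$ satisfying both conditions simultaneously.

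To decide this in polynomial time I would guess a small amount of data and then solve a flow problem. I would guess $\gamma:=f(c)\in\{0,\dots,n\}$ and the set $L$ of candidates $a\ne c$ with $f(a)\le\gamma$; the viability inequality forces $|L|\le\el$ and forces each deficiency $\gamma-f(a)+1$ to be at most $\el$, so I can additionally guess the exact values $g_a:=f(a)$ for $a\in L$ from the $O(\el)$ admissible values each, giving only polynomially many guesses overall (for constant $\el$). Given the guess I verify the viability inequality and the losing condition $g_{a^*}+m_{a^*}\le\gamma$ for some $a^*\in L$ directly, and then test whether there is an assignment of a vetoable candidate to each non-manipulator vote that realizes $f(c)=\gamma$, $f(a)=g_a$ for $a\in L$, and $f(a)\ge\gamma+1$ for every remaining candidate. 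This last test is a bipartite feasible-flow question --- each vote sends one unit to one of its vetoable candidates, with exact in-degrees required on $\{c\}\cup L$ and lower bounds $\gamma+1$ elsewhere --- and is solvable by a standard max-flow-with-lower-bounds computation.

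I expect the main obstacle to be the correctness of the flow reduction and the guessing, rather than the running-time bookkeeping. Specifically, one must argue that viability genuinely bounds $|L|$ by $\el$ (keeping the guess space polynomial), that forcing $f(a)\ge\gamma+1$ off $L$ is exactly what certifies $L$ to be the complete set of ``low'' candidates so that the viability and losing conditions can be read off the guessed data, and that the lower-bounded flow faithfully encodes the per-vote vetoability constraints together with the exact and lower-bound in-degree requirements. Given these pieces, running the subroutine over all $O(m^\el)$ profiles and reporting \YES iff some profile admits no viable extension in which $c$ loses settles the theorem, since for constant $\el$ the whole procedure runs in time polynomial in $m$ and $n$.
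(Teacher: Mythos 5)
Your proposal is correct and takes essentially the same route as the paper: both enumerate the $O(m^{\el})$ veto-count vectors of the manipulators (polynomial since $\el$ is a constant) and decide the $c$-optimality of each one by guessing the number of vetoes $c$ receives and reducing the existence of a viable extension in which $c$ fails to win uniquely to a bipartite feasible-flow problem that assigns each partial vote to a candidate it can place last, with lower/upper bound constraints on the candidates' veto counts. The only cosmetic difference is how viability of the bad extension is certified --- the paper enumerates a witness second tuple $(n_a^\prime)$ of manipulator vetoes, while you use the deficiency inequality $\sum_{a\ne c}\max(0,\gamma-f(a)+1)\le\el$ together with a guessed low set $L$ and exact values $(g_a)_{a\in L}$ --- but these are interchangeable encodings of the same certificate and both keep the guess space polynomial.
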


\begin{proof}
 Let $(\PP, \el, c)$ be an input instance of \OM. We may assume without loss of generality that the manipulators approve $c$. We view the voting profile of the manipulators as a tuple $(n_a)_{a\in\CC\setminus\{c\}}\in(\mathbb{N}\cup\{0\})^{m-1}$ with $\sum_{a\in\CC\setminus\{c\}} n_a = \el$, where the $n_a$ many manipulators disapprove $a$. We denote the set of such tuples as $\TT$ and we have $|\TT|=O((2m)^\el)$ which is polynomial in $m$ since \el is a constant. A tuple $(n_a)_{a\in\CC\setminus\{c\}}\in\TT$ is not $c$-optimal if there exists another tuple $(n_a^\prm)_{a\in\CC\setminus\{c\}}\in\TT$ and an extension $\overline{\PP}$ of $\PP$ with the following properties. We denote the veto score of a candidate from $\PP$ by $s(\cdot)$. For every candidate $a\in\CC\setminus\{c\}$, we define two quantities $w(a)$ and $d(a)$ as follows.
 \begin{itemize}
  \item $s(c) > s(a)$ for every $a\in\CC\setminus\{c\}$ with $n_a = n_a^\prm = 0$ and we define $w(a) = s(c) - 1, d(a)=0$
  \item $s(c) > s(a) - n_a^\prm  $ for every $a\in\CC\setminus\{c\}$ with $n_a \ge n_a^\prm$ and we define $w(a) = s(c) - n_a^\prm - 1, d(a)=0$
  \item $s(a) - n_a \ge s(c) > s(a) - n_a^\prm$ for every $a\in\CC\setminus\{c\}$ with $n_a < n_a^\prm$ and we define $w(a) = s(c) - n_a^\prm, d(a)=s(a) - n_a$
 \end{itemize}
 We guess the value of $s(c)$. Given a value of $s(c)$, we check the above two conditions by reducing this to a max flow problem instance as follows. We have a source vertex $s$ and a sink $t$. We have a vertex for every $a\in\CC$ (call this set of vertices $Y$) and a vertex for every vote $v\in\PP$ (call this set of vertices $X$). We add an edge from $s$ to each in $X$ of capacity one. We add an edge of capacity one from a vertex $x\in X$ to a vertex $y\in Y$ if the candidate corresponding to the vertex $y$ can be placed at the last position in an extension of the partial vote corresponding to the vertex $x$. We add an edge from a vertex $y$ to $t$ of capacity $w(a)$, where $a$ is the voter corresponding to the vertex $y$. We also set the demand of every vertex $y$ $d(a)$ (that is the total amount of flow coming into vertex $y$ must be at least $d(a)$), where $a$ is the voter corresponding to the vertex $y$. Clearly, the above three conditions are met if and only if there is a feasible $|\PP|$ amount of flow in the above flow graph. Since $s(c)$ can have only $|\PP|+1$ possible values (from $0$ to $\PP$) and $|\TT|=O((2m)^\el)$, we can iterate over all possible pairs of tuples in $\TT$ and all possible values of $s(c)$ and find a $c$-optimal voting profile if there exists a one.
\end{proof}

\section{Conclusion}\label{sec:con}
We revisited many settings where the complexity barrier for manipulation was non-existent, and studied the problem under an incomplete information setting. Our results present a fresh perspective on the use of computational complexity as a barrier to manipulation, particularly in cases that were thought to be dead-ends (because the traditional manipulation problem was polynomially solvable). To resurrect the argument of computational hardness, we have to relax the model of complete information, but we propose that the incomplete information setting is more realistic, and many of our hardness results work even with very limited incompleteness of information. 

Our work is likely to be the starting point for further explorations. To begin with, we leave open the problem of completely establishing the complexity of strong, opportunistic, and weak manipulations for all the scoring rules. Other fundamental forms of manipulation and control do exist in voting, such as destructive manipulation and control by adding candidates. It would be interesting to investigate the complexity of these problems in a partial information setting. 

Another exciting direction is the study of average case complexity, as opposed to the worst case results that we have pursued. These studies have already been carried out in the setting of complete information~\cite{procaccia2006junta,faliszewski2010ai,walsh2010empirical}. Studying the problems that we propose in the average-case model would reveal further insights on the robustness of the incomplete information setting as captured by our model involving partial orders.

Our results showed that the impact of paucity of information on the computational complexity of manipulation crucially depends on the notion of manipulation under consideration. We also argued that different notions of manipulation may be applicable to different situations, maybe based of how optimistic (or pessimistic) the manipulators are. One important direction of future research is to run extensive experimentations on real and synthetic data to know how people manipulate in the absence of complete information.

\subsubsection*{Acknowledgement} Palash Dey wishes to gratefully acknowledge support from Google India for providing him with a special fellowship for carrying out his doctoral work. Neeldhara Misra acknowledges support by the INSPIRE Faculty Scheme, DST India (project IFA12-ENG-31).

\bibliographystyle{alpha}
\bibliography{refThesis}

\end{document}